\definecolor{DarkGreen}{rgb}{0.1,0.5,0.1}
\definecolor{DarkRed}{rgb}{0.5,0.1,0.1}
\definecolor{DarkBlue}{rgb}{0.1,0.1,0.5}
\definecolor{Black}{rgb}{0,0,0}
\newcommand\white[1]{{\color{white} #1}}
\theoremstyle{plain}
\newtheorem{theorem}{Theorem}[section]
\newtheorem{lemma}[theorem]{Lemma}
\newtheorem{claim}[theorem]{Claim}
\newtheorem{proposition}[theorem]{Proposition}
\newtheorem{fact}[theorem]{Fact}
\newtheorem{corollary}[theorem]{Corollary}
\theoremstyle{definition}
\newtheorem{definition}[theorem]{Definition}
\newtheorem{question}[theorem]{Question}
\newcommand\ncon{(\mathrm{N}\cap\mathrm{coN})\mathrm{TIME}}
\newcommand\nconplus{(\mathrm{N}\cap\mathrm{coN})^{+}\mathrm{TIME}}
\newcommand\TIME{\mathrm{TIME}}
\newcommand\ntime{\mathrm{NTIME}}
\newcommand\ceil[1]{\lceil#1\rceil}
\newcommand\floor[1]{\lfloor#1\rfloor}
\newcommand\defeq{\ensuremath{\stackrel{\rm def}{=}}} % Equal by definition
\newcommand\ind[1]{\text{ind}(#1)}
\newcommand\undirectedfigure[2]{
\node[anchor=east](y1)  at (0.98, 1) {$1$};
\node[anchor=east](y54) at (0.98, 1.25) {$5/4$};
\node[anchor=east](y43) at (0.98, 1.3333) {$4/3$};
\node[anchor=east](y32) at (0.98, 1.5) {$3/2$};
\node[anchor=east](y2)  at (0.98, 2) {2};
\coordinate (x1)  at (1,1.0);
\coordinate (x32) at (1.5,1.0);
\coordinate (x85) at (1.6,1.0);
\coordinate (x53) at (1.66666,1.0);
\coordinate (x74) at (7.0/4.0,1.0);
\coordinate (x158) at (15.0/8.0,1.0);
\coordinate (x2)  at (2,1.0);
\node[anchor=north] at (x1)   {$1$};
\node[anchor=north] at (x32)  {$\frac{3}{2}$};
\node[anchor=north] at (x53)  {$\frac{5}{3}$};
\node[anchor=north] at (x74)  {$\frac{7}{4}$};
\node[anchor=north] at (x2)   {$2$};
\coordinate (r2) at (1.5, 2);
\coordinate (r32) at (#1, 1.5);
\coordinate (r43) at (13.0/8.0, 1.3333);
\coordinate (r54) at (18.0/11.0, 1.25);
\coordinate (r65) at (23.0/14.0, 1.2);
\coordinate (r76) at (28.0/17.0, 1.16666);
\coordinate (r87) at (33.0/20.0, 1.125);
\coordinate (r98) at (38.0/23.0, 10.0/9.0);
\coordinate (r109) at (43.0/26.0, 1.1);
\coordinate (rNew) at (5.0/3.0, 1.5);
\coordinate (b2) at (1,2);
\coordinate (b32) at (1.5, 1.5);
\coordinate (b43) at (1.75, 1.3333);
\coordinate (b54) at (1.875, 1.25);
\coordinate (b65) at (1.9375, 1.2);
\coordinate (b76) at (1.96875, 1.16666);
\coordinate (b87) at (1.9844, 1.142857);
\coordinate (b98) at (1.992, 1.124);
\coordinate (b109) at (1.996, 1.111111);
\coordinate (b1110) at (1.998, 1.1);
\coordinate (b1615) at (1.99993, 1.066666);
\coordinate (b2120) at (1.999998, 1.05);
\coordinate (b1) at (2,1);

\draw[fill=red!20,draw=none] (1,1) rectangle (r2);
\draw[fill=red!30,draw=none] (1.5,1) rectangle (r32);
\draw[fill=blue!20,draw=none] (2,2) rectangle (b2);
\draw[fill=blue!20,draw=none] (2,2) rectangle (b32);
\draw[fill=blue!20,draw=none] (2.001,2) rectangle (b1);
\draw[blue] (y2)--(r2)--(b32)--(2.0,1.5);
\begin{pgfonlayer}{above}
\node[draw, fill=red!20, circle,scale=0.5] at (r2) {};
\node[draw, fill=red!20, circle,scale=0.5] at (r32) {};
\node[draw, fill=blue!20, circle,scale=0.5] at (b2) {};
\node[draw, fill=blue!20, circle,scale=0.5] at (b32) {};
\end{pgfonlayer}

\node[fill=red!20] (rw13) at (1.25, 1.5) {\color{black}\cite{RodittyW13}};
\node[fill=red!30, rotate=270] (brsvw18) at (1.56, 1.25) {\color{black}\cite{BackursRSWW18}};
\node[] (apsp) at (1.1, 2.05) {\color{black}APSP};
\node[] (bfs) at (2.07, 1.05) {\color{black}#2};
\node[align=left] (chechik) at (1.75, 1.75) {\color{black}\cite{RodittyW13}\\\cite{ChechikLRSTW14}};

\draw[blue] (2,1.5)--(b1);
\node[draw, fill=blue!30, circle,scale=0.5] at (b1) {};
}
\newcommand\figureaxes{
\draw[->] (0.98,1) to (2.05,1);
\draw[->] (1,0.98) to (1,2.05);
\node[anchor=west] at (2.03,1) {Apx};
\node[anchor=south] at (1,2.05) {$k$};
\node[anchor=east, rotate=90] at (0.81,1.75) {$\tilde O(m^k)$ time};
}
\newcommand\CGRfigure{
% CGR16 results
\draw[blue] (1.5,1.5)--(2,1.5);
\draw[fill=blue!30,draw=none] (2,1.50) rectangle (b43);
\draw[fill=blue!30,draw=none] (2,1.50) rectangle (b54);
\draw[fill=blue!30,draw=none] (2,1.50) rectangle (b65);
\draw[fill=blue!30,draw=none] (2,1.50) rectangle (b76);
\draw[fill=blue!30,draw=none] (2,1.50) rectangle (b1);
\draw[fill=blue!30,draw=none] (2,1.50) -- (b76) -- (b87) -- (b98) -- (b109) -- (b1110)--(b1615)--(b2120)--(b1)-- cycle;
\draw[blue] (1.75,1.5)--(b43)
             --(1.875,1.3333)--(b54)
             --(1.9375,1.25)--(b65)
             --(1.96875,1.2)--(b76) -- (b87) -- (b98) -- (b109) --(b1110)--(b1615)--(b2120)--(b1);
\node[draw, fill=blue!30, circle,scale=0.5] at (b43) {};
\node[draw, fill=blue!30, circle,scale=0.5] at (b54) {}; 
\node[draw, fill=blue!30, circle,scale=0.5] at (b65) {};
\node[draw, fill=blue!30, circle,scale=0.5] at (b76) {};
\node[draw, fill=blue!30, circle,scale=0.5] at (b87) {};
\node[draw, fill=blue!30, circle,scale=0.5] at (b98) {};
\node[draw, fill=blue!30, circle,scale=0.5] at (b1) {};
\node[] (cgr16) at (1.88, 1.41) {\color{black}\cite{CairoGR16}*};

}
\newcommand\unweightedresultfigure{
  \draw[fill=red!70,draw=none] (8/5,1) rectangle (5/3,3/2);
  \node[draw, fill=red!70, circle,scale=0.5] at (5/3,3/2) {};
  \node[rotate=270] (new) at (1.635, 1.25) {Thm~\ref{thm:53}};
}
\newcommand\directedunweightfigure{
\draw[fill=red!30,draw=none] (1.59,1) rectangle (r43);
\draw[fill=red!30,draw=none] (x85) rectangle (r54);
\draw[fill=red!30,draw=none] (x85) rectangle (r65);
\draw[fill=red!30,draw=none] (x85) rectangle (r76);
\draw[fill=red!30,draw=none] (x85) -- (r87) -- (x53) -- cycle;
\node[draw, fill=red!20, circle,scale=0.5] at (r43) {};
\node[draw, fill=red!20, circle,scale=0.5] at (r54) {};
\node[draw, fill=red!20, circle,scale=0.5] at (r65) {};
\node[draw, fill=red!20, circle,scale=0.5] at (r76) {};
\node[draw, fill=red!20, circle,scale=0.5] at (r87) {};
\node[draw, fill=red!20, circle,scale=0.5] at (r98) {};
\node[draw, fill=red!20, circle,scale=0.5] at (r109) {};
}
\newcommand\directedresultfigure[1]{
\coordinate (new2) at (5/3,3/2);
\coordinate (new3) at (7/4,4/3);
\coordinate (new4) at (9/5,5/4);
\coordinate (new5) at (11/6,6/5);
\coordinate (new6) at (13/7,7/6);
\coordinate (new7) at (15/8,8/7);
\coordinate (new8) at (17/9,9/8);
\coordinate (new9) at (19/10,10/9);
\coordinate (new10) at (21/11,11/10);
\coordinate (new11) at (23/12,12/11);
\coordinate (new12) at (25/13,13/12);
\coordinate (new13) at (27/14,14/13);
\coordinate (new14) at (29/15,15/14);
\coordinate (new20) at (41/21,21/20);
\coordinate (new30) at (61/31,31/30);

\draw[fill=red!70,draw=none] (x85) rectangle (new2);
\draw[fill=red!70,draw=none] (x85) rectangle (new3);
\draw[fill=red!70,draw=none] (x85) rectangle (new4);
\draw[fill=red!70,draw=none] (x85) rectangle (new5);
\draw[fill=red!70,draw=none] (x85) rectangle (new6);
\draw[fill=red!70,draw=none] (x85) rectangle (new7);
\draw[fill=red!70,draw=none] (x85) rectangle (new8);
\draw[fill=red!70,draw=none] (x85) rectangle (new9);
\draw[fill=red!70,draw=none] (x85) rectangle (new10);
\draw[fill=red!70,draw=none] (x85) rectangle (new11);
\draw[fill=red!70,draw=none] (x85) rectangle (new12);
\draw[fill=red!70,draw=none] (x85) rectangle (new13);
\draw[fill=red!70,draw=none] (x85) rectangle (new14);
\draw[fill=red!70,draw=none] (29/15,1) -- (new14) -- (2,1) -- cycle;
\node[draw, fill=red!70, circle,scale=0.5] at (new2) {};
\node[draw, fill=red!70, circle,scale=0.5] at (new3) {};
\node[draw, fill=red!70, circle,scale=0.5] at (new4) {};
\node[draw, fill=red!70, circle,scale=0.5] at (new5) {};
\node[draw, fill=red!70, circle,scale=0.5] at (new6) {};
\node[draw, fill=red!70, circle,scale=0.5] at (new7) {};
\node[draw, fill=red!70, circle,scale=0.5] at (new8) {};
\node[draw, fill=red!70, circle,scale=0.5] at (new9) {};
\node[draw, fill=red!70, circle,scale=0.5] at (new10) {};
#1

}
\newcommand\nsethfigure[1]{
  \coordinate (nseth1) at (5/3,4/3);
  \coordinate (nseth2) at (7/4,5/4);
  \coordinate (nseth3) at (9/5,6/5);
  \coordinate (nseth4) at (11/6,7/6);
  \coordinate (nseth5) at (13/7,8/7);
  \coordinate (nseth6) at (15/8,9/8);
  \coordinate (nseth7) at (17/9,10/9);
  \draw[pattern=crosshatch dots, pattern color=cyan!100,draw=none] (nseth1) rectangle (2,3/2);
  \draw[pattern=crosshatch dots, pattern color=cyan!100,draw=none] (nseth2) rectangle (2,4.001/3);
  \draw[pattern=crosshatch dots, pattern color=cyan!100,draw=none] (nseth3) rectangle (2,5.001/4);
  \draw[pattern=crosshatch dots, pattern color=cyan!100,draw=none] (nseth4) rectangle (2,6.001/5);
  \draw[pattern=crosshatch dots, pattern color=cyan!100,draw=none] (nseth5) rectangle (2,7.001/6);
  \draw[pattern=crosshatch dots, pattern color=cyan!100,draw=none] (nseth6) rectangle (2,8.001/7);
  \draw[pattern=crosshatch dots, pattern color=cyan!100,draw=none] (nseth7) rectangle (2,9.001/8);
  \draw[pattern=crosshatch dots, pattern color=cyan!100,draw=none] (nseth7) -- (2,10/9) -- (2,1) -- cycle;
  \begin{pgfonlayer}{above}
    \node[draw, fill=white, pattern=none, pattern color=cyan!100, circle,scale=0.5] at (nseth1) {};
    \node[draw, fill=white, pattern=none, pattern color=cyan!100, circle,scale=0.5] at (nseth2) {};
    \node[draw, fill=white, pattern=none, pattern color=cyan!100, circle,scale=0.5] at (nseth3) {};
    \node[draw, fill=white, pattern=none, pattern color=cyan!100, circle,scale=0.5] at (nseth4) {};
    \node[draw, fill=white, pattern=none, pattern color=cyan!100, circle,scale=0.5] at (nseth5) {};
    \node[draw, fill=white, pattern=none, pattern color=cyan!100, circle,scale=0.5] at (nseth6) {};
    \node[draw, fill=white, pattern=none, pattern color=cyan!100, circle,scale=0.5] at (nseth7) {};
  \end{pgfonlayer}
  #1
}
\newcommand\undirectedunweightedall{
  \begin{tikzpicture}[xscale=5.5, yscale=5.5, font=\scriptsize]
  \pgfdeclarelayer{background}
  \pgfdeclarelayer{above}
  \pgfsetlayers{background,main,above}
  \nsethfigure{
    \node[align=left,rotate=-45] (nsethlabel) at (1.8, 1.14) {Thm~\ref{thm:nseth-intro}};
  }
  \undirectedfigure{8/5}{BFS}
  \CGRfigure
  \unweightedresultfigure
  \figureaxes
  \end{tikzpicture}
}
\newcommand\undirectedweightedall{
  \begin{tikzpicture}[xscale=5.5, yscale=5.5, font=\scriptsize]
  \pgfdeclarelayer{background}
  \pgfdeclarelayer{above}
  \pgfsetlayers{background,main,above}
  \nsethfigure{
    \node[align=left,rotate=-45] (nsethlabel) at (1.8, 1.14) {Thm~\ref{thm:nseth-undir}};
  }
  \undirectedfigure{5/3}{SP}
  \CGRfigure
  \figureaxes
  \end{tikzpicture}
}
\newcommand\directedunweightedall{
  \begin{tikzpicture}[xscale=5.5, yscale=5.5, font=\scriptsize]
  \pgfdeclarelayer{background}
  \pgfdeclarelayer{above}
  \pgfsetlayers{background,main,above}
  \nsethfigure{
    \begin{pgfonlayer}{above}
      \node[align=left] (nsethlabel) at (1.85, 1.4) {Thm~\ref{thm:nseth-intro}};
    \end{pgfonlayer}
  }
  \undirectedfigure{8/5}{BFS}
  \directedresultfigure{
    \node[align=left] (2epslabel) at (1.8, 1.05) {Thm~\ref{thm:2eps}};
  }
  \directedunweightfigure
  \figureaxes
  \end{tikzpicture}
}
\newcommand\directedweightedall{
  \begin{tikzpicture}[xscale=5.5, yscale=5.5, font=\scriptsize]
  \pgfdeclarelayer{background}
  \pgfdeclarelayer{above}
  \pgfsetlayers{background,main,above}
  \directedresultfigure{
    \node[align=left] (2epslabel) at (1.85, 1.05) {Thm~\ref{thm:2eps}};
    \draw[fill=red!40,draw=none] (x53) rectangle (7/4,4/3);
    \node[draw, fill=red!40, circle,scale=0.5] at (7/4,4/3) {};
    \node[rotate=270,font=\tiny] (bon20) at (1.7, 1.17) {\cite{Bonnet20}};
    \coordinate (nsethw) at (5/3,19/13);
    \draw[pattern=crosshatch dots, pattern color=cyan!100,draw=cyan!100] (nsethw) rectangle (2,1.5);
    \node[] (nsethlabel) at (1.85, 1.45) {Thm~\ref{thm:nseth-weight-intro}};
    \begin{pgfonlayer}{above}
      \node[draw, fill=white, pattern=crosshatch dots, pattern color=cyan!100, circle,scale=0.5] at (nsethw) {};
    \end{pgfonlayer}
  }
  \undirectedfigure{5/3}{SP}
  \figureaxes
  \end{tikzpicture}
}
\newcommand\undirectedunweightedtable{
  \def\arraystretch{1.2}
  \begin{tabular}{|l|l|l|}
    \hline
    \textbf{Citation} & \textbf{Runtime} & \textbf{Approx.}  \\ 
    \hline
    \multicolumn{3}{|l|}{\textbf{Upper bounds}} \\
    \hline
    APSP & $\tilde O(m^2)$ & 1  \\
    \hline
    BFS & $O(m)$ & 2  \\
    \hline
    \cite{RodittyW13,ChechikLRSTW14}  & $\tilde O(m^{3/2})$ & 3/2  \\
    \hline
    \cite{CairoGR16} & $\tilde O(m^{1+1/(k+1)})$ & almost $2-2^{-k}$ \\
    \hline
    \multicolumn{3}{|l|}{\textbf{Lower bounds (under SETH)}} \\
    \hline
    \cite{RodittyW13} & $n^{2-o(1)}$ & $3/2-\varepsilon$  \\
    \hline
    \cite{BackursRSWW18} & $n^{3/2-o(1)}$ & $8/5-\varepsilon$  \\
    \hline
    Theorem~\ref{thm:53} & $n^{3/2-o(1)}$ & $5/3-\varepsilon$  \\
    \hline
    \multicolumn{3}{|l|}{\textbf{Non-SETH-hardness (under (NU)NSETH)}} \\
    \hline
    Theorem~\ref{thm:nseth-intro} & $m^{1+1/k+\delta}$ & $2-\frac{1}{k}+\varepsilon$ \\
    \hline
  \end{tabular}
}
\newcommand\undirectedweightedtable{
  \def\arraystretch{1.2}
  \begin{tabular}{|l|l|l|}
    \hline
    \textbf{Citation} & \textbf{Runtime} & \textbf{Approx.}  \\ 
    \hline
    \multicolumn{3}{|l|}{\textbf{Upper bounds}} \\
    \hline
    APSP & $\tilde O(m^2)$ & 1  \\
    \hline
    Single SP & $O(m)$ & 2  \\
    \hline
    \cite{RodittyW13,ChechikLRSTW14}  & $\tilde O(m^{3/2})$ & 3/2  \\
    \hline
    \cite{CairoGR16} & $\tilde O(m^{1+1/(k+1)})$ & almost $2-2^{-k}$ \\
    \hline
    \multicolumn{3}{|l|}{\textbf{Lower bounds (under SETH)}} \\
    \hline
    \cite{RodittyW13} & $n^{2-o(1)}$ & $3/2-\varepsilon$  \\
    \hline
    \cite{BackursRSWW18} & $n^{3/2-o(1)}$ & $5/3-\varepsilon$  \\
    \hline
    \multicolumn{3}{|l|}{\textbf{Non-SETH-hardness (under (NU)NSETH)}} \\
    \hline
    Theorem~\ref{thm:nseth-undir} & $m^{1+1/k+\delta}$ & $2-\frac{1}{k}+\varepsilon$ \\
    \hline
  \end{tabular}
}
\newcommand\directedunweightedtable{
  \def\arraystretch{1.2}
  \begin{tabular}{|l|l|l|}
    \hline
    \textbf{Citation} & \textbf{Runtime} & \textbf{Approx.}  \\ 
    \hline
    \multicolumn{3}{|l|}{\textbf{Upper bounds}} \\
    \hline
    APSP & $\tilde O(m^2)$ & 1  \\
    \hline
    BFS & $O(m)$ & 2  \\
    \hline
    \cite{RodittyW13,ChechikLRSTW14}  & $\tilde O(m^{3/2})$ & 3/2  \\
    \hline
    \multicolumn{3}{|l|}{\textbf{Lower bounds (under SETH)}} \\
    \hline
    \cite{RodittyW13} & $n^{2-o(1)}$ & $3/2-\varepsilon$  \\
    \hline
    \cite{BackursRSWW18} & $n^{1+1/(k-1)-o(1)}$ & $\frac{5k-7}{3k-4}-\varepsilon$  \\
    \hline
    Theorem~\ref{thm:2eps} & $n^{1+1/(k-1)-o(1)}$ & $2-\frac{1}{k}-\varepsilon$  \\
    \hline
    \multicolumn{3}{|l|}{\textbf{Non-SETH-hardness (under (NU)NSETH)}} \\
    \hline
    Theorem~\ref{thm:nseth-intro} & $m^{1+1/k+\delta}$ & $2-\frac{1}{k}+\varepsilon$ \\
    \hline
  \end{tabular}
}
\newcommand\directedweightedtable{
  \def\arraystretch{1.2}
  \begin{tabular}{|l|l|l|}
    \hline
    \textbf{Citation} & \textbf{Runtime} & \textbf{Approx.}  \\ 
    \hline
    \multicolumn{3}{|l|}{\textbf{Upper bounds}} \\
    \hline
    APSP & $\tilde O(m^2)$ & 1  \\
    \hline
    Single SP & $O(m)$ & 2  \\
    \hline
    \cite{RodittyW13,ChechikLRSTW14}  & $\tilde O(m^{3/2})$ & 3/2  \\
    \hline
    \multicolumn{3}{|l|}{\textbf{Lower bounds (under SETH)}} \\
    \hline
    \cite{RodittyW13} & $n^{2-o(1)}$ & $3/2-\varepsilon$  \\
    \hline
    \cite{BackursRSWW18} & $n^{3/2-o(1)}$ & $5/3-\varepsilon$  \\
    \hline
    \cite{Bonnet20} & $n^{4/3-o(1)}$ & $7/4-\varepsilon$  \\
    \hline
    Theorem~\ref{thm:2eps} & $n^{1+1/(k-1)-o(1)}$ & $2-\frac{1}{k}-\varepsilon$  \\
    \hline
    \multicolumn{3}{|l|}{\textbf{Non-SETH-hardness (under (NU)NSETH)}} \\
    \hline
    Theorem~\ref{thm:nseth-weight-intro} & $m^{(1+\delta)19/13}$ & $5/3+\varepsilon$ \\
    \hline
  \end{tabular}
}
\begin{document}
\title{Settling SETH vs. Approximate Sparse Directed Unweighted Diameter (up to (NU)NSETH)}
\author{Ray Li\thanks{Department of Computer Science, Stanford University. Email: \texttt{rayyli@cs.stanford.edu}. Research supported by the National Science Foundation (NSF) under Grant No. DGE - 1656518 and by Jacob Fox's Packard Fellowship.}}
\date{\today}
\maketitle

\begin{abstract}
We prove several tight results on the fine-grained complexity of approximating the diameter of a graph.  First, we prove that, for any $\varepsilon>0$, assuming the Strong Exponential Time Hypothesis (SETH), there are no near-linear time $2-\varepsilon$-approximation algorithms for the Diameter of a sparse directed graph, even in unweighted graphs.  This result shows that a simple near-linear time 2-approximation algorithm for Diameter is optimal under SETH, answering a question from a survey of Rubinstein and Vassilevska-Williams (SIGACT '19) for the case of directed graphs.

In the same survey, Rubinstein and Vassilevska-Williams also asked if it is possible to show that there are no $2-\varepsilon$ approximation algorithms for Diameter in a directed graph in $O(n^{1.499})$ time.  We show that, assuming a hypothesis called NSETH, one \emph{cannot} use a deterministic SETH-based reduction to rule out the existence of such algorithms. 

Extending the techniques in these two results, we \emph{characterize} whether a $2-\varepsilon$ approximation algorithm running in time $O(n^{1+\delta})$ for the Diameter of a sparse directed unweighted graph can be ruled out by a deterministic SETH-based reduction for every $\delta\in(0,1)$ and essentially every $\varepsilon\in(0,1)$, assuming NSETH. This settles the SETH-hardness of approximating the diameter of sparse directed unweighted graphs for deterministic reductions, up to NSETH.
We make the same characterization for randomized SETH-based reductions, assuming another hypothesis called NUNSETH.

We prove additional hardness and non-reducibility results for undirected graphs.
\end{abstract}

\thispagestyle{empty}

\newpage
\setcounter{page}{1}

\section{Introduction}

The diameter $D$ of a graph $G=(V,E)$ is the maximum shortest path distance between two vertices.
A basic algorithmic question, Diameter, is computing or approximating the diameter of a graph.
The diameter is a useful measure of the complexity of a graph or network, so efficient algorithms for Diameter are desirable in practice \cite{WattsS98, BE05, PelegRT12, BorassiCHKMT15, LinWCW16}.
In this work, we prove several tight results on the fine-grained complexity of approximating the diameter of a graph.

Diameter algorithms have been studied in dense graphs, sparse graphs, and special classes of graphs with additional structure \cite{FarleyP80, AingworthCIM99, Eppstein00, CDHP01, CDV02, BBST07, RodittyW13, ChechikLRSTW14, CyganGS15, CairoGR16, Damaschke16, BringmanHM18, Ducoffe18, GKMSW18, BentertN19, CDP19, DHV20}.
In this work, we focus our attention on sparse graphs, i.e., graphs satisfying $m=n^{1+o(1)}$ where $n$ is the number of vertices and $m$ is the number of edges. 
We state our results for generic graphs, but our results are the strongest for sparse graphs, so it is helpful to imagine $m=n^{1+o(1)}$ throughout the paper.

In sparse graphs, the fastest exact Diameter algorithms in fact compute the distances between every pair of vertices, solving the All-Pairs-Shortest-Paths (APSP) problem, which takes $\tilde O(mn)$ time.
Because finding exact algorithms is challenging, it is natural to try to find fast approximation algorithms.
By running shortest path from a single vertex $v$ and returning $\tilde D$, the distance of $v$ to or from the furthest vertex, we obtain a simple 2-approximation in $\tilde O(m)$ time by the triangle inequality. 
Improving on the approximation ratio, a line of work \cite{AingworthCIM99, ChechikLRSTW14, RodittyW13} gave a 3/2-approximation algorithm for Diameter in time $\tilde O(m^{3/2})$.
Cairo, Grossi and Rizzi \cite{CairoGR16} generalized this algorithm to give an almost\footnote{``Almost'' means the algorithm loses an additive factor that depends on the edge weights.}-$(2-\frac{1}{2^k})$-approximation in time $\tilde O(m^{1+\frac{1}{k+1}})$ (for integers $k\ge 2$), but only in undirected graphs.

A natural question is whether these known algorithms are optimal.
A line of work \cite{RodittyW13, ChechikLRSTW14, BackursRSWW18, Bonnet20} based on \emph{fine-grained complexity} has given some partial answers to these questions (see Figure~\ref{fig:res} for results in directed graphs and Figure~\ref{fig:res-undir} for results in undirected graphs).
These works prove conditional hardness results assuming the Strong Exponential Time Hypothesis (SETH) \cite{ImpagliazzoPZ01}.\footnote{SETH states that, for every $\varepsilon>0$, there exists a positive integer $k$ such that $k$-SAT needs $\Omega(2^{(1-\varepsilon)n})$ time.}
In a survey on fine-grained complexity of approximation problems, Rubinstein and Vassilevska-Williams \cite{RubinsteinW19} asked three questions on the fine-grained complexity of approximating Diameter, and we address two of them in this work. First,
\begin{question}[Open Question 2.2 of \cite{RubinsteinW19}]
  \label{q:1}
  Is the simple near-linear time 2-approximation of Diameter optimal, or do there exist near-linear time algorithms giving better than a 2-approximation?
\end{question}
In directed graphs, as stated above, \cite{ChechikLRSTW14} gave a 3/2-approximation in $\tilde O(m^{3/2})$, but for running time $O(m^{3/2-\delta})$, the basic 2-approximation algorithm is the best known algorithm, as the algorithm of Cairo, Grossi, and Rizzi \cite{CairoGR16} crucially uses that the graph is undirected. 
Hence, Rubinstein and Vassilevska-Williams also ask,
\begin{question}[Open Question 2.3 of \cite{RubinsteinW19}]
  \label{q:2}
  Is there an $O(m^{3/2-\delta})$ time $(2-\varepsilon)$-approximation algorithm (for $\delta,\varepsilon>0$) for Diameter in directed graphs, or can we show that a $(2-\varepsilon)$-approximation for the diameter in sparse directed graphs needs $n^{3/2-o(1)}$ time?
\end{question}
Our contributions include answering Question~\ref{q:1} for the case of directed graphs (Corollary~\ref{cor:2eps}), conditionally answering Question~\ref{q:2} for SETH-hardness results (Corollary~\ref{cor:nseth-weight-intro}), and conditionally resolving the SETH-hardness of Diameter in directed unweighted graphs (Corollary~\ref{cor:nseth-intro}).
We highlight our results in the context of existing work in Figure~\ref{fig:res} (directed graphs) and Figure~\ref{fig:res-undir} (undirected graphs).

\paragraph{SETH-hardness results.}
Previously, the best fined grained lower bounds for near-linear time algorithms for Diameter in directed graphs ruled out $5/3-\varepsilon$-approximation algorithms in unweighted graphs \cite{BackursRSWW18} and $7/4-\varepsilon$-approximation algorithms in weighted graphs \cite{Bonnet20} under SETH. We improve both of these results to $2-\varepsilon$, which is optimal.
\begin{theorem}
  \label{thm:2eps-intro}
  \label{thm:2eps}
  Let $k\ge 3$ be an integer and $\varepsilon>0$.
  Assuming SETH, a $2-\frac{1}{k}-\varepsilon$ approximation of the Diameter of a directed graph, weighted or unweighted, needs time $n^{1 + 1/(k-1)-o(1)}$.
\end{theorem}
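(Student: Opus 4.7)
The plan is to reduce from $k$-Orthogonal Vectors ($k$-OV), which under SETH requires $N^{k-o(1)}$ time on instances with $N$ vectors per set in dimension $d=\omega(\log N)$. Given $A_1,\ldots,A_k\subseteq\{0,1\}^d$ with $|A_j|=N$, I will construct an unweighted directed graph $G$ on $n=\tilde O(N^{k-1})$ vertices satisfying the following gap: in YES instances (an orthogonal $k$-tuple exists) the diameter of $G$ is at least $2k-1$, while in NO instances the diameter is at most $k$. Since $(2k-1)/k = 2 - \tfrac{1}{k}$, any $(2-\tfrac{1}{k}-\varepsilon)$-approximation algorithm running in time $n^{1+1/(k-1)-\delta}$ distinguishes the two cases and hence solves $k$-OV in time $N^{(k-1)(1+1/(k-1))-\delta'} = N^{k-\delta'}$, contradicting SETH. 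Because the graph is unweighted, the same lower bound transfers to the weighted variant for free.

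The construction extends the chained OV-to-diameter reduction of~\cite{BackursRSWW18}, which only achieves the weaker gap $(5k-7)/(3k-4)$. The dominant vertex count comes from a ``source'' layer $L_0$ of $N^{k-1}$ vertices indexed by tuples $(a_1,\ldots,a_{k-1})\in A_1\times\cdots\times A_{k-1}$, together with a smaller ``sink'' side built from $A_k$. In between lie $O(k)$ auxiliary layers of coordinate vertices $[d]$ and intermediate copies of $A_k$, wired so that a short path from a source tuple to a sink vertex exists precisely when there is a coordinate $i\in[d]$ with $a_1[i]=\cdots=a_{k-1}[i]=1$ and $a_k[i]=1$ for some $a_k\in A_k$. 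An orthogonal $k$-tuple corresponds exactly to a source-sink pair with no such coordinate, and length-$(k-1)$ padding chains attached symmetrically on both sides stretch the distance between their far endpoints to $2k-1$ whenever no direct interior shortcut is available.

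The main obstacle is calibrating the layered construction so that three properties hold simultaneously. First, in the YES case, some pair of padding endpoints must be forced to distance exactly $2k-1$; no unintended ``cross-coordinate'' shortcut through the interior gadget should collapse this pair's distance below $2k-1$. Second, in the NO case, \emph{every} pair of vertices (including pairs within $L_0$, pairs along the padding chains, and pairs of unrelated interior vertices) must admit a path of length at most $k$; the source-sink pairs use the common-$1$ coordinate, but the remaining pair types require verifying that the interior edges and short connector edges provide direct length-$\le k$ paths. Third, the total vertex count must remain $\tilde O(N^{k-1})$, so the auxiliary layers can only use $\text{poly}(d,N)$ extra vertices. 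I expect the technical heart of the argument to be a symmetric two-sided padding chain (in contrast to the asymmetric single-sided chain of~\cite{BackursRSWW18}) that stretches the YES-case diameter from roughly $k+O(1)$ up to the optimal $2k-1$, together with a tight case analysis of the NO-case distances under this symmetrization.
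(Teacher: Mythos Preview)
Your proposal has a genuine gap: it is missing the two ideas that make the $k$ versus $2k-1$ gap achievable, and the sketch you give cannot be completed as written.

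First, the paper reduces from \emph{Single-Set} $k$-OV (all sets equal to one $A$), not from $k$-OV with distinct $A_1,\ldots,A_k$. This is essential, not cosmetic. With a single set, both the ``source'' and ``sink'' layers are naturally indexed by $A^{k-1}$, and the paper \emph{identifies} them: there is only one large layer $L_0$, with the layered gadget $L_0\to L_1\to\cdots\to L_{k-1}\to L_0$ wrapped into a cycle. This identification is exactly what forces every $L_0$-to-$L_0$ distance to be $\le k$ in the NO case without any extra gadgetry. In your setup the source layer is $A_1\times\cdots\times A_{k-1}$ and the sink side is built from $A_k$; there is no bijection to contract, and you are left with the problem you flag yourself---making all $N^{k-1}$ source tuples pairwise close in the NO case---which you do not solve.

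Second, ``symmetric padding chains'' cannot solve that problem or produce the required ratio. Padding adds the same length to both the YES and NO distances and hence cannot push the ratio past that of the underlying gadget; your interior gadget (a single coordinate $i$ with $a_1[i]=\cdots=a_k[i]=1$) is the basic $3/2$ construction, so no amount of symmetric padding around it approaches $2-\tfrac1k$. The paper's interior is quite different: the middle layers $L_1,\ldots,L_{k-1}$ carry vertices in $A^{k-2}\times[d]^{k-1}$, an $L_0\to L_0$ path uses $k-1$ \emph{different} coordinates (one per step), and additional ``index-changing'' edges $L_i\to L_1$ and $L_{k-1}\to L_i$ (same vector tuple, arbitrary new index tuple) ensure that every middle vertex inherits the out-neighborhood of some $L_0$ vertex and the in-neighborhood of some $L_0$ vertex. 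These index-changing edges, together with the $L_0$/$L_k$ contraction, are what collapse the NO-case diameter to exactly $k$; the YES-case analysis then shows that for an orthogonal tuple $(a_1,\ldots,a_k)$ the distance from $(a_1,\ldots,a_{k-1})_{L_0}$ to $(a_2,\ldots,a_k)_{L_0}$ is at least $2k-1$.
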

In particular, assuming SETH, the simple 2-approximation algorithm is optimal among near-linear time Diameter algorithms for directed graphs, answering Question~\ref{q:1} for the case of directed graphs.
\begin{corollary}
  \label{cor:2eps}
  Assuming SETH, for all $\varepsilon>0$, there exists a $\delta>0$ such that a $2-\varepsilon$ approximation of Diameter of a directed graph, weighted or unweighted, needs $n^{1+\delta-o(1)}$ time.
\end{corollary}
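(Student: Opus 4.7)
The plan is to derive Corollary~\ref{cor:2eps} directly from Theorem~\ref{thm:2eps-intro} by a suitable choice of the parameters $k$ and $\varepsilon'$ in that theorem as functions of the given $\varepsilon$. The key observation is monotonicity of approximation: any $(2-\varepsilon)$-approximation algorithm is a fortiori a $(2-\tfrac{1}{k}-\varepsilon')$-approximation whenever $2-\varepsilon \le 2-\tfrac{1}{k}-\varepsilon'$, so a lower bound for the latter immediately transfers to the former.

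Given $\varepsilon>0$, I would set $k = \max\{3,\lceil 2/\varepsilon\rceil\}$, so that $1/k \le \varepsilon/2$, and set $\varepsilon' = \varepsilon/2>0$. This choice ensures $\tfrac{1}{k}+\varepsilon' \le \varepsilon$, equivalently $2-\tfrac{1}{k}-\varepsilon' \ge 2-\varepsilon$. Applying Theorem~\ref{thm:2eps-intro} with these parameters yields, under SETH, that a $(2-\tfrac{1}{k}-\varepsilon')$-approximation of Diameter in a directed graph (weighted or unweighted) requires time $n^{1+1/(k-1)-o(1)}$. By the monotonicity observation, the same lower bound applies to any $(2-\varepsilon)$-approximation algorithm. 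Setting $\delta = \tfrac{1}{k-1}>0$, which depends only on $\varepsilon$, completes the proof.

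There is no substantive obstacle here once Theorem~\ref{thm:2eps-intro} is in hand: the corollary is simply a limiting statement, packaging the fact that the parameterized family of lower bounds $(2-\tfrac{1}{k}-\varepsilon',\,n^{1+1/(k-1)-o(1)})$ covers every approximation ratio strictly below $2$ with a polynomially improved runtime lower bound. The only mild care needed is to keep $k\ge 3$ (as required by the theorem), which is ensured by the $\max$ in the choice of $k$.
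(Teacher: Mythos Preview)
Your proposal is correct and matches the paper's approach: the paper states Corollary~\ref{cor:2eps} as an immediate consequence of Theorem~\ref{thm:2eps-intro} without a separate written proof, and the argument you give (choose $k$ large enough that $2-\tfrac{1}{k}-\varepsilon' \ge 2-\varepsilon$, then invoke monotonicity of approximation and set $\delta=\tfrac{1}{k-1}$) is precisely the intended derivation.
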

Our result improves over the result in \cite{BackursRSWW18} by improving the approximation ratio, and improves over \cite{Bonnet20} not only in the approximation ratio, but also in that our result rules out algorithms even in unweighted graphs. 

Not only does Corollary~\ref{cor:2eps} achieve the best possible approximation ratio of $2-\varepsilon$ for near-linear time algorithms, it also obtains a good dependence of the runtime $n^{1+\delta-o(1)}$ on the approximation ratio $2-\varepsilon$ for approximation ratios less than 2.
Later (Theorem~\ref{thm:nseth-intro}), we in fact show that this tradeoff is conditionally optimal for directed unweighted graphs.

Question~\ref{q:1} remains open for undirected graphs.
That is, it is open whether the simple 2-approximation algorithm of the Diameter in undirected graphs is optimal for near-linear time algorithms.
For undirected graphs, the best fine-grained lower bounds for near-linear time algorithms rule out a $5/3-\varepsilon$-approximation algorithm \cite{BackursRSWW18}.
However, this lower bound only holds for weighted graphs, and the best lower bound for unweighted graphs only rules out $8/5-\varepsilon$-approximation algorithms \cite{BackursRSWW18}.
For undirected unweighted graphs we improve this lower bound from $8/5-\varepsilon$ to $5/3-\varepsilon$, and we make the improvement with (we believe) a simpler reduction (see Theorem~\ref{thm:53}).

\paragraph{Concurrent work by Dalirrooyfard and Wein.}
In concurrent and independent work, Dalirrooyfard and Wein \cite{DalirrooyfardW2020} obtained Theorem~\ref{thm:2eps} with a different proof.

\begin{figure}
  \begin{center}
    \textbf{{Diameter in sparse directed unweighted graphs}}

    \vspace{0.1in}
    
    \begin{minipage}{0.46\linewidth}
      \hypersetup{citecolor=Black,linkcolor=Black}        
      \directedunweightedall
    \end{minipage}
    \,\,
    \begin{minipage}{0.5\linewidth}
      \directedunweightedtable
    \end{minipage}

    \vspace{0.1in}

    \textbf{{Diameter in sparse directed weighted graphs}}

    \vspace{0.1in}

    \begin{minipage}{0.46\linewidth}
      \hypersetup{citecolor=Black,linkcolor=Black}        
      \directedweightedall
    \end{minipage}
    \,\,
    \begin{minipage}{0.5\linewidth}
      \directedweightedtable
    \end{minipage}
  \end{center}
  \caption{Prior and new upper bounds, lower bounds, and non-reducibility results for Diameter in sparse directed graphs. 
Blue regions are feasible, red are infeasible assuming SETH.
Different shades of blue or red denote different results.
In cyan dotted regions, no deterministic SETH reductions can rule out the existence of algorithms, assuming NSETH, and no randomized SETH reductions can rule out the existence of algorithms assuming NUNSETH.
 }
  \label{fig:res}
\end{figure}

\begin{figure}
  \begin{center}
    \textbf{{Diameter in sparse undirected unweighted graphs}}

    \vspace{0.1in}
    
    \begin{minipage}{0.46\linewidth}
      \hypersetup{citecolor=Black,linkcolor=Black}        
      \undirectedunweightedall
    \end{minipage}
    \,\,
    \begin{minipage}{0.5\linewidth}
      \undirectedunweightedtable
    \end{minipage}

    \vspace{0.1in}

    \textbf{{Diameter in sparse undirected weighted graphs}}

    \vspace{0.1in}

    \begin{minipage}{0.46\linewidth}
      \hypersetup{citecolor=Black,linkcolor=Black}        
      \undirectedweightedall
    \end{minipage}
    \,\,
    \begin{minipage}{0.5\linewidth}
      \undirectedweightedtable
    \end{minipage}
  \end{center}
  \caption{Prior and new upper bounds, lower bounds, and non-reducibility results for Diameter in sparse undirected graphs. 
Blue regions are feasible, red are infeasible assuming SETH.
Different shades of blue or red denote different results.
In cyan dotted regions, no deterministic SETH reductions can rule out the existence of algorithms, assuming NSETH, and no randomized SETH reductions can rule out the existence of algorithms assuming NUNSETH.
 }
  \label{fig:res-undir}
\end{figure}

\paragraph{Non-reducibility results.}
For intermediate runtimes $O(n^{1+\delta})$ for $\delta\in(0,1/2)$, for both directed and undirected graphs, and for both weighted and unweighted graphs, there are gaps between the best approximation ratios achieved by algorithms and the best approximation ratios ruled out by hardness results, even in light of Theorem~\ref{thm:2eps}.
Therefore, it is natural to wonder whether stronger hardness of approximation results exist.
In this section, we conditionally rule out better SETH-based hardness results with \emph{non-reducibility results} in some settings.
To prove such non-reducibility results, we utilize a framework of Carmosi, Gao, Impagliazzo, Mikhailin, Paturi, Schneider \cite{CarmosinoGIMPS16}, which originally was used to conditionally prove that there is no deterministic reduction from SETH to 3-SUM and All-Pairs-Shortest-Path.
Assuming the Nondeterministic Strong Exponential Time Hypothesis (NSETH)\footnote{NSETH states: for every $\varepsilon>0$, there exists a $k$ such that \textsc{$k$-TAUT} is not in $\ntime[2^{n(1-\varepsilon)}]$, where \textsc{$k$-TAUT} is the language of all $k$-DNF which are tautologies.} \cite{CarmosinoGIMPS16}, (1) we answer Question~\ref{q:2} for SETH-based hardness results and (2) we characterize which combinations of approximation ratios and runtimes can be ruled out by deterministic SETH-based hardness results for the diameter of directed unweighted graphs.

While there is some evidence towards the veracity of NSETH, namely that a disproof would yield nontrivial circuit lower bounds \cite{CarmosinoGIMPS16}, there is also some evidence against NSETH \cite{Williams16}, namely that a Merlin-Arthur SETH and Arthur-Merlin SETH are false in a strong way.
However, NSETH itself seems challenging to disprove \cite{CarmosinoGIMPS16, Williams18}, so we view our NSETH-based non-reducibility results as showing a natural barrier to proving better SETH hardness results.

These non-reducibility results are proved \emph{without} constructing algorithms.
In Question~\ref{q:2}, if the requested algorithm existed, then the question would be settled and there would be no hardness result.
However, despite not finding such an algorithm, we are still able to rule out SETH-hardness results.
Instead of finding an algorithm, we prove our non-reducibility results by constructing \emph{non-deterministic} algorithms.
This technique of using non-deterministic algorithms to obtain NSETH-based non-reducibility results was also done for the Gomory-Hu tree problem \cite{AKT20}.
The non-deterministic choices of our algorithm can be deterministically found in polynomial time, so our non-deterministic algorithms also provide potentially interesting \emph{certifying algorithms} for Diameter. \cite{ABMR11,MMNS11,Kun18}.

Given a problem $\Pi$ and a time complexity $T$, the pair $(\Pi,T)$ is \emph{SETH-hard} \cite{CarmosinoGIMPS16} if there is a deterministic Turing reduction from \texttt{CNFSAT} to $\Pi$ such that, if SETH holds, then $\Pi$ needs $T^{1-o(1)}$ time (see Section~\ref{sec:prelims} for a formal definition).
In this language, Theorem~\ref{thm:2eps} states that a $2-\frac{1}{k}-\varepsilon$ approximation of Diameter with time complexity $n^{1+1/(k-1)}$ is SETH-hard.
Our results prove that certain approximations of Diameter are not SETH hard under NSETH.
Under this definition of SETH-hard, proving that $(\Pi,T)$ is not SETH-hard does not rule out \emph{randomized reductions}, when the Turing reduction in the SETH-hard definition is a probabilistic oracle machine (again see Section~\ref{sec:prelims} for a formal definition).
However, if we assume another hypothesis called \emph{Non-Uniform NSETH (NUNSETH)} \cite{CarmosinoGIMPS16}\footnote{NUNSETH states that for all $\varepsilon>0$, there exists $k$ such that there are no nondeterministic circuit families of size $O(2^{n(1-\varepsilon)})$ recognizing the language $k$-\textsc{TAUT}.}, we obtain non-reducibility results even for randomized reductions.

In Question~\ref{q:2}, Rubinstein and Vassilevska-Williams ask if we can show that a $(2-\varepsilon)$-approximation for the Diameter of sparse directed graph needs $n^{1.5-o(1)}$ time.
We shows that the answer is \emph{no}, if NSETH is true and we restrict ourselves to deterministic SETH reductions, or if NUNSETH is true and we restricted ourselves to randomized SETH reductions.
\begin{theorem}
  Under NSETH (NUNSETH), all $\delta>0$ and $\varepsilon>0$, a $5/3+\varepsilon$ approximation of the Diameter on directed graphs, weighted or unweighted, with time complexity $m^{(1+\delta)19/13}$ is \emph{not} SETH-hard for deterministic (randomized) reductions.
\label{thm:nseth-weight-intro}
\end{theorem}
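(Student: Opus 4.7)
The plan is to apply the framework of Carmosino et al.\ \cite{CarmosinoGIMPS16}, which rules out deterministic SETH-reductions to a problem $\Pi$ by exhibiting fast nondeterministic \emph{and} co-nondeterministic algorithms for $\Pi$. Specifically, I will construct an $\ncon[m^{19/13+o(1)}]$ algorithm for the gap Diameter problem on sparse directed weighted graphs that distinguishes $D\le L$ from $D\ge (5/3+\varepsilon/2)L$ (say with $L=3$ and target $\ge 5$). Given such an algorithm, suppose for contradiction there is a deterministic SETH-reduction from $k$-CNFSAT on $n$ variables to a $(5/3+\varepsilon)$-approximation of Diameter with target runtime $N^{(1+\delta)19/13}$. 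Setting the reduction size $N=2^{n\cdot 13/((1+\delta)19)}$ so that $N^{(1+\delta)19/13}=2^n$, and replacing each oracle query by the $\ncon$-algorithm, produces an $\ncon$-algorithm for $k$-CNFSAT of time $N^{19/13+o(1)}=2^{n(1-\Omega(\delta))}$; its co-side gives a nondeterministic algorithm for $k$-TAUT of the same runtime, contradicting NSETH. The randomized variant follows by observing that the $\ncon$-algorithm is equivalent to a uniform polynomial-size nondeterministic circuit family, to which the analogous argument contradicts NUNSETH.

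For the algorithm itself, the easier direction is certifying $D\ge (5/3+\varepsilon/2)L$: guess a pair $(u,v)$ together with a distance potential $\phi\colon V\to\mathbb{R}_{\ge 0}$ satisfying $\phi(u)=0$, $\phi(v)\ge (5/3+\varepsilon/2)L$, and $\phi(y)\le\phi(x)+w(x,y)$ for every edge $(x,y)$; this certifies $d(u,v)\ge (5/3+\varepsilon/2)L$ in $O(m)$ time. The more delicate direction is certifying $D\le L$, since the naive ``list all pairwise shortest paths'' witness has size $\Omega(n^2)$. The plan is a hub-plus-local decomposition: nondeterministically guess a hub set $H\subseteq V$ of size $h$, shortest-path in-trees and out-trees from each hub (each verified by the potential inequality in $O(m)$ time, contributing $\tilde O(hm)$ total), and for each non-hub vertex $v$ a local in- and out-ball together with the hub through which any far-away pair is routed. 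The structural claim to prove is that when $D\le L$ such a decomposition exists with total certificate size and verification time $\tilde O(m^{19/13})$, and that every pair $(u,v)$ is certifiably at distance $\le L$ either through a local-ball witness or through a hub-mediated path.

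The main obstacle is finding the right hub/local dichotomy in the directed weighted setting: there is no BFS layering, no symmetry of distances, and no integer-distance structure to lean on, so the entire argument must be phrased via distance potentials, and hubs must serve simultaneously as in-hitting and out-hitting sets of appropriate radii. The exponent $19/13 = 1 + 6/13$ is expected to emerge as the optimum of a small parameter-balancing linear program over the hub-set size $h$, the local-ball radius, and the per-vertex ``hub usage degree'', under the sparsity constraint $m=n^{1+o(1)}$; any looseness in the structural lemma corresponds directly to a worse exponent. Once the $\ncon$ algorithm is in place, the framework-level translation to an $\ncon$ algorithm for $k$-TAUT (and its circuit analogue for NUNSETH) is essentially mechanical and completes the proof.
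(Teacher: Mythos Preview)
Your high-level framework is right: the paper proves Theorem~\ref{thm:nseth-weight-intro} by showing the gap problem $D'/D$-Diameter lies in (a promise-problem variant of) $\ncon[m^{19/13+o(1)}]$ and then invoking the \cite{CarmosinoGIMPS16} machinery. The nondeterministic direction is exactly as you describe.

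Your co-nondeterministic direction, however, has a genuine gap. The ``hub-plus-local-ball'' decomposition you sketch is essentially the argument the paper uses for \emph{unweighted} graphs (Theorem~\ref{thm:nseth-intro}, Section~\ref{sec:nseth}), where it yields $m^{1+1/k}=m^{4/3}$ for $k=3$. That argument relies on the fact that in an unweighted graph a ball of metric radius $r$ contains at least $r$ vertices along any shortest path entering it, which lets you subsample hubs and still hit every ball. In a directed weighted graph with arbitrary weights this fails completely: a shortest path of length $L$ may have $\Theta(n)$ edges or a single edge, so ``local balls of radius $L/3$'' have no useful cardinality bound, and the parameter-balancing LP you allude to (over $h$, the ball radius, and a ``hub usage degree'') has no feasible point below $m^{3/2}$ once you must certify pairwise distances between hubs by guessing the actual paths.

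The missing ingredient is a \emph{hopset}. The paper nondeterministically guesses the random seed for the directed hopset construction of Cao--Fineman--Russell \cite{CaoFR20}, obtaining shortcut edges $E'$ so that every pair has a $\beta$-edge path of approximately correct length, with $\beta=n^{1/2-\alpha+o(1)}$ and construction time $\tilde O(m^{1+4\alpha})$. The co-nondeterministic algorithm then guesses, for each pair $(x,x')$ in the product of two small hitting sets $X_\ell^{in}\times X_{k-1-\ell}^{out}$, a $\beta$-hop path in $G+E'$ rather than a full shortest path in $G$. Balancing the hopset construction cost $m^{1+4\alpha}$ against the verification cost $m^{1+1/k}\beta^{1-2/k}=m^{4/3}\beta^{1/3}$ at $\alpha=3/26$ is exactly what produces the exponent $19/13$. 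Without the hopset idea, your proposal cannot reach the stated exponent for arbitrarily weighted graphs; the number $19/13$ is not the optimum of a hub/ball LP but an artifact of the current state of directed hopset constructions.
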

As $5/3 < 2$ and $19/13 < 3/2$, taking $\delta$ and $\varepsilon$ to be sufficiently small constants in Theorem~\ref{thm:nseth-weight-intro} immediately gives following corollary, which states our ``no'' answer to Question~\ref{q:2} for SETH-hardness results.
\begin{corollary}
\label{cor:nseth-weight-intro}
  Under NSETH (NUNSETH), there exist $\delta>0$ and $\varepsilon>0$ such that a $2-\varepsilon$ approximation of the Diameter on directed graphs, weighted or unweighted, with time complexity $m^{3/2-\delta}$ is \emph{not} SETH-hard for deterministic (randomized) reductions.
\end{corollary}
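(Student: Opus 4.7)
The approach is to invoke the non-reducibility framework of Carmosino--Gao--Impagliazzo--Mikhailin--Paturi--Schneider~\cite{CarmosinoGIMPS16}, in the style recently used for Gomory--Hu trees in \cite{AKT20}: to rule out SETH-hardness of the gap-Diameter promise problem at time complexity $m^{(1+\delta)19/13}$ under NSETH, it suffices to place this promise problem in nondeterministic time $\tilde O(m^{19/13})$ on both sides --- that is, to exhibit both a nondeterministic algorithm accepting YES instances (diameter $\le D$) in time $\tilde O(m^{19/13})$ and a nondeterministic algorithm rejecting NO instances (diameter $\ge (5/3+\varepsilon)D$) in the same time. A purported deterministic SETH-reduction of time complexity $m^{(1+\delta)19/13}$, composed with this solver, would decide CNFSAT in nondeterministic time smaller than NSETH permits. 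For the NUNSETH variant, the same argument yields a nondeterministic circuit family contradicting NUNSETH, thereby handling randomized reductions.

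The NO-side certificate is easy: a single pair $(u, v)$, verified by one shortest-path computation from $u$ in $O(m)$ time, confirming $d(u,v) \ge (5/3+\varepsilon)D$.

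The YES-side certificate is the core challenge. The plan is to adapt the hitting-set paradigm underlying the $\tilde O(m^{3/2})$ algorithms of \cite{RodittyW13, ChechikLRSTW14}. The prover will guess a hitting set $S \subseteq V$ of size $m^{\alpha}$ for an appropriately chosen $\alpha$, together with in- and out-shortest-path trees rooted at each $s \in S$; each such tree can be certified correct in $\tilde O(m)$ time via relaxed-triangle-inequality checks, giving $\tilde O(|S|\,m)$ in aggregate. For vertex pairs whose shortest paths miss $S$, the prover will additionally guess local neighborhood witnesses (e.g., balls of specified radius around ``far'' vertices) allowing the verifier to use the triangle inequality --- through $S$ when possible, through the local neighborhoods otherwise --- to certify that every pair has distance at most $(5/3+\varepsilon)D$. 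Balancing the hitting-set cost $|S| \cdot m$ against the total neighborhood-witness cost produces the exponent $19/13$ precisely for approximation ratio $5/3$.

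The hardest step will be designing the auxiliary witnesses so that (i) whenever the diameter is at most $D$, a valid certificate of size $\tilde O(m^{19/13})$ always exists, and (ii) the certificate admits verification within the same time budget. In spirit this parallels the Cairo--Grossi--Rizzi~\cite{CairoGR16} undirected approximation/runtime tradeoff, but here the nondeterministic freedom to guess structural witnesses substitutes for the deterministic search that made the \cite{CairoGR16} technique difficult to extend to directed graphs. The whole construction should handle weighted and unweighted graphs uniformly, since a shortest-path tree certificate is equally cheap to verify in both settings.
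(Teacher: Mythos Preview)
Your high-level framework is correct: place the gap-Diameter promise problem in $\ncon$-style time $\tilde O(m^{19/13})$ and invoke \cite{CarmosinoGIMPS16}. The ``large diameter'' side (guess a far pair, verify by one shortest-path computation) is right. The gap is entirely in the ``small diameter'' side, and it is a real one.

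The key missing ingredient is \emph{hopsets}. Your proposed certificate consists of SSSP trees from a hitting set $S$ plus ``local neighborhood witnesses,'' but neither of these addresses the central obstacle in weighted directed graphs: when the algorithm must certify that pairs $(x,x')$ drawn from two different hitting sets are close, the only way to do this cheaper than a full SSSP is to \emph{guess the path} from $x$ to $x'$ --- and in a weighted graph that path may have $\Theta(n)$ edges. Guessing $|X|\cdot|X'|$ such paths then costs $|X|\cdot|X'|\cdot n$, which under any balancing stays at $m^{3/2}$ or worse. The paper's solution is to first nondeterministically build a $(\beta,\varepsilon)$-hopset using the construction implicit in \cite{CaoFR20}; this guarantees that every pair has an approximate shortest path with only $\beta = n^{1/2-\alpha+o(1)}$ edges, so guessing paths becomes cheap. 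The exponent $19/13$ is not produced by ``balancing the hitting-set cost against neighborhood-witness cost'' as you claim; it comes specifically from optimizing the parameter $\alpha = 3/26$ in the hopset tradeoff so that the hopset construction time $m^{1+4\alpha}$ equals the path-guessing time $m^{4/3}\beta^{1/3}$. Without the hopset, there is no route to $19/13$.

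Your final claim that the construction ``should handle weighted and unweighted graphs uniformly, since a shortest-path tree certificate is equally cheap to verify'' is exactly backwards. The paper treats the two cases separately precisely because they are not uniform: for unweighted graphs (Theorem~\ref{thm:nseth-intro}) shortest paths have at most $D$ edges and one can guess them directly, yielding the stronger $m^{1+1/k}$; for general weighted graphs (Theorem~\ref{thm:nseth-weight-intro}) one must go through hopsets and only reaches $m^{19/13}$.
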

Towards resolving Question~\ref{q:2} fully, Corollary~\ref{cor:nseth-weight-intro} can be viewed in two possible ways: 
On one hand, it can be seen as evidence for an algorithm, that there exists a $\tilde O(m^{3/2-\delta})$-time $(2-\varepsilon)$-approximation of Diameter in directed graphs, answering Question~\ref{q:2} in full (see \cite{AKT20, AKT20b} for an example of a progression from non-reducibility/nondeterministic algorithms to an algorithm).
On the other hand, if one believes such an algorithm does not exist, Corollary~\ref{cor:nseth-weight-intro} can be seen as evidence that, unless NSETH is disproved, hypotheses other than SETH are necessary to conditionally rule out such algorithms.
To our knowledge, all existing fine-grained hardness results for Diameter are based on deterministic reductions from SETH, so Corollary~\ref{cor:nseth-weight-intro} indeed shows that, under NSETH, new ideas would be needed to prove such hardness results.

Using a variant of the technique for Theorem~\ref{thm:nseth-weight-intro}, we prove that the tradeoff between the runtime and approximation ratio in Theorem~\ref{thm:2eps-intro} is optimal among deterministic (randomized) SETH-reductions for directed unweighted graphs unless NSETH (NUNSETH) is disproved.
\begin{theorem}
\label{thm:nseth-intro}
  Let $k\ge 2$ be a positive integer and let $\delta$ and $\varepsilon$ be positive reals.
  Assuming NSETH (NUNSETH), for any $\delta>0$, a $2-\frac{1}{k}+\varepsilon$ approximation of Diameter on directed unweighted graphs with time complexity $m^{1+1/k+\delta}$ is \emph{not} SETH-hard for deterministic (randomized) reductions under NSETH.
\end{theorem}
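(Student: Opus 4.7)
I follow the framework of \cite{CarmosinoGIMPS16}: if the $(2-\tfrac{1}{k}+\varepsilon)$-approximation of Diameter on directed unweighted graphs lies in $\nconplus[\tilde O(m^{1+1/k})]$, then any deterministic (resp.\ randomized) SETH-reduction with time bound $m^{1+1/k+\delta}$ can be simulated nondeterministically (resp.\ with nonuniform advice) to yield a CNFSAT algorithm in time $2^{n(1-\Omega(\delta))}$, contradicting NSETH (resp.\ NUNSETH). Since $\tilde O(m^{1+1/k})$ is strictly smaller than $m^{1+1/k+\delta}$ for every constant $\delta>0$, the goal reduces to placing the approximation problem into $\nconplus[\tilde O(m^{1+1/k})]$. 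Binary-searching over $D$ further reduces this to the gap-decision problem: given $t$, distinguish $D>(2-\tfrac{1}{k}+\varepsilon)t$ (YES) from $D\leq t$ (NO), at the cost of a $\log n$ factor.

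\paragraph{The two certificates.} The nondeterministic YES-algorithm is immediate: guess a pair $(u,v)$, run out-BFS from $u$ in $O(m)$ time, and accept iff $d(u,v)>(2-\tfrac{1}{k}+\varepsilon)t$. For NO it suffices to certify the \emph{weaker} statement $D\leq(2-\tfrac{1}{k}+\varepsilon)t$, which is implied by the promise $D\leq t$ but crucially provides a multiplicative slack of $2-\tfrac{1}{k}+\varepsilon$ for the hub-based argument. I build the NO-certificate as a multi-level hub structure in the spirit of Cairo--Grossi--Rizzi~\cite{CairoGR16}, adapted to \emph{certify} (rather than compute) distances and to cope with the in/out asymmetry of directed graphs. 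The algorithm nondeterministically guesses hub sets $S_1\supseteq\cdots\supseteq S_{k-1}$ with $|S_1|=\tilde O(m^{1/k})$ at successively smaller radii, in- and out-BFS trees rooted at every hub (each verifiable in $O(m)$, totaling $\tilde O(m^{1+1/k})$), and for each vertex $u$ an assignment to a hub $h_u$ at some level together with a certified short path $u\to h_u$. The verification then checks $d(u,h_u)+\mathrm{ecc}^+(h_u)\leq(2-\tfrac{1}{k}+\varepsilon)t$; since $\mathrm{ecc}^+(h_u)$ is indexed by the hub rather than by the far endpoint, the all-pairs condition collapses to $O(n)$ per vertex, avoiding an $\Omega(n^2)$ blow-up.

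\paragraph{Main obstacle.} The crux is balancing three constraints simultaneously: every vertex $u$ must be assignable to \emph{some} level of the hub hierarchy (vertices whose out-balls are too small at the base level are delegated to denser, smaller-radius layers, spending part of the $(1-\tfrac{1}{k}+\varepsilon)t$ slack at each level); the directional asymmetry forces consistent parallel in- and out-structures at every one of the $k-1$ levels; and the total verification time must remain $\tilde O(m^{1+1/k})$ so that even arbitrarily small $\delta>0$ triggers the NSETH/NUNSETH contradiction. Once the $(\mathrm{N}\cap\mathrm{coN})$-time bound is in place, the conclusion follows mechanically from the CGIMPS framework; for the randomized/NUNSETH variant, the random choices of hub sets are fixed as nonuniform advice, converting any randomized SETH-reduction into a non-uniform nondeterministic CNFSAT algorithm ruled out by NUNSETH.
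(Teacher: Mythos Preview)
Your high-level framework is right (reduce to the gap problem, place it in $\nconplus[\tilde O(m^{1+1/k})]$, then invoke Lemmas~\ref{lem:nseth} and~\ref{lem:nunseth}), and the YES certificate is essentially correct, though the acceptance threshold should be $t$ rather than $(2-\tfrac1k+\varepsilon)t$ so that property~3 of Definition~\ref{def:nconplus} is handled by $\mathcal{A}_N$ alone. The NO certificate, however, has a genuine gap.

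Your verification $d(u,h_u)+\mathrm{ecc}^+(h_u)\le(2-\tfrac1k+\varepsilon)t$ requires, for every vertex $u$, a hub in a set of total size $\tilde O(m^{1/k})$ within out-distance $(1-\tfrac1k+\varepsilon)t$ of $u$. You assert that a CGR-style delegation hierarchy produces such a set, but this is not justified: CGR delegation halves the remaining slack at each level, yielding ratio $2-2^{-k}$, not $2-\tfrac1k$, and you give no mechanism by which each level spends only an additive $\tfrac tk$. Concretely, when $D\ll m^{1-1/k}$ the out-balls of radius $(1-\tfrac1k)t$ can have as few as $\Theta(D)$ edges each (a directed graph that is locally path-like), so any hitting set has size $\tilde\Omega(m/D)\gg m^{1/k}$; your hierarchy does not explain how to recover from this, and no single-hub certificate of the form you wrote can.

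The paper's NO certificate is structurally different. For each $\ell$ it guesses an ``$\ell$-out'' set of size $\tilde O(m^{1-\ell/k})$ and an ``$\ell$-in'' set of size $\tilde O(m^{1-\ell/k}/(\varepsilon D))$. The factor-$D$ saving in the in-sets is the crucial trick, and its proof (Lemma~\ref{lem:struct-1}) exploits unweightedness: along any shortest path of length $r$ there are $r$ distinct vertices, so one can subsample by a factor $\Theta(D)$ while losing only an additive $\varepsilon D$ in distance. A pigeonhole over the $k$ groups $\{X_{k-1}^{out}\},\{X_{k-2}^{out},X_1^{in}\},\dots,\{X_0^{out},X_{k-1}^{in}\}$ then locates some $\ell$ where both relevant sets are good. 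If that group is $\{X_{k-1}^{out}\}$ one runs SSSP from each of its $\tilde O(m^{1/k})$ vertices; otherwise one guesses an explicit path of at most $D$ vertices between every pair in $X_{k-\ell}^{in}\times X_{\ell-1}^{out}$, and the product $|X_{k-\ell}^{in}|\cdot|X_{\ell-1}^{out}|\cdot D=\tilde O(m^{1+1/k})$ because the $D$'s cancel. The resulting bound is the \emph{two-hub} inequality $d(v,X)+D+d(X',v')\le(2-\tfrac1k+\varepsilon)t$. Your proposal is missing both the two-hub-with-guessed-inter-hub-paths structure and the $\Theta(D)$-subsampling idea; together these are what push the ratio from CGR's $2-2^{-k}$ down to $2-\tfrac1k$.
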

Theorem~\ref{thm:nseth-intro} is quantitatively stronger than Theorem~\ref{thm:nseth-weight-intro} as Theorem~\ref{thm:nseth-intro} applies to a larger parameter setting of runtimes and approximation ratios, but Theorem~\ref{thm:nseth-intro} is weaker in that only rules out SETH-hardness results for unweighted graphs.
Theorem~\ref{thm:nseth-intro}  in fact also holds for a large class of weighted graphs (see Theorem~\ref{thm:nseth-intro-1} for the more general statement), but not \emph{arbitrary} weighted graphs as in Theorem~\ref{thm:nseth-weight-intro}.
Combining Theorem~\ref{thm:2eps-intro} with Theorem~\ref{thm:nseth-intro} yields the following \emph{complete characterization} of the SETH-hardness of Diameter in sparse directed unweighted graphs.
\begin{corollary}
\label{cor:nseth-intro}
  Assuming NSETH (NUNSETH), for any $\delta\in(0,1)$ and for essentially all $\varepsilon\in(0,1)$,\footnote{For every $\delta$, there is only one value of $\varepsilon$ for which the characterization does not apply, namely $1/\varepsilon = \floor{1/\delta}+1$.} a $2-\varepsilon$ approximation of Diameter of a sparse directed unweighted graph with time complexity $n^{1+\delta}$ is SETH-hard for deterministic (randomized) reductions if and only if $\floor{1/\varepsilon}\le \floor{1/\delta}$.
\end{corollary}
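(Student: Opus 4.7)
The plan is to obtain the characterization by composing Theorem~\ref{thm:2eps-intro} (for the forward direction) and Theorem~\ref{thm:nseth-intro} (for the backward direction), together with a parameter-matching argument that selects an appropriate integer $k$ in each case. No new reductions or algorithms are needed---the work lies in checking that the ranges of $k$ permitted by the two theorems cover exactly the regions in the $(\varepsilon,\delta)$-plane delineated by the floor-function inequality, with a single boundary value escaping both.

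For the forward direction, assuming $\floor{1/\varepsilon}\le\floor{1/\delta}$, I would set $k:=\floor{1/\varepsilon}+1$, so that $1/k<\varepsilon$ and $1/(k-1)=1/\floor{1/\varepsilon}\ge\delta$. Fixing any $\varepsilon_0\in(0,\varepsilon-1/k)$, Theorem~\ref{thm:2eps-intro} rules out a $(2-1/k-\varepsilon_0)$-approximation in time $n^{1+1/(k-1)-o(1)}\ge n^{1+\delta-o(1)}$; since every $(2-\varepsilon)$-approximation is \emph{a fortiori} a $(2-1/k-\varepsilon_0)$-approximation (because $1/k+\varepsilon_0<\varepsilon$), the lower bound transfers to the target. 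The edge case $\varepsilon>1/2$ forces $k=2$, which falls outside the stated range of Theorem~\ref{thm:2eps-intro} but is handled by the classical Roditty--Williams $(3/2-\varepsilon)$-approximation lower bound of $n^{2-o(1)}$.

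For the backward direction, assuming $\floor{1/\varepsilon}>\floor{1/\delta}$ and $1/\varepsilon\ne\floor{1/\delta}+1$, I would set $k:=\floor{1/\delta}+1$, so that $1/k<\delta$ and, crucially, $1/k>\varepsilon$; the latter strict inequality is precisely what the non-exceptional assumption guarantees. Fixing small $\varepsilon_1\in(0,1/k-\varepsilon)$ and $\delta_1\in(0,\delta-1/k)$, Theorem~\ref{thm:nseth-intro} yields non-SETH-hardness of the $(2-1/k+\varepsilon_1)$-approximation in time $m^{1+1/k+\delta_1}$. Because this is a strictly harder problem than the $(2-\varepsilon)$-approximation in time $n^{1+\delta}$---better approximation and less time---the trivial reductions in each coordinate (an $\alpha$-approximation is also an $\alpha'$-approximation for $\alpha'\ge\alpha$; a $T$-time algorithm is also a $T'$-time algorithm for $T'\ge T$) compose with the given non-SETH-hardness and transfer it to the easier target.

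The main (really only) obstacle is the exceptional value $1/\varepsilon=\floor{1/\delta}+1$. At this single value the candidate $k$ collapses in both directions: the forward argument would need $1/k<\varepsilon$ together with $1/(k-1)\ge\delta$, which forces $k=\floor{1/\delta}+1$ and hence $1/k=\varepsilon$ exactly, leaving no slack for the required $\varepsilon_0>0$; symmetrically, the backward argument needs $k<1/\varepsilon=\floor{1/\delta}+1$ and $k>1/\delta$ simultaneously, which admits no integer solution. Hence neither theorem applies on this boundary, exactly matching the footnote's exclusion.
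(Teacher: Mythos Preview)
Your proposal is correct and follows essentially the same route as the paper: choose $k=\floor{1/\varepsilon}+1$ for the SETH-hard direction and $k=\floor{1/\delta}+1$ for the non-SETH-hard direction, then invoke Theorem~\ref{thm:2eps-intro} and Theorem~\ref{thm:nseth-intro} respectively. The paper picks the slack parameters $\varepsilon',\delta'$ to hit the target values exactly rather than using strict inequalities plus monotonicity as you do, but both arguments are equivalent; your monotonicity directions (decreasing the time parameter preserves SETH-hardness, increasing the approximation ratio preserves non-SETH-hardness) are the right ones. You also correctly flag the $k=2$ edge case for $\varepsilon>1/2$ and patch it with the Roditty--Williams lower bound, a point the paper's own proof glosses over since Theorem~\ref{thm:2eps-intro} is stated only for $k\ge 3$.
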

Corollary~\ref{cor:nseth-intro} settles, for (essentially) \emph{every} $\delta,\varepsilon\in(0,1)$, whether a $2-\varepsilon$ approximation of Diameter with time complexity $n^{1+\delta}$ is SETH-hard for deterministic reductions, under NSETH, and SETH-hard for randomized reductions, under NUNSETH.
We can view Theorem~\ref{thm:nseth-intro} as establishing a similar dichotomy to  Theorem~\ref{thm:nseth-weight-intro}: either there exist certain algorithms for directed unweighted diameter matching our SETH-hardness results (Theorem~\ref{thm:2eps}) or, unless NSETH is disproved, hypotheses other than SETH are needed to improve our hardness results for directed unweighted graphs.

We note that Theorem~\ref{thm:nseth-weight-intro} (but not Theorem~\ref{thm:nseth-intro}) depends on a construction of \emph{hopsets} in \cite{CaoFR20}, for which currently only an extended abstract is available (see Section~\ref{sec:prelims} for the construction that we use and Appendix~\ref{app:hopset} for an explanation of how the construction is implicit in \cite{CaoFR20}).
Faster and better constructions of hopsets in directed graphs would extend Theorem~\ref{thm:nseth-weight-intro} to a larger setting of $\delta$ and $\varepsilon$, possibly matching the parameter setting of Theorem~\ref{thm:nseth-intro} (see Theorem~\ref{thm:nseth-weight-conj} for the quantitative improvement implied by better hopset constructions).

The non-reducibility results of Theorem~\ref{thm:nseth-weight-intro} and Theorem~\ref{thm:nseth-intro} automatically apply to undirected graphs.
However, the guarantees implied for undirected unweighted graphs are better than the guarantees implied for undirected weighted graphs.
Using a construction of hopsets for undirected graphs, we prove non-reducibility results for in undirected weighted graphs matching those implied by Theorem~\ref{thm:nseth-intro} for undirected unweighted graphs (see Theorem~\ref{thm:nseth-undir} for the result).
Among other things, we show that, assuming NSETH, the lower bounds of \cite{BackursRSWW18} and Theorem~\ref{thm:53}, which show a $5/3-\varepsilon$ approximation needs $n^{3/2-o(1)}$ time in undirected graphs, are tight in the approximation ratio.

\subsection{Other related work}
\label{sec:related}

See \cite{Williams18} for a survey of fine-grained complexity and \cite{RubinsteinW19} for a survey of fine-grained complexity for approximation problems.

Diameter algorithms in dense graphs have also been studied.
In weighted dense graphs, like in sparse graphs, the fastest Diameter algorithms in fact compute the distances between every pair of vertices, solving the All-Pairs-Shortest-Paths (APSP) problem.
The fastest APSP algorithms take time $\tilde O(\min(mn, n^3/\exp(\Omega(\sqrt{n}))))$ \cite{Williams14, Pettie04, PettieR05}.
For unweighted dense graphs, Diameter can be solved \cite{CyganGS15} in time $\tilde O(\min(mn,n^\omega))$, where $\omega<2.373$ is the constant for matrix multiplication \cite{Williams12, Le14, Stothers10}.
In undirected, unweighted graphs, \cite{BackursRSWW18} gave a nearly-3/2 approximation in $\tilde O(n^2)$ expected time.

A number of generalizations and extensions of Diameter have been considered and studied, including eccentricities \cite{ChechikLRSTW14, BackursRSWW18}, ST-diameter~\cite{BackursRSWW18, DalirrooyfardW019a}, bichromatic diameter~\cite{DalirrooyfardW019a}, roundtrip diameter~\cite{RubinsteinW19}, and min-diameter \cite{AbboudWW16, DalirrooyfardW019}.

\subsection{Outline of the paper}
In Section~\ref{sec:techniques}, we outline the main ideas behind the proofs of our main results.
In Section~\ref{sec:prelims}, we state some technical preliminaries for the proofs of our main results.
In Section~\ref{sec:2eps}, we present the full proof of our SETH-hardness result, Theorem~\ref{thm:2eps}.
We then move on to the proofs of our non-reducibility results.
As the proofs of the non-reducibility results that use hopsets, Theorems~\ref{thm:nseth-weight-intro} and \ref{thm:nseth-undir}, are simpler, we prove Theorems~\ref{thm:nseth-weight-intro} and \ref{thm:nseth-undir} first in Section~\ref{sec:nseth-weight}.
In Section~\ref{sec:nseth}, we prove Theorem~\ref{thm:nseth-intro}.
In Section~\ref{sec:53}, we prove Theorem~\ref{thm:53}.

We leave the proofs of lemmas stated in the preliminaries, Section~\ref{sec:prelims}, to the appendix, as they are technical and not the focus of this paper.
In Appendix~\ref{app:hopset}, we justify our use of the hopset constructions, first for undirected graphs, and then for directed graphs, where we use the construction in \cite{CaoFR20}.
In Appendix~\ref{app:fgc}, we prove some results on fine-grained reductions.

\section{Techniques}
\label{sec:techniques}

In this section, we sketch the ideas behind our SETH-hardness result (Theorem~\ref{thm:2eps}) and our non-reducibility results (Theorem~\ref{thm:nseth-intro} and Theorem~\ref{thm:nseth-weight-intro}).

\subsection{SETH-hardness for $k=5$}
First, we highlight the ideas behind our improved lower bound, Theorem~\ref{thm:2eps}, which shows that a $2-\frac{1}{k}-\varepsilon$ approximation of diameter in directed graphs needs $n^{1+1/(k-1)-o(1)}$ time.
To do so, we sketch the ideas of the proof when $k=5$.

Our construction (for $k=5$) is based on a clever lower bound construction for a problem called ST-Diameter, given by Backurs, Roditty, Segal, Vassilevska-Williams, and Wein \cite{BackursRSWW18}.
This lower bound construction reduces from the 5-Orthogonal-Vectors (5-OV) problem, and use this construction for ST-Diameter to prove that there are no near-linear time (in fact no $O(n^{3/2-\delta})$ time) $5/3-\varepsilon$ approximations for diameter in directed graphs.
An important idea in our construction is to instead reduce from a variant of 5-OV called \emph{Single-Set 5-Orthogonal-Vectors}.

5-OV takes as input $5$ sets each of $\tilde{n}$ vectors $A,B,C,D,E\subset\{0,1\}^{r}$ of dimension $r=\Theta(\log \tilde{n})$, and outputs whether there exists $k$ vectors $a\in A,b\in B,c\in C,d\in D,e\in E$ that are orthogonal, i.e., $a[x]\cdots e[x]=0$ for all coordinates $x\in[r]$. 
Single Set 5-OV is the same problem, but additionally assumes $A=B=C=D=E$.
Assuming SETH, both $5$-OV and Single-Set 5-OV need $\tilde{n}^{5-o(1)}$ time \cite{Williams05}.

From a Single-Set 5-OV instance $\Phi$ of size $\tilde{n}$, we construct a graph $G$ of size $n\defeq \tilde O(\tilde{n}^4)$ time such that, if $\Phi$ has a solution, the diameter of $G$ is at least 9, and otherwise the diameter of $G$ is at most 5.
This shows that distinguishing between graphs of diameter 9 and diameter 5 needs $\tilde{n}^{5-o(1)} = n^{5/4-o(1)}$ time assuming SETH, so a $9/5-\varepsilon$-approximation of diameter also needs $n^{5/4-o(1)}$ time assuming SETH.

To construct the graph $G$, we start with the following construction of a graph $G_{ST}$, which is exactly the construction in \cite{BackursRSWW18} for ST-Diameter.
Let the Single-Set 5-OV instance $\Phi$ have vector sets $A\subset\{0,1\}^r$.
In \cite{BackursRSWW18}, the vertex set of $G_{ST}$ has six vertex subsets $L_0,\dots,L_5$, with carefully chosen edges between $L_i$ and $L_{i+1}$ for $i=0,\dots,4$. 
Sets $L_0$ and $L_5$ have one vertex for each element of $A^4$, respectively.
Sets $L_1,\dots,L_4$ have (at most) one vertex for each element of $A^3\times [r]^4$.
The graph $G_{ST}$ has the property that, if the 5-OV instance $\Phi$ has a solution, there exists $u\in L_0$ and $v\in L_5$ such that $d(u,v)\ge 13$, and otherwise $d(u,v)\le 5$ for all $u\in L_0$ and $v\in L_5$.
The basic setup of this construction is illustrated in Figure~\ref{fig:2eps-sketch-1}.
\begin{figure}
\begin{center}
\begin{tikzpicture}[scale=0.5, decoration={markings,mark=at position 0.5 with{\arrow{>}}}]
  \node[inner sep=20, outer sep=0,draw,label=below:\tiny{$\tilde{n}^{4}$ vtxs}] (0) at (0,0)  {$L_0$};
  \node[inner sep=20, outer sep=0,draw,label=below:\tiny{$\tilde{n}^{4}$ vtxs}] (5) at (27,0) {$L_5$};
  \node[inner sep=10, outer sep=0,draw,label=below:\tiny{$\tilde{n}^{3+o(1)}$ vtxs}] (1) at (6,0)  {$L_1$};
  \node[inner sep=10, outer sep=0,draw,label=below:\tiny{$\tilde{n}^{3+o(1)}$ vtxs}] (2) at (11,0) {$L_2$};
  \node[inner sep=10, outer sep=0,draw,label=below:\tiny{$\tilde{n}^{3+o(1)}$ vtxs}] (3) at (16,0) {$L_3$};
  \node[inner sep=10, outer sep=0,draw,label=below:\tiny{$\tilde{n}^{3+o(1)}$ vtxs}] (4) at (21,0) {$L_4$};
  \foreach \s/\t in {0/1,1/2,2/3,3/4,4/5} {
    \draw[] (\s) -- (\t);
    \draw[transform canvas={yshift=1mm}] (\s) -- (\t);
    \draw[transform canvas={yshift=-1mm}] (\s) -- (\t);
    \draw[transform canvas={yshift=2mm}] (\s) --  (\t) node[midway,above]{\tiny{$\stackrel{\tilde{n}^{4+o(1)}}{\text{edges}}$}};
    \draw[transform canvas={yshift=-2mm}] (\s) -- (\t);
  } 
\end{tikzpicture}
\end{center}
\caption{$G_{ST}$: the ST-Diameter construction from \cite{BackursRSWW18} when $k=5$}
\label{fig:2eps-sketch-1}
\end{figure}

This construction does not immediately yield a hardness construction for Diameter.
Indeed, when the 5-OV instance $\Phi$ has no solution, even though the so-called ST-Diameter between $L_0$ and $L_5$ is 5, the Diameter is larger. The distance from a vertex in $L_0$ to a vertex in $L_1,\dots,L_4$ can be larger than 5, and the distance between any two vertices in $L_1,\dots,L_4$ can also be larger than 5.
In \cite{BackursRSWW18}, they turn this construction into a hardness construction for Diameter by adding more edges and vertices, but only show a $5/3-\varepsilon$ hardness of approximation for near-linear time directed Diameter algorithms.

Our directed Diameter lower bound builds on this construction in a different way.
The key step, illustrated in Figure~\ref{fig:2eps-sketch-2} is to (1) add edges from each of $L_1,L_2,L_3,L_4$ to $L_1$ and (2) add edges from $L_4$ to each of $L_1,L_2,L_3,L_4$, giving a graph $G'$.
These edges crucially use the fact that we are reducing from \emph{Single-Set} orthogonal vectors.
These edges have the property that, for any vertex $(a,b,c,d)\in L_1$ and any coordinate-tuple $\vec i\in[r]^4$, the out-neighborhood of $(a,b,c,d)$ is a \emph{subset} of the out neighborhood of $(a,b,c,\vec i)\in L_1\cup\cdots\cup L_4$.
Thus, when $\Phi$ has no solution, any vertex that vertex $(a,b,c,d)$ can reach in 5 steps can also be reached in 5 steps by any vertex $(a,b,c,\vec i)\in L_1\cup\cdots\cup L_4$. 
Since every vertex of $L_0$ is distance 5 from every vertex of $L_5$ in graph $G_{ST}$, every vertex in $L_0\cup\cdots\cup L_4$ is distance 5 from every vertex in $L_5$ in the new graph $G'$.
By a similar argument, any vertex that can reach $(b,c,d,e)\in L_5$ in 5 steps can also reach any $(c,d,e,\vec i)\in L_1\cup\cdots\cup L_4$ in 5 steps, so every vertex in $L_0\cup\cdots\cup L_4$ is distance 5 from every vertex in $L_1\cup\cdots\cup L_5$.

It remains to ensure that, when $\Phi$ has no solution, any two vertices in $L_0$ are at distance 5, and any two vertices in $L_5$ are at distance 5.
Because vertices in $L_0$ and $L_5$ are each identified by elements of $A^4$, there is a natural bijection between vertices in $L_0$ and vertices in $L_5$.
In our final construction, we (1) direct the edges of $G'$ between $L_i$ and $L_{i+1}$ to point towards the vertex in $L_{i+1}$, and (2) contract the pairs of vertices in $L_0$ and $L_5$ that correspond to the same 4-tuple $A^4$, giving our final Diameter instance $G$.
The contraction step also uses that we are reducing from Single-Set orthogonal vectors and not ordinary orthogonal vectors.
We already showed that every vertex in $L_0\cup\cdots\cup L_4$ is distance 5 from every vertex in $L_1\cup\cdots\cup L_5$, but now that $L_0=L_5$ (and directing the edges did not disrupt any paths), we have that the Diameter of this final construction $G$ is at most 5 when $\Phi$ has no solution.

With some care, we can also show that the Diameter of $G$ is at least 9 when $\Phi$ has a solution, giving our desired reduction. In particular, we show that, if $a,b,c,d,e$ are orthogonal vectors in $\Phi$, the distance from vertex $(a,b,c,d)\in L_0$ to vertex $(b,c,d,e)\in L_0$ is at least 9.

\begin{figure}
\begin{center}
\begin{tikzpicture}[scale=0.5, decoration={markings,mark=at position 0.5 with{\arrow{>}}}]
  \node[inner sep=20, outer sep=0,draw] (0) at (0,0) {$L_0$};
  \node[inner sep=20, outer sep=0,draw] (5) at (27,0) {$L_5$};
  \node[inner sep=10, outer sep=0,draw] (1) at (6,0) {$L_1$};
  \node[inner sep=10, outer sep=0,draw] (2) at (11,0) {$L_2$};
  \node[inner sep=10, outer sep=0,draw] (3) at (16,0) {$L_3$};
  \node[inner sep=10, outer sep=0,draw] (4) at (21,0) {$L_4$};
  \foreach \s/\t in {0/1,1/2,2/3,3/4,4/5} {
    \draw[] (\s) -- (\t);
    \draw[transform canvas={yshift=1mm}] (\s) -- (\t);
    \draw[transform canvas={yshift=-1mm}] (\s) -- (\t);
    \draw[transform canvas={yshift=2mm}] (\s) -- (\t);
    \draw[transform canvas={yshift=-2mm}] (\s) -- (\t);
  }
  \foreach \s/\t in {2/1,3/1,4/1} {
    \draw[postaction={decorate}, red] (\s) to [bend right=40](\t);
    \draw[postaction={decorate}, red] (\s) to [bend right=35](\t);
    \draw[postaction={decorate}, red] (\s) to [bend right=45](\t);
  }
  \foreach \s/\t in {4/1,4/2,4/3} {
    \draw[postaction={decorate}, red] (\s) to [bend left=40](\t);
    \draw[postaction={decorate}, red] (\s) to [bend left=35](\t);
    \draw[postaction={decorate}, red] (\s) to [bend left=45](\t);
  }
  \path (1) edge [postaction={decorate},red,loop above, in=100,out=80, looseness=5] node {} (1);
  \path (1) edge [postaction={decorate},red,loop above, in=110,out=70, looseness=5] node {} (1);
  \path (1) edge [postaction={decorate},red,loop above, in=120,out=60, looseness=5] node {} (1);
  \path (4) edge [postaction={decorate},red,loop below, in= -100,out= -80, looseness=5] node {} (4);
  \path (4) edge [postaction={decorate},red,loop below, in= -110,out= -70, looseness=5] node {} (4);
  \path (4) edge [postaction={decorate},red,loop below, in= -120,out= -60, looseness=5] node {} (4);
\end{tikzpicture}
\end{center}
\caption{$G'$: almost our directed Diameter construction for $k=5$}
\label{fig:2eps-sketch-2}
\end{figure}

\subsection{Non-reducibility for undirected unweighted 7/4-Diameter}
In this section, we highlight the key ideas of Theorem~\ref{thm:nseth-weight-intro} and Theorem \ref{thm:nseth-intro}, which show that under (NU)NSETH, a $\alpha$ approximation of the Diameter with time complexity $n^{1+\delta}$ is not SETH-hard, for certain values of $\alpha$ and $\delta$.
Define the $D'/D$-Diameter problem as the (promise) problem whose input is a graph, and such that an algorithm must always accept when the graph has diameter at least $D'$, always reject when the graph has diameter at most $D$, and can accept or reject otherwise.
We show (Lemma~\ref{lem:gap-to-approx}) that, to prove for some $\alpha\ge 1$ that an $\alpha$-approximation of Diameter with some time complexity $T$ is not SETH-hard, it suffices to prove that $\alpha D/D$-Diameter with some time complexity $T$ is not SETH-hard for all $D$. 
To demonstrate how we might prove that some $\alpha D/D$-Diameter is not SETH-hard, we show how to rule out SETH-hardness for distinguishing between undirected unweighted graphs  of diameter 7 and 4.
\begin{proposition}
  \label{prop:74-1}
  Under NSETH (NUNSETH), for any $\delta>0$, 7/4-Diameter on undirected unweighted graphs with time complexity $m^{4/3+\delta}$ is \emph{not} SETH-hard for deterministic (randomized) reductions.
\end{proposition}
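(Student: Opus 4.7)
Following the framework of Carmosino, Gao, Impagliazzo, Mikhailin, Paturi, and Schneider~\cite{CarmosinoGIMPS16}, to prove Proposition~\ref{prop:74-1} it suffices to place the $7/4$-Diameter promise problem on undirected unweighted graphs in $\ncon[m^{4/3+\delta/2}]$: that is, to exhibit two nondeterministic algorithms, both running in time $\tilde O(m^{4/3+\delta/2})$, the first accepting every graph of diameter $\geq 7$ (and rejecting every graph of diameter $\leq 4$) and the second accepting every graph of diameter $\leq 4$ (and rejecting every graph of diameter $\geq 7$). Combined with a hypothetical deterministic (resp.\ randomized) SETH reduction to the promise problem, this would yield a nondeterministic $O(2^{(1-\varepsilon)n})$ algorithm for $k$-\textsc{Taut}, contradicting NSETH (resp.\ NUNSETH).

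The ``$D \geq 7$'' direction is immediate: nondeterministically guess a vertex $u$, deterministically run BFS from $u$ in $O(m)$ time, and accept iff the BFS exposes a vertex at distance at least $7$ from $u$. Soundness follows because BFS returns exact distances, and completeness follows because any graph of diameter $\geq 7$ contains some vertex of eccentricity $\geq 7$.

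The ``$D \leq 4$'' direction is the substantive part. Set a threshold $s := m^{2/3+\delta/3}$, so that there are at most $m/s = m^{1/3-\delta/3}$ \emph{heavy} vertices (those of degree $\geq s$). The plan is to nondeterministically guess a hub set $S \subseteq V$ of size $\tilde O(m^{1/3+\delta/2})$ that contains every heavy vertex and, in addition, hits every radius-$2$ ball $B_2(v)$ of size at least $n/|S|$. Deterministically run BFS from each $s \in S$, costing $\tilde O(|S|\cdot m) = \tilde O(m^{4/3+\delta/2})$ in total, and record $d(s,v)$ for all $s\in S,v\in V$. Then, for each vertex $v$, guess a routing certificate of one of two types: a \emph{Type A} certificate pointing to hubs in $B_2(v)$ that jointly witness $d(v,u)\leq 4$ for every $u$, or, when $v$'s radius-$2$ ball misses $S$ (so $|B_2(v)|$ is small and every vertex in $B_2(v)$ is light), a \emph{Type B} certificate giving an explicit transcript of $B_2(v)$ together with the hubs reachable via short paths from its boundary, from which $d(v,u)\leq 4$ can be verified for every $u$ using the precomputed BFS distances from $S$.

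The principal obstacle is completeness: proving that every graph with $D\leq 4$ admits such a hub-and-certificate structure and that the certificates can be verified deterministically within the total budget of $\tilde O(m^{4/3+\delta})$. The difficulty is exemplified by ``spider'' graphs --- a center connected to $\Theta(n)$ length-$2$ paths --- which have $D = 4$ yet dominating number $\Omega(n)$, so pure dominating-set constructions fail. Completeness must instead combine the degree split with the observation that any vertex missed by the hub sample must itself be light \emph{and} have an entirely light radius-$2$ ball, making $B_2(v)$ small enough to transcribe explicitly within the per-vertex certificate budget of $\tilde O(m^{1/3})$. Making this tradeoff precise, and balancing the exponents so that the total verification time stays within $\tilde O(m^{4/3+\delta})$, is the main technical step; the resulting construction serves as the prototype for the more involved argument underlying Theorem~\ref{thm:nseth-intro}.
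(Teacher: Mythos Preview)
Your overall framework (place the promise problem in $\ncon$ and invoke \cite{CarmosinoGIMPS16}) matches the paper's, and the ``$D\geq 7$'' direction is identical to the paper's. The ``$D\leq 4$'' direction, however, diverges from the paper's argument and, as you have sketched it, has a genuine gap.

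First a minor point: your Type-A certificate can only witness $d(v,u)\leq 6$ (via $d(v,s)+d(s,u)\leq 2+4$), not $d(v,u)\leq 4$ as you write; this is harmless for the $7/4$ gap but should be corrected. The real problem is Type~B. When $B_2(v)\cap S=\emptyset$, you propose to ``give a transcript of $B_2(v)$ together with the hubs reachable via short paths from its boundary'' and verify $d(v,u)\leq 4$ ``using the precomputed BFS distances from~$S$''. But no hub lies within distance~$2$ of $v$, so routing through any $s\in S$ gives only $d(v,u)\leq d(v,s)+d(s,u)\geq 3+4=7$, which does not separate from diameter~$\geq 7$. Transcribing $B_2(v)$ does not help without running fresh BFS from its vertices, and your stated per-vertex budget $\tilde O(m^{1/3})$ cannot cover that: the transcript alone may have up to $n/|S|\approx m^{2/3}$ vertices, and there could be many Type-B vertices.

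The paper sidesteps this with a \emph{global} dichotomy rather than a per-vertex one. Either some vertex $v_0$ has $|N_2(v_0)|\leq m^{1/3}$, in which case one takes $Y=N_2(v_0)$ and runs full BFS from each $y\in Y$ (cost $|Y|\cdot m=\tilde O(m^{4/3})$), checking that every vertex is within distance~$2$ of $Y$ and every $y\in Y$ has eccentricity~$\leq 4$; or every $v$ has $|N_2(v)|>m^{1/3}$, so at least $m^{1/3}$ edges touch each $N_1(v)$, and a hitting set $X$ of size $\tilde O(m^{2/3})$ puts every vertex within distance~$1$ of $X$. The key trick you are missing is that one then nondeterministically \emph{guesses the length-$4$ path between every pair in $X$} rather than running BFS from~$X$: this costs $O(|X|^2)=\tilde O(m^{4/3})$ instead of $|X|\cdot m\approx m^{5/3}$, and a single multi-source BFS from $X$ suffices to check that $X$ $1$-dominates $V$. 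Thus the paper's two branches are not ``small hub set, BFS from each'' versus ``local per-vertex transcript''; they are ``small hub set, BFS from each'' versus ``large hub set, but guess inter-hub paths''.
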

Following the framework of \cite{CarmosinoGIMPS16}, to prove Proposition~\ref{prop:74-1}, it (almost, modulo a small technicality described after Lemma~\ref{lem:nseth-old}) suffices to prove the following proposition, that 7/4-Diameter on directed unweighted graphs is in $\ncon[m^{4/3+o(1)}]$.
\begin{proposition}
  \label{prop:74-2}
  7/4-Diameter on undirected unweighted graphs is in $\ncon[m^{4/3+o(1)}]$.
\end{proposition}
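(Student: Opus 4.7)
I need to exhibit, for both the NTIME and coNTIME directions, a nondeterministic algorithm running in time $m^{4/3+o(1)}$.

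The NTIME direction, accepting iff the diameter is at least $7$, is essentially immediate: nondeterministically guess a pair $(u,v)\in V\times V$, run a BFS from $u$ in $G$ to depth $6$, and accept iff $v$ is not reached. This takes $O(m)$ time, and a certificate exists iff some pair is at distance $\geq 7$.

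The coNTIME direction, accepting iff the diameter is at most $4$, is the main challenge. My plan is to use a hub-based certificate, in the spirit of the Cairo--Grossi--Rizzi approximation algorithm \cite{CairoGR16}. Set $t=m^{1/3}$. Nondeterministically guess (i) a hub set $S\subseteq V$ with $|S|=\tilde O(t)$ that hits $B_2(u)$ for every vertex $u$ with $|B_2(u)|\geq m^{2/3}$ (a random set of this size works, so such a set exists and can be guessed); (ii) for each such ``dense'' vertex $u$, a hub $h(u)\in S\cap B_2(u)$ together with a length-$\leq 2$ path from $u$ to $h(u)$; and (iii) for each ``sparse'' vertex $u$ with $|B_2(u)|<m^{2/3}$, the ball $B_2(u)$ given explicitly as a small BFS tree. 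The verifier runs BFS from each $s\in S$, at total cost $|S|\cdot m=\tilde O(m^{4/3})$, yielding the exact distances $d(s,v)$ for all $s\in S$ and $v\in V$, then validates the guessed per-vertex witnesses.

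The principal obstacle is the final all-pairs check that $d(u,v)\leq 4$ for every pair. Naively iterating over $n^2$ pairs costs $\Omega(m^2)$, which is far too slow. To fit within $m^{4/3+o(1)}$ time, I would batch the check by the (dense, sparse) type of $u$ and $v$: dense--dense pairs can be routed through the hubs and their pairwise distances (an $|S|^2$-sized table); mixed and sparse--sparse pairs must be handled by charging work to the explicit small 2-balls. The hard part will be the sparse--sparse case, since the naive bound $\sum_{u\text{ sparse}}|B_2(u)|\leq n\cdot m^{2/3}=m^{5/3}$ exceeds our budget. I expect to need an additional hitting set, or a more refined density stratification (e.g.\ by degree rather than by $|B_2|$, tuned so that the dense-side hitting set and the sparse-side 2-ball enumeration both fit in $m^{4/3+o(1)}$), together with an argument that iterating over sparse--sparse pairs is unnecessary once the dense hubs have been shown to give a common midpoint $w\in B_2(u)\cap B_2(v)$ whenever such a midpoint is dense, reducing to the (structurally limited) case where \emph{every} midpoint of $(u,v)$ is sparse. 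Arguing that this case admits an efficient batched verification is the technical crux.
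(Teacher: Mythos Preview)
Your NTIME direction is fine. The coNTIME direction has a genuine gap, and the shape of your proposal suggests you are heading in the wrong direction on two counts.

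First, you are trying to certify $d(u,v)\le 4$ for all pairs. You do not need that. The promise problem only requires that the coNTIME verifier never accept when the diameter is $\ge 7$; it is enough to certify diameter $\le 6$. This relaxation is what makes the problem tractable in $m^{4/3+o(1)}$ time.

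Second, your single-hub-set scheme (with $|S|=\tilde O(m^{1/3})$ hitting the $2$-balls of ``dense'' vertices, then BFSing from each $s\in S$) cannot be balanced. The sparse--sparse case you flag is real and is not fixable with the stratification you sketch: with your threshold $m^{2/3}$, a single sparse vertex $u$ has $|B_2(u)|$ as large as $m^{2/3}$, so turning its $2$-ball into a BFS source set already costs $m^{5/3}$; lowering the threshold to $m^{1/3}$ fixes that case but then forces $|S|=\tilde O(m^{2/3})$, and BFSing from every $s\in S$ again costs $m^{5/3}$.

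The paper's proof avoids both traps with a clean dichotomy. In a diameter-$\le 4$ graph, \emph{either} some vertex $v$ has $|N_2(v)|\le m^{1/3}$, in which case $Y:=N_2(v)$ has size $\le m^{1/3}$, every vertex is within distance $2$ of $Y$, and BFSing from each $y\in Y$ (cost $\tilde O(m^{4/3})$) verifies eccentricity $\le 4$, giving $d(u,v)\le 2+4=6$; \emph{or} every $v$ has $|N_2(v)|> m^{1/3}$, which forces at least $m^{1/3}$ edges incident to $N_1(v)$, so a set $X$ of $\tilde O(m^{2/3})$ vertices hits every $N_1(v)$, i.e.\ $d(v,X)\le 1$. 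In this case you do \emph{not} BFS from each $x\in X$; you run a single multi-source BFS from $X$ to check $d(v,X)\le 1$, and you nondeterministically guess a length-$\le 4$ path between every pair $x,x'\in X$ (total cost $O(|X|^2)=\tilde O(m^{4/3})$). Then $d(u,v)\le 1+4+1=6$. The two key ideas you are missing are (i) the ``some small $2$-ball vs.\ all large $2$-balls'' dichotomy that eliminates the sparse--sparse case entirely, and (ii) replacing per-hub BFS by guessed pairwise paths, which costs $|X|^2$ rather than $|X|\cdot m$.
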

\begin{proof}
By the definition of $\ncon$, it suffices to (1) verify that a graph of diameter at least 7 indeed has diameter greater than 4 and (2) verify that a graph of diameter at most 4 indeed has diameter less than 7, each in time $m^{4/3+o(1)}$.
In a graph of diameter 7, we can easily verify the graph has diameter at least 7, and thus greater than 4, by nondeterministically selecting a vertex and checking with BFS in $O(m)$ time if the eccentricity is at least $7$.

The hard step is verifying that the diameter is less than 7 when the diameter is at most 4. 
The key idea is to observe that in a graph of diameter at most 4, one of the following exists.
\begin{enumerate}
\item A set $X$ of $\tilde O(m^{2/3})$ vertices such that every vertex is distance 1 from some vertex in $X$.
\item A set $Y$ of $\tilde O(m^{1/3})$ vertices such that every vertex is distance 2 from some vertex in $Y$.
\end{enumerate}
To prove this, if any vertex has $|N_2(v)|\le m^{1/3}$ (here, $N_r(v)$ is the set of all vertices at distance at most $r$ from $v$), we can take $Y=N_2(v)$.
Since each vertex is distance at most 4 from $v$, each vertex is distance at most 2 from some vertex in $Y$, as desired.
Thus, if $Y$ does not exist, we must have $|N_2(v)|\ge m^{1/3}$ for all vertices $v$.
In particular, at least $m^{1/3}$ edges are incident to $N_1(v)$.
If we take $X$ to be the vertices incident to $\tilde O(m^{2/3})$ uniformly random edges, we end up with a vertex in each $N_1(v)$ with high probability, as desired.

From these two possibilities, we can certify that the diameter is at most 6 by first nondeterministically picking a set $X$ of $\tilde O(m^{2/3})$ vertices and also nondeterministically picking a length 4 path between any two vertices in $X$, and then nondeterministically picking a set $Y$ of $m^{1/3}$ vertices.
We then run multi-source shortest path from $X$ and single-source shortest path from each element of $Y$.
We accept if each vertex has distance 1 to some vertex of $X$ and if all length-4 paths between pairs of elements in $X$ are valid.
We also accept if $Y$ has distance 2 to the rest of the graph and every element of $Y$ has eccentricity 4.
We reject otherwise.
This all takes time $\tilde O(m^{4/3})$.
Crucially, checking the distances in $X$ only takes time $O(|X|^2)$ rather than $O(|X|\cdot m)$, because we guess the paths between vertices in $X$.
If the graph is diameter 4, by above, there either exists $X$ such that we accept or there exists $Y$ such that we accept.
On the other hand, one can check that, if we accept, the diameter of the graph is at most 6, as desired.
\end{proof}

In the full proof of Theorems~\ref{thm:nseth-weight-intro}, \ref{thm:nseth-intro}, and \ref{thm:nseth-undir}, we need to generalize Proposition~\ref{prop:74-2} to (1) arbitrary diameters, (2) directed graphs, (3) weighted graphs, and/or (4) other runtimes and approximation ratios.
The first step is the most challenging, and we describe the challenge and our solution here.
If, for example, the graph is unweighted and $D$ and $D'$ are small then generalizing Proposition~\ref{prop:74-2} is straightforward.
The problem comes when $D$ and $D'$ are large, such as $m^{\Theta(1)}$, in which case the runtime of ``guessing the path between pairs in $X$'' could take too long if there are too many vertices on the paths.
One solution is to use a \emph{hopset}, a useful object originally defined by Cohen~\cite{Cohen00} in the context of parallel algorithms.
Formally, a $(\beta,\varepsilon)$-hopset for a graph is a set of edges such that, when added to the graph, any two vertices have a $\beta$-edge path between them whose length is within a multiplicative $(1+\varepsilon)$ factor of the true shortest path.
With a hopset, we only need to guess paths of at most $\beta$ vertices, saving on the runtime if $\beta$ is sufficiently small and the hopset is constructed sufficiently quickly, and losing up to an $\varepsilon D$ additive factor in the estimated distances, which we can afford.
This works for Theorem~\ref{thm:nseth-weight-intro} and Theorem~\ref{thm:nseth-undir}.

However, in directed graphs, the state-of-the-art constructions of hopsets are either too slow and/or have too large of a hop-bound $\beta$ to be useful for Theorem~\ref{thm:nseth-intro}.
Instead, we leverage the following idea.
Suppose we have a directed unweighted graph with a large diameter $D=m^{\Theta(1)}$. 
In the generalization of Proposition~\ref{prop:74-2}, we end up taking some set $Y$ formed by the $\tilde D$-neighborhood of some vertex $v$ for some $\tilde D$ so that $Y$ is distance at most $D-\tilde D$ from every vertex.
Then, crucially, for every other vertex $u$, by considering the path from $u$ to $v$, there are at least $\varepsilon D$ vertices in $Y$ of distance at most $D-\tilde D+\varepsilon D$ from $u$ (this step uses that the graph is unweighted).
Thus, we can replace $Y$ by a random subset $Y'$ of size $\tilde O(|Y|/(\varepsilon D))$, saving a factor of $D$ in the runtime, which we need, and losing only an additive $\varepsilon D$ in the estimated distances, which we can afford.

\section{Preliminaries}
\label{sec:prelims}

All logs are base $e$ unless otherwise specified.
When $a_1,a_2,\dots,$ is a sequence, the subsequence $a_i,\dots,a_j$ with $i > j$ is taken to be an empty sequence.
For an integer $a$, let $[a]\defeq \{1,\dots,a\}$.
In graphs, $n$ always denotes the number of vertices, $m$ always denotes the number of edges, and $W_{max}$ ($W_{min}$) always denotes the maximum (minimum) edge-weight.
In a graph $G$, let $d_G(u,v)$ be the length of the shortest path from $u$ to $v$. We omit the subscript $G$ when it is clear from the context.
We assume all input graphs to Diameter problems are connected/strongly connected, so that $d(u,v)$ always exist. This can be verified in near-linear time and the diameter is not defined otherwise. 

\paragraph{Graph notation.}
In an undirected graph, the eccentricity of a vertex $v$ is defined as $\epsilon(v)=\max_{u\in V}d(u,v)$.
In a directed graph, the (out-)in-eccentricity of a vertex $v$ is defined as $\epsilon^{in}(v)=\max_{u\in V}d(u,v)$ ($\epsilon^{out}(v)=\max_{u\in V}d(v,u)$).
In an undirected graph, let $N_{\le r}(v)=\{u:d(v,u)\le r\}$ denote the set of vertices of distance at most $r$ from/to $v$.
In a directed graph, let $N_{\le r}^{in}(v)=\{u:d(u,v)\le r\}$ denote the set of vertices of distance at most $r$ to $v$.
In a directed graph, let $N_{\le r}^{out}(v)=\{u,d(v,u)\le r\}$ denote the set of vertices of distance at most $r$ from $v$.
Let $N^{in}(v)$ denote the in-neighbors of a vertex $v$ and $N^{out}(v)$ denote out-neighbors of a vertex $v$.
Let $E_{\le r}(v)$ denote the set of edges incident to some vertex of $N_{\le r}(v)$.
Let $E_{\le r}^{in}(v)$ denote the set of edges $(a,b)$ (from $a$ to $b$) with $b\in N_{\le r}^{in}(v)$.
Let $E_{\le r}^{out}(v)$ denote the set of edges $(a,b)$ with $a\in N_{\le r}^{out}(v)$.
For a vertex $v$ in a graph and a subset $X$ of vertices, let $d(v,X)\defeq \min_{x\in X} d(v,x)$. Similarly, let $d(X,v)\defeq \min_{x\in X}d(x,v)$ (which may be different if the graph is directed.)

\paragraph{SETH.} The Strong Exponential Time Hypothesis (SETH) \cite{ImpagliazzoPZ01} states that, for every $\varepsilon>0$, there exists a $k$ such that $k$-SAT on $n$ variables cannot be solved in $O(2^{(1-\varepsilon)n})$ time (say, on a word-RAM with $O(\log n)$-bit words).
For $k\ge 2$, the \emph{$k$-Orthogonal Vectors} ($k$-OV) problem asks, given sets $A_1,\dots,A_k\subset\{0,1\}^d$ of $n$ vectors each, determine whether there exists vectors $a_i\in A_i$ such that $\prod_{i=1}^{k} a_i[x]=0$ for all $x\in[d]$. 
The \emph{Single-Set $k$-Orthogonal Vectors} is the $k$-OV problem when $A_1=A_2=\cdots=A_k$.
Williams~\cite{Williams05} showed that, assuming SETH, for a large enough constant $c$, $k$-OV and Single-Set $k$-OV need $n^{k-o(1)}$ time when $d=c\log n$.

\paragraph{Promise problem.}
  A \emph{promise problem} $\Pi$ is a pair of non-intersecting sets, denoted $(\Pi_{YES},\Pi_{NO})$, with $\Pi_{YES},\Pi_{NO}\subset\{0,1\}^*$.
  The set $\Pi_{YES}\cup \Pi_{NO}$ is called the \emph{promise}.
  A Turing reduction from a problem $\Pi_1$ to a promise problem $\Pi_2$ is an oracle Turing machine $\mathcal{M}^{\Pi_2}$ such that the output of $\mathcal{M}^{\Pi_2}$ is always correct whenever the outputs of $\Pi_2$ are ``YES'' when the input is in $\Pi_{YES}$ and are ``NO'' when the input is in $\Pi_{NO}$ and are arbitrarily ``YES'' or ``NO'' when the input is not in the promise.

\paragraph{Fine grained reductions.}
Let $\Pi_1,\Pi_2$ be a decision problem and $T_1,T_2$ be time bounds. We say that $(\Pi_1,T_1)$ \emph{fine-grained reduces} to $(\Pi_2,T_2)$, denoted $(\Pi_1,T_1)\le_{FGR}(\Pi_2,T_2)$ if
\begin{enumerate}
\item  For all $\varepsilon>0$, there exists a $\delta>0$ and a deterministic Turing reduction $M^{\Pi_2}$ from $\Pi_1$ to $\Pi_2$, such that $\TIME[\mathcal{M}]\le T_1^{1-\delta}$ and
\item Let $\tilde Q(\mathcal{M},x)$ denote the set of queries made by $\mathcal{M}$ to the oracle on a input $x$ of length $n$. Then the query lengths obey the following time bound:
\begin{align}
  \sum_{q\in\tilde Q(\mathcal{M},x)}^{} (T_2(|q|))^{1-\varepsilon}\le (T_1(n))^{1-\delta}.
\end{align}
\end{enumerate}
A problem $\Pi$ with time complexity $T$ is \emph{SETH-hard} if there is a fine-grained reduction from \textsc{CNFSAT} with time complexity $2^n$ to $(\Pi,T)$.
We note that one can make the same definition for promise problems and function problems, with one exception for function problems.
If $(\Pi_2,T_2)$ is a function problem, then we also bound the sizes of the answers given by the $\Pi_2$-oracle.
\begin{align}
  \sum_{q\in\tilde Q(\mathcal{M},x)}^{} (|\Pi_2(q)|)^{1-\varepsilon}\le (T_1(n))^{1-\delta}.
\end{align}
This definition holds even if a function has multiple acceptable outputs, as is the case for approximation problems where the output is acceptable if it is within an $\alpha$-factor of the optimal.

Randomized fine-grained reductions, denoted $(\Pi_1,T_1)\le_{FGR,r}(\Pi_2,T_2)$ are defined exactly as deterministic fine-grained reductions, except that the Turing reduction from $(\Pi_1,T_1)$ to $(\Pi_2,T_2)$ is a probabilistic machine with some two-sided error bound
\begin{align}
  \Pr[\mathcal{M}^{\Pi_2}(x)\in\Pi_1(x)]\ge \frac{2}{3},
\end{align}
where $\Pi_1(x)$ is the set of acceptable outputs for $\Pi_1$ on input $x$.
Now, a problem $\Pi$ with time complexity $T$ is \emph{SETH-hard with a randomized reduction} if there is a randomized fine-grained reduction from \textsc{CNFSAT} with time complexity $2^n$ to $(\Pi,T)$.

\paragraph{NSETH and non-reducibility.}
The Nondeterministic Strong Exponential Time Hypothesis (NSETH) \cite{CarmosinoGIMPS16} states that for every $\varepsilon>0$, there exists a $k$ such that \textsc{$k$-TAUT} is not in $\ntime[2^{n(1-\varepsilon)}]$, where \textsc{$k$-TAUT} is the language of all $k$-DNF which are tautologies. 
The following key result allowed \cite{CarmosinoGIMPS16} to prove non-reducibility results.
\begin{lemma}[{\cite[Theorem 5.1]{CarmosinoGIMPS16}}]
  \label{lem:cgimps-old}
  \label{lem:nseth-old}
  If NSETH holds and $\Pi\in \ncon[T]$ for some decision or function problem $\Pi$, then $(\Pi,T^{1+\delta})$ is not SETH-hard (with a deterministic reduction) for any $\delta>0$.
\end{lemma}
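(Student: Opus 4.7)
The plan is to argue by contrapositive: assume $(\Pi, T^{1+\delta})$ is SETH-hard via a deterministic fine-grained reduction, and use the $\ncon[T]$ algorithm for $\Pi$ to build a nondeterministic algorithm for CNF-tautology that violates NSETH. Unpacking the SETH-hardness assumption, for every $\varepsilon>0$ there exist $\delta'>0$ and a deterministic reduction $\mathcal{M}^{\Pi}$ from CNFSAT to $\Pi$ with $\TIME[\mathcal{M}]\le 2^{n(1-\delta')}$ and the query bound $\sum_{q}\bigl(T(|q|)^{1+\delta}\bigr)^{1-\varepsilon}\le 2^{n(1-\delta')}$. I would choose $\varepsilon$ small enough that $(1+\delta)(1-\varepsilon)\ge 1+\delta/2$ (for instance $\varepsilon = \delta/(2+2\delta)$), which turns the query bound into $\sum_{q} T(|q|)^{1+\delta/2}\le 2^{n(1-\delta')}$.

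Next, I would construct a nondeterministic machine $\mathcal{N}$ that, on input $\varphi$, simulates $\mathcal{M}$ step by step, and whenever $\mathcal{M}$ issues an oracle query $q$ to $\Pi$, $\mathcal{N}$ nondeterministically guesses the value $\Pi(q)$ together with a certificate from the $\ncon[T]$ algorithm for $\Pi$ and verifies it in time $T(|q|)^{1+o(1)}$. The machine $\mathcal{N}$ accepts iff every guessed certificate verifies \emph{and} the simulated output of $\mathcal{M}$ is NO. Because $\mathcal{M}$ is deterministic, soundness of the $\ncon$ certificates guarantees that every accepting computation path of $\mathcal{N}$ corresponds to a simulation using the true oracle answers, so $\mathcal{N}$ accepts $\varphi$ iff CNFSAT$(\varphi)=$ NO, i.e., iff $\varphi$ is a CNF-tautology. (For function problems $\Pi$, the same argument goes through with certificates verifying that a guessed output is a valid answer, using that the complements of such certificates also lie in NTIME by hypothesis.)

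For the runtime of $\mathcal{N}$, the simulation of $\mathcal{M}$ takes $2^{n(1-\delta')}$ time, while the total nondeterministic query resolution is $\sum_q T(|q|)^{1+o(1)}$. For all sufficiently large query sizes this is dominated by $\sum_q T(|q|)^{1+\delta/2}\le 2^{n(1-\delta')}$, and the finitely many queries of small size contribute at most $2^{n(1-\delta')}\cdot O(1)$. Hence $\mathcal{N}$ is an NTIME algorithm for the set of CNF-tautologies running in time $2^{n(1-\delta'/2)}$. Since the width of the CNF formulas produced (or consumed) by the reduction $\mathcal{M}$ is bounded by some constant $k$ depending on $\varepsilon$, this yields $k$-TAUT $\in\ntime[2^{n(1-\delta'/2)}]$, contradicting NSETH once $\varepsilon$ (hence $\delta'$) is chosen appropriately.

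The main obstacle is simply aligning the two slack parameters: the $(1+\delta)$-factor slack in the allowed oracle time must outpace the $(1+o(1))$-factor overhead of certificate verification when raised to the $(1-\varepsilon)$ power from the reduction's query bound. This is exactly what the inequality $(1+\delta)(1-\varepsilon)>1$ buys us, and it is the only place the hypothesis ``$\delta>0$'' is used. A secondary subtlety is that SETH-hardness is stated through reductions from CNFSAT (a ``$\exists$''-flavored problem) while NSETH is a hardness statement about $k$-TAUT (a ``$\forall$''-flavored problem); the determinism of $\mathcal{M}$ together with the \emph{two-sided} certification afforded by $\ncon[T]$ is precisely what bridges these two sides and enables the co-problem to be simulated nondeterministically.
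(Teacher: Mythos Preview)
Your argument is correct and follows the same approach as \cite{CarmosinoGIMPS16} (which the paper cites for this lemma, and whose technique the paper reproduces in Appendix~\ref{app:fgc} for the promise-problem version): simulate the deterministic reduction $\mathcal{M}^\Pi$, nondeterministically guess and certify each oracle answer using the two-sided $\ncon[T]$ verifiers, and conclude that co-CNFSAT lies in $\ntime[2^{n(1-\delta')}]$, contradicting NSETH. The only presentational difference is that the paper factors the simulation through an intermediate ``$\ncon$-transfer'' lemma (Lemma~\ref{lem:nseth-c5}, the analogue of \cite[Lemma~3.5]{CarmosinoGIMPS16}), whereas you build the nondeterministic machine for TAUT directly.

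One sentence in your write-up is confused, though. You say ``the width of the CNF formulas produced (or consumed) by the reduction $\mathcal{M}$ is bounded by some constant $k$ depending on $\varepsilon$.'' This is not right: $\mathcal{M}$ reduces \emph{from} CNFSAT (arbitrary width) \emph{to} $\Pi$; it consumes CNF formulas of unbounded width and produces none. The correct closing step is simpler than you make it: your machine $\mathcal{N}$ decides TAUT of arbitrary width in $\ntime[2^{n(1-\delta')}]$, hence $k$-\textsc{TAUT} $\in\ntime[2^{n(1-\delta')}]$ for \emph{every} $k$, which already contradicts NSETH. No matching of $\varepsilon$ to a width parameter is required.
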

However, in \cite{CarmosinoGIMPS16}, promise problems are not considered, and we were not able to reprove Lemma~\ref{lem:cgimps-old} for promise problems. However, we prove a slightly weaker result, which is still sufficient for our non-reducibility results.
To state this result, we need the following definition.
\begin{definition}
\label{def:nconplus}
Let $\nconplus[T]$ be the complexity class containing the promise problems $\Pi=(\Pi_{YES},\Pi_{NO})$ such that there exists nondeterministic Turing machines $\mathcal{A}_N$ and $\mathcal{A}_{coN}$ each running in time $T$ such that:
\begin{enumerate}
\item For all inputs $x\in\Pi_{YES}$, there exist nondeterministic choices such that $\mathcal{A}_N$ outputs ``YES'', and for all inputs $x\in \Pi_{NO}$, all nondeterministic choices cause $\mathcal{A}_N$ to output ``NO''.
\item For all inputs $x\in\Pi_{NO}$, there exist nondeterministic choices such that $\mathcal{A}_{coN}$ outputs ``NO'', and for all inputs $x\in \Pi_{YES}$, all nondeterministic choices cause $\mathcal{A}_{coN}$ to output ``YES''.
\item For all $x\notin \Pi_{YES}\cup\Pi_{NO}$, either some nondeterministic choices allow $\mathcal{A}_{N}$ to output ``YES'', or some nondeterministic choices allow $\mathcal{A}_{coN}$ to output ``NO''.
\end{enumerate}
\end{definition}
For context, we note that satisfying the first two conditions is equivalent to being in $\ncon[T]$.
We now state the slightly weaker result that we use, which is proved in Appendix~\ref{app:fgc}.
\begin{lemma}[Lemma~\ref{lem:nseth-old} for promise problems]
  \label{lem:cgimps}
  \label{lem:nseth}
  If NSETH holds and $\Pi\in \nconplus[T]$ for some promise problem $\Pi$, then $(\Pi,T^{1+\delta})$ is not SETH-hard for any $\delta>0$.
\end{lemma}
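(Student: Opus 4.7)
The plan is to derive a contradiction from the three hypotheses NSETH, $\Pi \in \nconplus[T]$, and SETH-hardness of $(\Pi, T^{1+\delta})$. Assume there is a deterministic fine-grained reduction $\mathcal{M}$ from $(\textsc{CNFSAT}, 2^n)$ to $(\Pi, T^{1+\delta})$, together with the two nondeterministic machines $\mathcal{A}_N$ and $\mathcal{A}_{coN}$ witnessing $\Pi \in \nconplus[T]$. I will use these to build, for every $k$, a nondeterministic algorithm deciding $k$-CNF-UNSAT (equivalently $k$-\textsc{TAUT}) in time $2^{n(1-\delta_0)}$ for some $\delta_0 > 0$, directly violating NSETH.

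The algorithm follows the template used in the proof of Lemma~\ref{lem:nseth-old}. On input $\phi$, it simulates $\mathcal{M}(\phi)$ deterministically; whenever $\mathcal{M}$ issues an oracle query $q$, it nondeterministically guesses an answer $a(q) \in \{\text{YES}, \text{NO}\}$ and then nondeterministically verifies the guess by running $\mathcal{A}_N(q)$ (if $a(q) = \text{YES}$) and insisting it accept, or by running $\mathcal{A}_{coN}(q)$ (if $a(q) = \text{NO}$) and insisting it reject. The guessed $a(q)$ is returned to $\mathcal{M}$, and after $\mathcal{M}$ halts the overall algorithm accepts iff every verification succeeded and $\mathcal{M}$ output NO.

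The main obstacle, and the reason the $\ncon$ hypothesis of Lemma~\ref{lem:nseth-old} has to be strengthened to $\nconplus$, is controlling queries that fall outside the promise $\Pi_{YES} \cup \Pi_{NO}$. Clauses (1) and (2) of Definition~\ref{def:nconplus} imply that a successfully verified YES rules out $q \in \Pi_{NO}$ and a successfully verified NO rules out $q \in \Pi_{YES}$; hence on every accepting branch the sequence of answers supplied to $\mathcal{M}$ is a legal $\Pi$-oracle and $\mathcal{M}$'s output is correct, so satisfiable $\phi$ are never accepted. For unsatisfiable $\phi$ I must exhibit one accepting branch: for each in-promise query the natural verification exists, and for each out-of-promise query clause (3) of Definition~\ref{def:nconplus} is exactly what guarantees that at least one of the two verification options is available, so I can always choose a verifiable $a(q)$. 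This is the step that fails under plain $\ncon$, and is where the proof must do its real work.

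The runtime analysis is routine. Choosing $\varepsilon > 0$ small enough that $(1+\delta)(1-\varepsilon) > 1$, the definition of a fine-grained reduction furnishes $\delta_0 > 0$ with $\TIME[\mathcal{M}] \le 2^{n(1-\delta_0)}$ and $\sum_q T(|q|)^{(1+\delta)(1-\varepsilon)} \le 2^{n(1-\delta_0)}$; since $T(|q|) \ge 1$ and the exponent is greater than $1$, this immediately yields $\sum_q T(|q|) \le 2^{n(1-\delta_0)}$, so the nondeterministic algorithm runs in time $O(2^{n(1-\delta_0)})$. Applying the reduction to $k$-CNF inputs places $k$-\textsc{TAUT} in $\ntime[2^{n(1-\delta_0)}]$ for every $k$, contradicting NSETH and proving the lemma.
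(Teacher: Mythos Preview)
Your proof is correct and follows essentially the same approach as the paper. The paper factors the argument through an intermediate lemma (Lemma~\ref{lem:nseth-c5}), showing more generally that a fine-grained reduction transfers $\nconplus$ savings on the target problem to $\ncon$ savings on the source problem, and then specializes to $\Pi'=\textsc{CNFSAT}$; you instead instantiate this specialization directly. The substantive content---simulate the reduction, nondeterministically guess and verify each oracle answer with $\mathcal{A}_N$ or $\mathcal{A}_{coN}$, and invoke clause~(3) of Definition~\ref{def:nconplus} to handle out-of-promise queries---is identical in both arguments.
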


\paragraph{NUNSETH and non-reducibility for randomized reductions.}
The \emph{Non-Uniform Nondeterministic Strong Exponential Time Hypothesis (NUNSETH)} states that, for every $\varepsilon > 0$, there exists $k$ such that there are no nondeterministic circuit families of size $O(2^{n(1-\varepsilon)})$ recognizing the language \textsc{$k$-TAUT}.
The following is a result of \cite{CarmosinoGIMPS16} that NUNSETH rules out randomized SETH-hardness results, but adapted to promise problems.
\begin{lemma}[Lemma 3.9 of \cite{CarmosinoGIMPS16} for promise problems]
  \label{lem:nunseth}
  If NUNSETH holds and $\Pi\in \nconplus[T]$ for some promise problem $\Pi$, then $(\Pi,T^{1+\delta})$ is not SETH-hard for randomized reductions for any $\delta>0$.
\end{lemma}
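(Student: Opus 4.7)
The plan is to combine the nondeterministic simulation technique of Lemma~\ref{lem:cgimps} with the standard nonuniform derandomization of randomized SETH reductions. Suppose toward contradiction that $\Pi \in \nconplus[T]$ and that $(\Pi, T^{1+\delta})$ is SETH-hard via a randomized fine-grained reduction $\mathcal{M}$ from \textsc{CNFSAT}; apply the reduction to $k$-SAT instances on $n$ variables, so that $\mathcal{M}$ runs in time $2^{(1-\sigma)n}$ for some $\sigma > 0$ and its oracle queries obey the fine-grained query-cost budget from Section~\ref{sec:prelims}.

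First I would derandomize $\mathcal{M}$ nonuniformly. Amplify $\mathcal{M}$'s success probability to $1-2^{-q(n)}$ for a sufficiently large polynomial $q$ by running $\mathrm{poly}(n)$ independent copies and taking majority; this multiplies both the runtime and the total query cost by only $\mathrm{poly}(n)$, which is absorbed by a slight shrinking of the saving $\sigma$. Since there are at most $2^{\mathrm{poly}(n)}$ $k$-SAT instances of size $n$, a union bound yields a single random string $r_n$ such that the amplified $\mathcal{M}(\cdot; r_n)$ is simultaneously correct on every $k$-SAT instance of size $n$. Hardwiring $r_n$ produces a deterministic nonuniform oracle circuit family $\widehat{\mathcal{M}}_n$ with the same query profile.

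Next, I would apply the Lemma~\ref{lem:cgimps} construction to $\widehat{\mathcal{M}}_n$ to obtain a nondeterministic circuit family for $k$-TAUT. For each oracle query $q$ made by $\widehat{\mathcal{M}}_n$, the circuit nondeterministically guesses the answer $a \in \{\mathrm{YES}, \mathrm{NO}\}$ together with a verification path of the corresponding machine from Definition~\ref{def:nconplus}, namely $\mathcal{A}_N$ when $a = \mathrm{YES}$ and $\mathcal{A}_{coN}$ when $a = \mathrm{NO}$, rejecting if the path fails to certify $a$; it finally accepts iff $\widehat{\mathcal{M}}_n$ outputs $\mathrm{NO}$. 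Conditions (1) and (2) of Definition~\ref{def:nconplus} force any verified answer to be correct on promise-respecting queries, while condition (3) guarantees that for promise-violating queries at least one answer admits a verifying path, matching the ``arbitrary $\mathrm{YES}$/$\mathrm{NO}$'' freedom the Turing-reduction definition grants $\mathcal{M}$ on such queries. The query-budget constraint, combined with $\delta > 0$ absorbing the $T^{1+o(1)}$ cost of each nondeterministic verification, will bound the total circuit size by $2^{(1-\varepsilon')n}$ for some $\varepsilon' > 0$, contradicting NUNSETH for sufficiently large $k$.

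The main obstacle is arithmetic rather than conceptual, since the argument is essentially the direct composition of the promise-problem NSETH simulation already carried out in Lemma~\ref{lem:cgimps} and the classical union-bound derandomization of BPP-style reductions in the nonuniform circuit model. The care that will be required lies in the size bookkeeping: choosing the amplification polynomial large enough to beat the $2^{\mathrm{poly}(n)}$ input count, verifying that the $\mathrm{poly}(n)$ amplification overhead can be absorbed into $\sigma$, and confirming that the total nondeterministic verification cost $\sum_q T(|q|)^{1+o(1)}$ still fits inside the $T^{1+\delta}$ slack permitted by the reduction.
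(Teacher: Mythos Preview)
Your proposal is correct and follows essentially the same approach as the paper: amplify the randomized reduction by $\mathrm{poly}(n)$ repetitions, take a union bound over all $k$-SAT inputs to hardwire a single good random tape as nonuniform advice, and then replace each oracle call by the nondeterministic simulation of Lemma~\ref{lem:nseth-c5} (exploiting condition~(3) of Definition~\ref{def:nconplus} for off-promise queries) to obtain a small nondeterministic circuit for $k$-\textsc{TAUT}, contradicting NUNSETH. The paper's proof is terser but identical in content, simply pointing to Lemma~\ref{lem:nseth-c5} for the oracle-simulation step you spell out explicitly.
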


\paragraph{Diameter approximations.}
Given positive real numbers $D'>D$, the \emph{$D'/D$-Diameter} problem is the promise problem $(\Pi_{YES},\Pi_{NO})$ whose input is a graph, and where $\Pi_{YES}$ is the set of graphs with Diameter at least $D'$ and $\Pi_{NO}$ is the set of graphs with Diameter at most $D$.
For a parameter $\alpha\ge 1$, the \emph{$\alpha$-approximate-Diameter} problem is the function problem whose input is a graph, and whose output must be a value between $D$ and $\alpha D$, where $D$ is the diameter of the graph.

Our non-reducibility results (e.g. Theorem~\ref{thm:nseth-weight-intro} and Theorem~\ref{thm:nseth-intro})  describe when an $\alpha$-approximation of Diameter is not SETH-hard for some $\alpha$.
The following lemma shows that it suffices to prove that the corresponding promise problem $\alpha/1$-Diameter is not SETH-hard.
We note that, even without this lemma, ruling out SETH-hardness for the promise problem is already interesting, as (to our knowledge) all known lower bounds for Diameter \cite{RodittyW13, ChechikLRSTW14, BackursRSWW18, Bonnet20} prove SETH-hardness of a corresponding promise problem.
We also note that, because of Lemma~\ref{lem:gap-to-approx}, we can sidestep the inconvenience of having to define and work with nondeterministic algorithms for function problems.
Lemma~\ref{lem:gap-to-approx} is proved in Appendix~\ref{app:fgc}.
\begin{lemma}
\label{lem:gap-to-approx}
  Let $\alpha\ge 1$ and $\beta>0$ be constants, $\rho\in[1,\infty]$, and $T$ be some time complexity. If $\alpha/1$-Diameter on graphs\footnote{If $\rho=\infty$, this is all graphs} satisfying $W_{max}/W_{min}\le \rho$ with time complexity $T$ is not SETH hard for deterministic (randomized) reductions, then $\alpha+\beta$-approximate-Diameter on graphs satisfying $W_{max}/W_{min}\le \rho$ with time complexity $T$ is not SETH-hard for deterministic (randomized) reductions.
\end{lemma}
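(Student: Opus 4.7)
The plan is to prove the contrapositive by giving a deterministic fine-grained reduction from $(\alpha+\beta)$-approximate-Diameter to $\alpha/1$-Diameter on the graph class $\{G : W_{\max}/W_{\min}\le \rho\}$. Because the reduction I construct is deterministic, composition with an assumed randomized fine-grained reduction from \textsc{CNFSAT} to $\alpha/1$-Diameter is again randomized, so the same construction simultaneously handles both the deterministic and randomized cases.

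The reduction performs a geometric search for the diameter. Set $r\defeq 1+\beta/\alpha>1$. Given an input graph $G$ on $n$ vertices, compute $W_{\min}$ and $W_{\max}$, and for each $i=0,1,\ldots,K$ with $T_i\defeq W_{\min}\cdot r^i$ and $K\defeq \ceil{\log_r(\alpha n\rho)}$, form $G_i$ by multiplying every edge weight of $G$ by $\alpha/T_i$. Scaling preserves the ratio $W_{\max}/W_{\min}$, so each $G_i$ lies in the allowed graph class, and the diameter of $G_i$ equals $\alpha D/T_i$, where $D$ is the diameter of $G$. Query the $\alpha/1$-Diameter oracle on each $G_i$.

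The key observation is that each oracle response constrains $D$ even when $G_i$ is off-promise: a YES response rules out $G_i$ being in the NO-promise, giving $\alpha D/T_i>1$, i.e.\ $D>T_i/\alpha$; a NO response rules out the YES-promise, giving $D<T_i$. Since $D\ge W_{\min}=T_0$, the oracle is forced to YES at $i=0$; since $D\le nW_{\max}\le T_K/\alpha$ by the choice of $K$, the oracle is forced to NO at $i=K$. Hence there is a largest index $i^*$ with YES at $i^*$ and NO at $i^*+1$, which yields $T_{i^*}/\alpha<D<rT_{i^*}$. Output $v\defeq rT_{i^*}$: the NO answer gives $v>D$, and $v=rT_{i^*}<r\alpha D=(\alpha+\beta)D$, so $v\in[D,(\alpha+\beta)D]$ is a valid $(\alpha+\beta)$-approximation of $D$.

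The step to watch is the conceptual one: ensuring correctness against the oracle's arbitrary off-promise behavior. This is exactly what the observation in the previous paragraph handles, since both YES and NO responses yield valid one-sided bounds on $D$ regardless of whether the promise is satisfied, so the search succeeds on every valid oracle. The remaining verification is routine: the reduction makes $K+1=O(\log(n\rho))$ oracle calls, each on an instance of size $n$, plus $O(nK)$ time for scaling and threshold bookkeeping; under the standard assumption that $\rho$ is polynomial in $n$ this is $O(\mathrm{polylog}(n))$ queries, so $\sum_{q}T(|q|)^{1-\varepsilon}\le O(\mathrm{polylog}(n))\cdot T(n)^{1-\varepsilon}\le T(n)^{1-\delta}$ for any $\delta<\varepsilon$ and $n$ large, which meets the fine-grained reduction time budget.
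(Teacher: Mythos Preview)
Your approach is the same as the paper's: give a fine-grained reduction from $(\alpha+\beta)$-approximate-Diameter to the promise problem $\alpha/1$-Diameter via a geometric search over candidate thresholds (rescaling edge weights so the target promise is $\alpha$ vs.\ $1$), then take contrapositives using composition of fine-grained reductions. Your explicit handling of off-promise oracle answers---a YES rules out membership in $\Pi_{NO}$, a NO rules out membership in $\Pi_{YES}$---is correct and is exactly the care needed when reducing to a promise oracle.

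There is one genuine gap. Your query count is $K+1$ with $K=\lceil\log_r(\alpha n\rho)\rceil$, but the lemma allows $\rho=\infty$ (and the paper applies the lemma with $\rho=\infty$, e.g.\ in deriving Theorem~\ref{thm:nseth-weight-intro} and Theorem~\ref{thm:nseth-undir}). The assumption ``$\rho$ polynomial in $n$'' is neither in the statement nor standard here; without it your search range is unbounded and the fine-grained budget $\sum_q T(|q|)^{1-\varepsilon}\le T(n)^{1-\delta}$ need not hold. The easy repair is to first run a single-source shortest path in $\tilde O(m)$ time to obtain $\tilde D$ with $\tilde D\le D\le 2\tilde D$, and then do the geometric search only over thresholds in $[\tilde D,2\tilde D]$; this requires $O(\log_r 2)=O_{\alpha,\beta}(1)$ oracle calls, independent of $\rho$ and of the weight magnitudes. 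This is precisely what the paper does (it records the reduction as making $O(\log(1/\beta))=O_\beta(1)$ queries), and with this modification your argument goes through for all $\rho\in[1,\infty]$.
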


\paragraph{Hopsets.}
In a graph $G=(V,E)$, given a set of edges $E'$ on vertices $V$, let $G+E'$ denote the graph $G$ with edges $E'$ added, i.e. $G+E'=(V,E\cup E')$.
In a graph $G$, a \emph{$(\beta,\varepsilon)$-hopset} \cite{Cohen00} is a set of weighted edges $E'$ (sometimes called \emph{shortcuts}) such that, in the graph $G+E'$, all shortest-path distances are the same as in $G$, and for any two vertices $v$ and $v'$, there exists in $G+E'$ a $\beta$-edge path between $v$ and $v'$ of length at most $(1+\varepsilon)\cdot d(v,v')$.
In a graph $G$ with diameter $D$, a \emph{$(\beta,\varepsilon)$-additive-hopset} is a set of weighted edges $E'$ such that, in the graph $G+E'$, all shortest-path distances are the same as in $G$, and for any two vertices $v$ and $v'$, there exists in $G+E'$ a $\beta$-edge path between $v$ and $v'$ of length at most $d(v,v') + \varepsilon D$.
Note that any $(\beta,\varepsilon)$-hopset is by definition a $(\beta,\varepsilon)$-additive-hopset because all shortest path distances are at most $D$.

We use the following hopset constructions.
The first one is a hopset construction for directed graphs.
The construction is implicit in \cite{CaoFR19,CaoFR20,CaoPrivate}, and we explain how in Appendix~\ref{app:hopset}.
We thank Arun Jambulapati for the reference \cite{CaoFR20}.
\begin{lemma}[Implicit in \cite{CaoFR19,CaoFR20,CaoPrivate}]
  Let $\alpha\in (0,1/2)$ and $\varepsilon\in(0,1)$.
  These exists a randomized algorithm with running time $\tilde O_\varepsilon(m^{1+4\alpha})$ such that, given a directed weighted graph $G$ on $n$ vertices and $m$ edges, computes a set of weighted edges $E'$ such that, (1) with probability 1, when $E'$ is added to $G$, all shortest path distances stay the same, and (2) with positive probability, $E'$ forms a $(n^{1/2-\alpha+o(1)},\varepsilon)$-additive-hopset of $G$.
\label{lem:hopset-dir}
\end{lemma}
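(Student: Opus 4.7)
The plan is to derive Lemma~\ref{lem:hopset-dir} by carefully reinterpreting the directed low-hop shortcut construction of Cao, Fineman, and Russell~\cite{CaoFR19, CaoFR20} as an additive-hopset construction for weighted directed graphs. At its core, their algorithm samples vertex hitting sets at geometrically spaced distance scales, runs truncated Dijkstra-style computations from each sampled vertex, and adds a shortcut edge $(u,v)$ whenever some sample $s$ lies on an approximate $u$-to-$v$ shortest path found during the truncated search. I will take this as the starting template and verify two things: that the edges it produces are \emph{weighted} in the hopset sense, and that the tradeoff between hop bound and runtime matches the claimed $(n^{1/2-\alpha+o(1)}, \varepsilon)$ vs. $\tilde O_\varepsilon(m^{1+4\alpha})$.

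First, for claim (1), I would observe that each shortcut $(u,v)$ can be assigned weight $d_G(u,s) + d_G(s,v)$ for the sampled $s$ it was created from; these distances are already computed as byproducts of the truncated Dijkstra runs, and this weight is by definition the length of an actual walk in $G$. Hence adding $E'$ with these weights never creates a shorter path and preserves all shortest-path distances with probability $1$, independently of the random sampling.

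Second, for claim (2), I would execute the CaoFR sampling and truncation scheme at $O(\varepsilon^{-1}\log(nW_{\max}/W_{\min}))$ geometric distance scales $(1+\varepsilon)^i$. Their analysis shows that with positive probability, a random hitting set of size $\tilde O(n^{1-\alpha})$ together with truncated searches of radius tuned to each scale suffices so that every pair of vertices is connected by a path using $\tilde O(n^{1/2-\alpha})$ shortcut edges in $G+E'$. In a weighted graph, quantizing each segment of a shortest path to the nearest scale introduces at most an $\varepsilon$-multiplicative error per segment, which telescopes across a path of true length at most $D$ into a total additive error of at most $\varepsilon D$, as required. The runtime $\tilde O_\varepsilon(m^{1+4\alpha})$ is bounded by summing over the $\tilde O(n^{1-\alpha})$ sampled sources the cost of Dijkstra truncated to depth $\tilde O(n^{2\alpha})$, with a polylogarithmic overhead from iterating over the geometric scales.

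The main obstacle, and the reason this justification is deferred to Appendix~\ref{app:hopset}, is that \cite{CaoFR19,CaoFR20} phrase their result as a reachability or hop-diameter bound rather than as a weighted additive hopset. The care needed is to (i) check that their truncation depths and sampling sizes remain correct when applied scale-by-scale to a weighted graph, (ii) verify that the additive error accumulates only to $\varepsilon D$ and not to something like $\varepsilon D\log n$ without an easy rescaling, and (iii) ensure that the runtime bound is genuinely $m^{1+4\alpha}$ and not $n \cdot m^{1+4\alpha}$ when summed across scales. Each of these is a bookkeeping issue rather than a new algorithmic idea, but together they constitute the technical content that must be checked to make the ``implicit'' claim rigorous.
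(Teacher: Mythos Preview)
The paper's treatment of this lemma is markedly simpler than your plan. Rather than reconstructing the internals of the Cao--Fineman--Russell algorithm, Appendix~\ref{app:hopset} simply observes that \cite{CaoFR20} (as privately confirmed by the authors~\cite{CaoPrivate}) already builds an $(n^{1/2+o(1)}/\rho,\varepsilon)$-hopset on directed weighted graphs in time $\tilde O_\varepsilon(m\rho^4)$ as a subroutine of their parallel approximate single-source shortest-path algorithm; setting $\rho = n^{\alpha}$ and using $n\le m$ yields the stated hop bound and runtime directly. For property~(1), the paper just notes that every edge the algorithm adds has weight exactly $d_G(u,v)$, so no shortest-path distance can decrease regardless of the random sampling. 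No attempt is made to re-derive the construction.

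Your plan to sketch the construction from scratch is more ambitious, but the runtime accounting you give does not work out: summing $\tilde O(n^{1-\alpha})$ truncated Dijkstra runs each of cost $\tilde O(n^{2\alpha})$ gives $\tilde O(n^{1+\alpha})$, not $\tilde O(m^{1+4\alpha})$, and nothing in your outline explains where the exponent $4\alpha$ (i.e., the $\rho^4$ factor in \cite{CaoFR20}) arises. The actual CaoFR construction is considerably more intricate than a single layer of sampling plus truncated search---it inherits a multi-level recursive structure from \cite{JLS19}, which is what produces the $\rho^4$ work overhead---so your sketch would need substantial expansion before it could be made rigorous. Since \cite{CaoFR20} is only an extended abstract and the paper explicitly relies on private confirmation that the hopset (not just shortcut) form exists, treating the lemma as a black-box citation with a parameter setting, as the paper does, is both the honest and the shorter route.
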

The second hopset construction is for undirected graphs. 
\begin{lemma}
  Let $\delta\in(0,1)$ and $\varepsilon\in(0,1)$.
  These exists a randomized algorithm with running time $\tilde O_{\delta,\varepsilon}(m^{1+\delta})$ such that, given an undirected weighted graph $G$ on $n$ vertices and $m$ edges, computes a set of weighted edges $E'$ such that, (1) with probability 1, when $E'$ is added to $G$, all shortest path distances stay the same, and (2) with positive probability, $E'$ forms a $(O_{\delta,\varepsilon}(1),\varepsilon)$-additive-hopset of $G$.
\label{lem:hopset-undir}
\end{lemma}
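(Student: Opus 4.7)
The plan is to build the hopset using a hierarchical sampling construction in the style of Thorup--Zwick and Cohen. Fix $k = \Theta(1/\delta)$, set $S_0 = V$, and for $i=1,\ldots,k$ obtain $S_i$ by independently retaining each vertex of $S_{i-1}$ with probability $n^{-1/k}$, so that $|S_i| = \tilde O(n^{1-i/k})$ with high probability. For each vertex $v$ and level $i < k$, let $p_i(v)$ be the closest vertex of $S_i$ to $v$ and let the bunch $B_i(v) = \{w \in S_i : d_G(v,w) < d_G(v, S_{i+1})\}$. Take $E'$ to be the set of weighted edges $(v,w)$ of weight $d_G(v,w)$ for every $v \in V$, every $i < k$, and every $w \in B_i(v) \cup \{p_i(v)\}$.

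Property (1) of the statement is immediate: every shortcut edge has weight equal to a true $G$-distance, so no distance can decrease. For the runtime, each bunch $B_i(v)$ can be built by a truncated Dijkstra search from $v$ that halts once the frontier has passed $d_G(v, S_{i+1})$; a standard calculation shows $\mathbb{E}[|B_i(v)|] = O(n^{1/k})$, and the total work across all $v$ and all levels is $\tilde O(m \cdot n^{1/k}) = \tilde O(m^{1+\delta})$ for sparse graphs, with the constants hidden inside $\tilde O_{\delta,\varepsilon}$.

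The core of the analysis is property (2). Given $u,v$ with shortest path $P$ in $G$ of length $D' \le D$, partition $P$ into $\Theta(1/\varepsilon)$ equal-length pieces. Within each piece one bounces through the hierarchy in Thorup--Zwick fashion: at level $i$, either the current endpoint's pivot $p_{i+1}$ is close enough that a direct bunch edge covers the piece, or one ascends to level $i+1$. This produces an $O(k)$-hop path per piece whose length either matches the piece length exactly (via bunch edges) or overshoots by at most the piece length (via pivots). Summed over all $\Theta(1/\varepsilon)$ pieces the total overshoot is at most $\varepsilon D$ and the total hop count is $O(k/\varepsilon) = O_{\delta,\varepsilon}(1)$, which is the additive-hopset guarantee. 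Standard arguments show all these events hold simultaneously with positive probability, so the randomized construction succeeds.

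The main obstacle will be making the bouncing argument genuinely additive rather than multiplicative: the classical Thorup--Zwick analysis loses a constant factor per level, whereas for a constant-hop additive hopset one must segment $P$ by length so that errors accumulate additively in $\varepsilon D$ rather than multiplicatively across levels. I would handle this by choosing the segmentation scale and the number of levels jointly as functions of $\delta$ and $\varepsilon$, and by lifting each local bounce to the global path via the triangle inequality; the undirected, symmetric structure of $G$ is what permits the same pivot $p_i(\cdot)$ to serve on both sides of the segment boundary, which is the step that fails in the directed setting of Lemma~\ref{lem:hopset-dir}.
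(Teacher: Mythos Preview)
Your construction---shortcuts from every vertex to its Thorup--Zwick bunches and pivots---is a legitimate route and is known to yield $(1+\varepsilon)$-hopsets (this is essentially the Huang--Pettie result). It is genuinely different from the paper's construction, which samples sets $S_i$ by \emph{edge}-sampling with $|S_i|\sim m^{i/k}$ and, from each $v\in S_i$, runs a \emph{fixed-work} Dijkstra exploring $M_i=m^{(k+1-i)/k}$ edges, adding a shortcut from $v$ to every vertex in the resulting ball $\tilde N_{M_i}(v)$. So your balls are cut by distance to the next sampling level, the paper's by an edge budget; both work.

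Where your proposal has a real gap is the analysis of property~(2). If $P$ is cut into $\Theta(1/\varepsilon)$ equal pieces of length $\Theta(\varepsilon D')$ and each piece's bounce ``overshoots by at most the piece length,'' the total overshoot is $\Theta(1/\varepsilon)\cdot\Theta(\varepsilon D')=\Theta(D')$, not $\varepsilon D$. The classical Thorup--Zwick bounce incurs $(2k-1)$-multiplicative stretch per segment, and equal-length segmentation does not convert that into additive $\varepsilon D$ error. You flag this obstacle in your last paragraph but do not resolve it; the sentence ``choosing the segmentation scale and the number of levels jointly'' is where all the content lives, and nothing you wrote pins it down. There is also a secondary issue: for $i\ge 1$ the bunch $B_i(\cdot)$ contains only $S_i$-vertices, so ``a direct bunch edge covers the piece'' cannot land on an arbitrary path vertex unless you are at level $0$; your per-piece bounce as stated does not terminate at the piece endpoint.

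The paper's fix is a \emph{geometric} scale hierarchy rather than equal pieces. Call $u\in S_i$ ``$i$-expanding'' if $\tilde N_{M_i}(u)$ has radius exceeding $\varepsilon^{i-1}D$. From a path vertex $v_\ell$ one chains $v_\ell=u_k\to u_{k-1}\to\cdots$ down the hierarchy, each hop a shortcut into the previous ball, until the first expanding level $j$; the accumulated detour is $\sum_{i>j}\varepsilon^{i-1}D<2\varepsilon^{j}D$. From the expanding $u_j$ one then jumps in one hop to a path vertex $v_{\ell'}$ with $d(v_\ell,v_{\ell'})>(\varepsilon^{j-1}-2\varepsilon^{j})D$, so the per-jump stretch is $1+O(\varepsilon)$ regardless of which level $j$ was hit. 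Each such jump covers distance at least $\varepsilon^{k}D$, so there are at most $\varepsilon^{-k}$ jumps, giving $O(k\varepsilon^{-k})=O_{\delta,\varepsilon}(1)$ hops and total additive error $O(\varepsilon)D$. The geometric decay of the thresholds $\varepsilon^{i-1}D$ across levels is exactly what turns per-level multiplicative loss into a global $(1+O(\varepsilon))$ factor; your equal-piece segmentation lacks this mechanism, and that is the missing idea.
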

Lemma~\ref{lem:hopset-undir} asks for a hopset construction but with the additional guarantee that, when the randomness is bad, the algorithm fails ``safe'' according to condition (1).
We expect that Lemma~\ref{lem:hopset-undir} can be inferred from existing hopset constructions, but we include a proof in Appendix~\ref{app:hopset} for completeness.

\section{Directed Diameter $2-\varepsilon$ hardness}
\label{sec:2eps}
In this section, we prove Theorem~\ref{thm:2eps-intro}.
\begin{proof}[Proof of Theorem~\ref{thm:2eps}]
  We show that, given a single-set $k$-OV instance $\Phi$ of size $\tilde{n}$ and dimension $d=\Theta(\log n)$, it is possible to construct a graph $G$ in $\tilde O(\tilde{n}^{k-1})$ time such that, if $\Phi$ has a solution, then the diameter is at least $2k-1$, and if $\Phi$ has no solution, then the diameter is at most $k$.

  Let $\Phi$ be given by a set $A\subset\{0,1\}^d$ of $\tilde{n}$ vectors, where $d=\Theta(\log n)$.

  When $a\in\{0,1\}^d$ is a vector and $x\in[d]$, we let $a[x]$ denote the $x$th coordinate $a_x$ of $a$.
  When dealing with elements of $A$, we index coordinates using $a[x]$ rather than $a_x$ for clarity and to stay consistent with the notation in \cite{BackursRSWW18}.
  We use $\bar a$ and $\bar x$ to refer to tuples of vectors and indices, respectively.
  We refer to $\bar a$ in $A^{k-1}$ or $A^{k-2}$ as a \emph{vector-tuple}, and refer to $\bar x\in[d]^{k-1}$ as an \emph{index-tuple}.

  We make the following useful definitions.
  \begin{enumerate}
  \item (Property $(i,L_1)$) For $i=1,\dots,k-1$, an index-tuple $\bar x\in [d]^{k-1}$ of $k-1$ indices, and a vector $a\in\{0,1\}^d$, we say the pair $(a, \bar x)$ \emph{has property $(i,L_1)$} if $a[x_1]=\cdots=a[x_{k-i}]=1$.
  \item (Property $(i,L_k)$) For $i=2,\dots,k$, an index-tuple $\bar x\in [d]^{k-1}$ of $k-1$ indices, and a vector $a\in\{0,1\}^d$, we say the pair $(a,\bar x)$ \emph{has property $(i,L_{k-1})$} if $a[x_{k+1-i}]=\cdots=a[x_{k-1}]=1$.
  \end{enumerate}

  Construct a graph $G$ with vertex set $V=L_0\cup L_1\cup\cdots\cup L_{k-1}$.
  \begin{enumerate}
  \item Create one vertex in $L_0$ for each element of $A^{k-1}$.
  \item For $2\le i \le k-2$, create one vertex in $L_i$ for each element of $(\bar a, \bar x)\in A^{k-2}\times [d]^{k-1}$.
  \item Create one vertex in $L_1$ for each element of $(\bar a, \bar x)\in A^{k-2}\times [d]^{k-1}$ such that $(a_j,\bar x)$ has property $(j,L_1)$ for $1\le j \le k-2$.
  \item Create one vertex in $L_{k-1}$ for each element of $(\bar a, \bar x)\in A^{k-2}\times [d]^{k-1}$ such that $(a_{j-2},\bar x)$ has property $(j,L_{k-1})$ for $3\le j \le k$.
  \end{enumerate}
  We use $(\bar a)_{L_0}$ or $(\bar a, \bar x)_{L_i}$ to denote the corresponding vertex of $L_i$, using the subscript $L_i$ for disambiguation.
  We construct the edges of $G$ as follows.
  \begin{enumerate}
  \item ($L_0\to L_1$) For all vectors $a_1,\dots,a_{k-1}\in A$ and index-tuples $\bar x\in [d]^{k-1}$ such that $\alpha_1=(a_1,\dots,a_{k-2},\bar x)_{L_1}$ exists as a vertex in $L_1$ and such that $a_{k-1}[x_1]=1$, add an edge from vertex $(a_1,\dots,a_{k-1})_{L_0}\in L_0$ to vertex $\alpha_1$.

  \item ($L_{k-1}\to L_0$) For all vectors $a_2,\dots,a_k\in A$ and index-tuples $\bar x\in[d]^{k-1}$ such that $\alpha_{k-1}=(a_3,\dots,a_k,\bar x)_{L_{k-1}}$ exists as a vertex in $L_{k-1}$ and such that $a_2[x_{k-1}]=1$, add an edge from vertex $\alpha_{k-1}\in L_{k-1}$ to vertex $(a_2,\dots,a_k)_{L_0}\in L_0$.

  \item ($L_i\to L_{i+1}$) For any $i=1,\dots,k-2$, any index-tuple $\bar x\in[d]^{k-1}$, and any vector tuples $\bar a$ and $\bar b$ such that $a_j=b_j$ for all $j\neq k-1-i$, add an edge from vertex $\alpha_i=(\bar a,\bar x)_{L_i}$ to vertex $\alpha_{i+1}=(\bar b,\bar x)_{L_{i+1}}$ if both vertices exist.

  \item ($L_i\to L_1$) For every vector-tuple $\bar a\in A^{k-2}$ and every two index-tuples $\bar x,\bar x'\in[d]^{k-1}$, and every $1\le i\le k-1$, add a directed edge from $(\bar a,\bar x)_{L_i}$ to $(\bar a,\bar x')_{L_1}$ if both vertices exist.

  \item ($L_{k-1}\to L_i$) For every vector-tuple $\bar a\in A^{k-2}$ and every two index-tuples $\bar x,\bar x'\in[d]^{k-1}$, and every $1\le i\le k-1$, add a directed edge from $(\bar a,\bar x)_{L_{k-1}}$ to $(\bar a,\bar x')_{L_i}$ if both vertices exist.
  \end{enumerate}
  We refer to the first three types of edges as \emph{vector-changing} edges, as the vector-tuple changes, and the index tuple, if present does not.
  We refer to the latter two types of edges as \emph{index-changing} edges, as the vector-tuple stays constant, while the index tuple changes.

\paragraph{Runtime.}
The number of vertices in each $L_i$ is at most $O(\tilde{n}^{k-1}+\tilde{n}^{k-2}d^{k-1}) = \tilde O(\tilde{n}^{k-1})$.
Each edge has at least one endpoint in $L_1\cup\cdots\cup L_{k-1}$, and each such vertex has at most $O(\tilde{n}+d^k)$ neighbors.
$L_1\cup\cdots\cup L_{k-1}$ have at most $O(\tilde{n}^{k-2}d^{k-1})$ vertices, so the total number of edges is thus at most $O(\tilde{n}^{k-1}d^{2k-1}) \le \tilde O(\tilde{n}^{k-1})$.
Thus, the runtime to produce $G$ from $\Phi$ is at most $\tilde O(\tilde{n}^{k-1})$ as desired.

\tikzset{ 
table/.style={
  matrix of nodes,
  nodes={rectangle,draw=black,text width=1,align=center},
  text height=1,
  nodes in empty cells
  },
texto/.style={font=\footnotesize},
title/.style={font=\small\sffamily}
}
\newcommand\CellText[2]{%
  \node[texto,anchor=east]
  at (mat#1.west)
  {#2};
}
\newcommand\RightCellText[2]{%
  \node[texto,anchor=west]
  at (mat#1.east)
  {#2};
}
\newcommand\AboveCellText[2]{%
  \node[texto]
  at ($(mat#1.north)+(0,0.4)$)
  {#2};
}
\newcommand\BelowCellText[2]{%
  \node[texto]
  at ($(mat#1.south)+(0,-0.4)$)
  {#2};
}
\newcommand\RowTitle[2]{%
\node[title,left=6.3cm of mat#1,anchor=west]
  at (#1.north west)
  {#2};
}
\newcommand\fil{|[fill=blue]|}

\begin{figure}
\begin{center}
\begin{tikzpicture}[scale=0.5, decoration={markings,mark=at position 0.5 with{\arrow{>}}}]
  \node[inner sep=1, outer sep=0] (0) at (240:5) {$a,b,c,d$};
  \node[inner sep=1, outer sep=0] (5) at (300:5) {$e,f,g,h$};
  \node[inner sep=1, outer sep=0] (1) at (180:5) {$(a,b,c,\bar x)$};
  \node[inner sep=1, outer sep=0] (2) at (120:5) {$(a,b,h,\bar x)$};
  \node[inner sep=1, outer sep=0] (3) at (60:5) {$(a,g,h,\bar x)$};
  \node[inner sep=1, outer sep=0] (4) at (0:5) {$(f,g,h,\bar x)$};
  \begin{pgfonlayer}{background}
  \node[inner sep=17, outer sep=0, draw,label={[anchor=south east]$L_1$}] (1a) at (1) {\white{\ldots\,\,}};
  \node[inner sep=17, outer sep=0, draw,label={[anchor=south west]$L_2$}] (2a) at (2) {\white{\ldots\,\,}};
  \node[inner sep=17, outer sep=0, draw,label={[anchor=south east]$L_3$}] (3a) at (3) {\white{\ldots\,\,}};
  \node[inner sep=17, outer sep=0, draw,label={[anchor=south west]$L_4$}] (4a) at (4) {\white{\ldots\,\,}};
  \end{pgfonlayer}
  \draw[] ($(0.north west)+(-1,0.7)$) rectangle ($(5.south east)+(1,-0.7)$) node[pos=0.5,label={[yshift=12]$L_0$}]{};
  \draw[postaction={decorate}] (0)--(1);
  \draw[postaction={decorate}] (1)--(2);
  \draw[postaction={decorate}] (2)--(3);
  \draw[postaction={decorate}] (3)--(4);
  \draw[postaction={decorate}] (4)--(5);

  \begin{scope}[shift=(203:6.5)]
    \matrix[table] (mat0)
    {
     \fil &  &  &  \\
    };
    \CellText{0-1-1}{$d$};
    \AboveCellText{0-1-1}{$x$};
    \AboveCellText{0-1-2}{$y$};
    \AboveCellText{0-1-3}{$z$};
    \AboveCellText{0-1-4}{$w$};
  \end{scope}
  \begin{scope}[shift=(180:9)]
    \matrix[table] (mat1)
    {
    \fil & \fil & \fil & \fil \\
    \fil & \fil & \fil & \\
    \fil & \fil & & \\
    };
    \CellText{1-1-1}{$a$};
    \CellText{1-2-1}{$b$};
    \CellText{1-3-1}{$c$};
    \AboveCellText{1-1-1}{$x$};
    \AboveCellText{1-1-2}{$y$};
    \AboveCellText{1-1-3}{$z$};
    \AboveCellText{1-1-4}{$w$};
  \end{scope}
  \begin{scope}[shift=(0:9.5)]
    \matrix[table] (mat4)
    {
     &  & \fil & \fil \\
     & \fil & \fil & \fil \\
    \fil & \fil & \fil & \fil \\
    };
    \CellText{4-1-1}{$f$};
    \CellText{4-2-1}{$g$};
    \CellText{4-3-1}{$h$};
    \AboveCellText{4-1-1}{$x$};
    \AboveCellText{4-1-2}{$y$};
    \AboveCellText{4-1-3}{$z$};
    \AboveCellText{4-1-4}{$w$};
  \end{scope}
  \begin{scope}[shift=(-22:7)]
    \matrix[table] (mat5)
    {
     &  &  & \fil \\
    };
    \CellText{5-1-1}{$e$};
    \AboveCellText{5-1-1}{$x$};
    \AboveCellText{5-1-2}{$y$};
    \AboveCellText{5-1-3}{$z$};
    \AboveCellText{5-1-4}{$w$};
  \end{scope}

  \begin{scope}[shift=(0:15)]
    \matrix[table] (matA)
    {
     \fil & \fil & \fil & \fil \\
     \fil & \fil & \fil &  \\
     \fil & \fil &  &  \\
     \fil &  &  &  \\
     &  &  & \fil \\
     &  & \fil & \fil \\
     & \fil & \fil & \fil \\
    \fil & \fil & \fil & \fil \\
    };
    \CellText{A-1-1}{$a$};
    \CellText{A-2-1}{$b$};
    \CellText{A-3-1}{$c$};
    \CellText{A-4-1}{$d$};
    \CellText{A-5-1}{$e$};
    \CellText{A-6-1}{$f$};
    \CellText{A-7-1}{$g$};
    \CellText{A-8-1}{$h$};
    \AboveCellText{A-1-1}{$x$};
    \AboveCellText{A-1-2}{$y$};
    \AboveCellText{A-1-3}{$z$};
    \AboveCellText{A-1-4}{$w$};
    \BelowCellText{A-8-2}{Path $abcd\longrightarrow efgh$};
    \RightCellText{A-1-4}{property $(1,L_1)$};
    \RightCellText{A-2-4}{property $(2,L_1)$};
    \RightCellText{A-3-4}{property $(3,L_1)$};
    \RightCellText{A-4-4}{property $(4,L_1)$};
    \RightCellText{A-5-4}{property $(2,L_{k-1})$};
    \RightCellText{A-6-4}{property $(3,L_{k-1})$};
    \RightCellText{A-7-4}{property $(4,L_{k-1})$};
    \RightCellText{A-8-4}{property $(5,L_{k-1})$};
  \end{scope}
\end{tikzpicture}
\end{center}
\caption{Here, we illustrate $G$ when $k=5$.
For clarity, we use $a,b,c,\dots$ instead of $a_1,a_2,a_3,\dots$, and we use $x,y,z,w$ instead of $x_1,x_2,x_3,x_4$, and we use $\bar x\defeq (x,y,z,w)$.
The blue cells indicate coordinates that need to equal 1 for the edge to exist.
The table on the right indicates all coordinates that need to equal 1 for the path from $(a,b,c,d)$ to $(e,f,g,h)$ to exist.
}
\end{figure}

\paragraph{$k$-OV no solution.}
We first show that if there is no $k$-OV solution, then the diameter is at most $k$.
For any tuple $\bar a=(a_1,\dots,a_l)$ of at most $k$ vectors, by assumption, there exists some index $x$ such that $a_1[x]=a_2[x]=\cdots=a_l[x]=1$.
Let $\ind{\bar a}$ denote one such index $x$.

The crucial claim of this section is the following.
\begin{claim}
  \label{cl:nosoln}
  For any two vertices in $L_0$, there is a length $k$ path from one to the other.
\end{claim}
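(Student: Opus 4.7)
The plan is to exhibit, for any two $L_0$-vertices $u = (a_1,\dots,a_{k-1})_{L_0}$ and $w = (b_1,\dots,b_{k-1})_{L_0}$, an explicit length-$k$ path of the form $u \to v_1 \to v_2 \to \cdots \to v_{k-1} \to w$ with $v_i \in L_i$. All intermediate vertices will share a common index-tuple $\bar x = (x_1,\dots,x_{k-1}) \in [d]^{k-1}$, and the vector-tuples will progressively replace the initial $a$-prefix with a $b$-suffix using only the vector-changing edges $L_i \to L_{i+1}$.

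Specifically, I would set $v_1 = (a_1,\dots,a_{k-2},\bar x)_{L_1}$ and, for $2 \le i \le k-1$, set $v_i = (a_1,\dots,a_{k-1-i},b_{k+1-i},\dots,b_{k-1},\bar x)_{L_i}$, so that $v_{k-1} = (b_2,\dots,b_{k-1},\bar x)_{L_{k-1}}$. Consecutive vertices $v_i$ and $v_{i+1}$ differ in exactly position $k-1-i$ of the vector-tuple (replacing $a_{k-1-i}$ by $b_{k-i}$), which matches the $L_i \to L_{i+1}$ edge rule, so the edge exists whenever the endpoints exist. The two bookend edges $u \to v_1$ and $v_{k-1} \to w$ additionally require $a_{k-1}[x_1]=1$ and $b_1[x_{k-1}]=1$, respectively. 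The vertices $v_2,\dots,v_{k-2}$ exist unconditionally, so the only real conditions on $\bar x$ come from requiring $v_1$ and $v_{k-1}$ to exist and from the two bookend edges.

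Collecting the constraints on a single coordinate $x_t$ for $1 \le t \le k-1$: property $(j,L_1)$ on $v_1$ forces $a_j[x_t]=1$ for $j \in \{1,\dots,k-t\}$; property $(j,L_{k-1})$ on $v_{k-1}$ forces $b_j[x_t]=1$ for $j \in \{k-t,\dots,k-1\}$ with $j \ge 2$; the edge $u \to v_1$ contributes $a_{k-1}[x_1]=1$, already absorbed into the $t=1$ constraint; the edge $v_{k-1} \to w$ contributes $b_1[x_{k-1}]=1$, already absorbed into the $t=k-1$ constraint. In every case the constraint on $x_t$ amounts to demanding a common $1$-coordinate for the length-$k$ list of vectors $(a_1,\dots,a_{k-t},b_{k-t},\dots,b_{k-1})$, counted with the conventions that this list is $(a_1,\dots,a_{k-1},b_{k-1})$ when $t=1$ and $(a_1,b_1,\dots,b_{k-1})$ when $t=k-1$. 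Since $\Phi$ has no Single-Set $k$-OV solution, every $k$-tuple of vectors from $A$ admits a common $1$-coordinate, so I may take $x_t = \ind{(a_1,\dots,a_{k-t},b_{k-t},\dots,b_{k-1})}$ for each $t$ independently, and $\bar x = (x_1,\dots,x_{k-1})$ satisfies every constraint simultaneously.

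I expect the main obstacle to be purely bookkeeping: keeping straight the index shift between $L_0$-tuples of length $k-1$ and $L_i$-tuples of length $k-2$ for $1 \le i \le k-1$, matching which position of which $L_i$-tuple is modified by each $L_i \to L_{i+1}$ edge, and verifying that in each case the aggregated constraint list contains exactly $k$ vectors from $A$ so that the no-OV hypothesis applies. Once $\bar x$ is fixed as above, every edge of the path is guaranteed by construction, and the path has the required length $k$.
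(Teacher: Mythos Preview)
Your proposal is correct and follows essentially the same approach as the paper: you build the path $u,v_1,\dots,v_{k-1},w$ with a fixed index-tuple $\bar x$, choosing each $x_t$ as a common $1$-coordinate of the $k$ vectors $(a_1,\dots,a_{k-t},b_{k-t},\dots,b_{k-1})$, which exists precisely by the no-OV hypothesis. Modulo the relabeling $b_i \leftrightarrow b_{i+1}$ (the paper writes the target vertex as $(b_2,\dots,b_k)_{L_0}$), your path and your choice of $\bar x$ coincide with the paper's; the only minor imprecision is that the $L_1$-property on $v_1$ gives $a_j[x_1]=1$ only for $j\le k-2$, so the edge constraint $a_{k-1}[x_1]=1$ is not ``already absorbed'' but rather completes the $t=1$ list to the full $k$ vectors---your final constraint list is nonetheless correct.
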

\begin{proof}
  Let the vertices be $\alpha_0\defeq (a_1,\dots,a_{k-1})_{L_0}$ and $\alpha_k\defeq (b_2,\dots,b_k)_{L_0}$.
  For $i=1,\dots,k-1$, let $x_i = \ind{a_1,\dots,a_{k-i},b_{k-i+1},\dots,b_k}$ and $\bar x = (x_1,\dots,x_{k-1})$.
  By definition of $x_1,\dots,x_{k-1}$, for $i=1,\dots,k-1$, we have $a_i[x_1]=\cdots=a_i[x_{k-i}]=1$, so the pair $(a_i,\bar x)$ satisfies property $(i,L_1)$.
  By definition of $x_1,\dots,x_{k-1}$, for $i=2,\dots,k$, we have $b_i[x_{k+1-i}]=\cdots=b_i[x_{k-1}]=1$, so the pair $(b_i,\bar x)$ satisfies property $(i,L_{k-1})$.
  Thus, $\alpha_1\defeq (a_1,\dots,a_{k-2},\bar x)_{L_1}$ is a valid vertex in $L_1$ and $\alpha_{k-1}\defeq (b_3,\dots,b_k)_{L_{k-1}}$ is a valid vertex in $L_{k-1}$, and furthermore there are edges from $\alpha_0$ to $\alpha_1$ and from $\alpha_{k-1}$ to $\alpha_k$.

  For $i=2,\dots,k-2$, vertex $\alpha_i\defeq (a_1,\dots,a_{k-1-i},b_{k-i+2},\dots,b_k, \bar x)_{L_i}$ exists as a vertex in $L_i$.
  For $i=1,\dots,k-2$, $\alpha_i$ and $\alpha_{i+1}$ have the same index-tuple and have vector tuples that differ only in the $k-1-i$'th vector of the vector-tuple.
  Thus, there is an edge from $\alpha_i$ to $\alpha_{i+1}$ for all $i=1,\dots,k-2$.
  Hence, $\alpha_0,\dots,\alpha_k$ is a length $k$ path from $\alpha_0$ to $\alpha_k$, as desired.
\end{proof}
\begin{claim}
  \label{cl:neigh}
  For any vertex $v\in V$ in our graph, there exist vertices $u,w\in L_0$ such that $N^{out}(u)\subset N^{out}(v)$ and $N^{in}(w)\subset N^{in}(v)$. 
\end{claim}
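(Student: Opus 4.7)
The plan is to do a case analysis on which layer $L_i$ contains $v$, explicitly exhibiting the desired $u, w \in L_0$. The key observation driving the argument is that a vertex in $L_0$ has very few edges --- only type~1 out-edges to $L_1$ and only type~2 in-edges from $L_{k-1}$ --- while a vertex $v = (\bar a, \bar x)_{L_i}$ with $i\ge 1$ enjoys a rich set of ``index-changing'' edges: via type~4 edges, $v$ has out-edges to \emph{every} existing vertex in $L_1$ with vector-tuple $\bar a$, and via type~5 edges, $v$ has in-edges from \emph{every} existing vertex in $L_{k-1}$ with vector-tuple $\bar a$.

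If $v\in L_0$, simply take $u=w=v$. Otherwise, write $v = (\bar a, \bar x)_{L_i}$ with $\bar a = (a_1,\dots,a_{k-2})\in A^{k-2}$ and $1\le i \le k-1$, and propose
\[
u \defeq (a_1,\dots,a_{k-2},c)_{L_0}, \qquad w \defeq (b,a_1,\dots,a_{k-2})_{L_0},
\]
for arbitrarily chosen $b,c\in A$. For the out-neighborhood containment, note that every out-edge of $u$ is a type~1 edge to some $(a_1,\dots,a_{k-2},\bar y)_{L_1}$ (with the vertex existing in $L_1$ and $c[y_1]=1$), whereas the type~4 out-edges of $v$ reach every vertex in $L_1$ with vector-tuple $\bar a$ that exists, so $N^{out}(u)\subseteq N^{out}(v)$. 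Symmetrically, every in-edge of $w$ is a type~2 edge from some $(a_1,\dots,a_{k-2},\bar y)_{L_{k-1}}$ satisfying $b[y_{k-1}]=1$, whereas the type~5 in-edges of $v$ originate from every existing vertex in $L_{k-1}$ with vector-tuple $\bar a$, so $N^{in}(w)\subseteq N^{in}(v)$.

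I expect the main (minor) obstacle to be careful bookkeeping in the boundary cases $i=1$ and $i=k-1$. When $i=1$, one must check that the type~4 definition indeed permits $L_1\to L_1$ self-edges between distinct index-tuples with the same vector-tuple (it does, since $i=1$ is allowed in the definition); when $i=k-1$, one similarly uses that the type~5 definition permits $L_{k-1}\to L_{k-1}$ edges. As long as these boundary edge families are interpreted correctly, the subset relations follow immediately from the description of the edge sets, and no existence condition on index-tuples causes trouble, since an out-edge of $u$ to a nonexistent $L_1$ vertex does not exist in the first place.
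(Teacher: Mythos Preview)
Your proposal is correct and follows essentially the same approach as the paper's proof: both handle $v\in L_0$ trivially, and for $v=(\bar a,\bar x)_{L_i}$ with $i\ge 1$ both set $u=(a_1,\dots,a_{k-2},c)_{L_0}$ and $w=(b,a_1,\dots,a_{k-2})_{L_0}$ for an arbitrary vector, then observe that the type~4 and type~5 index-changing edges of $v$ cover all the type~1 out-neighbors of $u$ and all the type~2 in-neighbors of $w$. Your explicit check of the boundary cases $i=1$ and $i=k-1$ is a nice clarification that the paper leaves implicit.
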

\begin{proof}
  If $v\in L_0$, simply set $u=w=v$.
  Otherwise, $v\in L_1\cup\cdots\cup L_{k-1}$ has some vector tuple $(a_1,\dots,a_{k-2})\in A^{k-2}$ and index tuple $\bar x\in [d]^{k-1}$.
  Let $b\in A$ be an arbitrary vector and let $u = (a_1,\dots,a_{k-2},b)_{L_0}$ and $w=(b,a_1,\dots,a_{k-2})_{L_0}$.
  The only edges out of $L_0$ are to $L_1$, so all out-neighbors of $u$ are in $L_1$ and of the form $(a_1,\dots,a_{k-2},\bar x')_{L_1}$.
  All such vertices are also out-neighbors of $v$ as desired.
  Similarly, the only edges into $L_0$ are from $L_{k-1}$, so all in-neighbors of $w$ are in $L_{k-1}$ and of the form $(a_1,\dots,a_{k-2},\bar x')_{L_{k-1}}$.
  All such vertices are also in-neighbors of $v$ as desired.
\end{proof}

Let $v,v'\in V$.
By Claim~\ref{cl:neigh}, there exists $u,w'\in L_0$ such that $N^{out}(u)\subset N^{out}(v)$ and $N^{in}(w')\subset N^{in}(v')$.
By Claim~\ref{cl:nosoln}, there exists a length $k$ path $u,v_1,v_2,\dots,v_{k-1},w'$ from $u$ to $w'$.
Since $v_1\in N^{out}(u)$, we must have $v_1\in N^{out}(v)$, and since $v_{k-1}\in N^{in}(w')$, we must have $v_{k-1}\in N^{in}(v')$.
Thus, $v,v_1,\dots,v_{k-1},v'$ is a length $k$ path from $v$ to $v'$.
This holds for any vertices $v$ and $v'$, so the diameter is at most $k$, as desired.
\paragraph{$k$-OV solution.}

We now show that, if there is a $k$-OV solution, then the diameter is at least $2k-1$.
Let $a_1,\dots,a_k$ be the $k$ orthogonal vectors, i.e. $a_1[x]\cdots a_k[x]=0$ for all $x\in[d]$.
We claim that the distance from $\alpha_0 = (a_1,\dots,a_{k-1})_{L_0}\in L_0$ to $\alpha_k = (a_2,\dots,a_k)_{L_0}\in L_0$ is at least $2k-1$.

Suppose for contradiction there exists a length $\ell\le 2k-2$ path $\mathcal{P}$ from $\alpha_0$ to $\alpha_k$.
Let $\beta_0=\alpha_0,\beta_1,\dots,\beta_{\ell}=\alpha_k$ denote the vertices of path $\mathcal{P}$.
We note the following two observations about $G$ that follow from the edge definitions.
\begin{fact}
  \label{fact:g-1}
  The only edges in graph $G$ that go from vertex subset $L_i$ to vertex subset $L_j$ for $0\le i < j\le k-1$ are edges from $L_i$ to $L_{i+1}$ from $i=0,\dots,k-1$.
\end{fact}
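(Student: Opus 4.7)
The plan is a direct case-analysis over the five edge types specified in the construction of $G$.

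First I would enumerate: the edges of $G$ fall into five groups, (1) $L_0 \to L_1$, (2) $L_{k-1} \to L_0$, (3) $L_i \to L_{i+1}$ for $1 \le i \le k-2$, (4) $L_i \to L_1$ for $1 \le i \le k-1$ (index-changing), and (5) $L_{k-1} \to L_i$ for $1 \le i \le k-1$ (index-changing). The claim is about forward edges, i.e. edges whose source index $i$ is strictly smaller than the target index $j$.

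Then I would check each group in turn. Group (1) gives edges with $(i,j)=(0,1)$, which is of the form $L_i \to L_{i+1}$. Group (3) gives edges with $(i,j) = (i,i+1)$ for $1\le i\le k-2$, again of the claimed form. Group (2) consists of edges from $L_{k-1}$ to $L_0$, so the source index $k-1$ is at least as large as the target index $0$ (they are equal only if $k=1$, which is excluded); these are not forward edges. For group (4), the target is $L_1$, so a forward edge would require source index $<1$, but the source index is in $\{1,\dots,k-1\}$; these are never forward edges. For group (5), the source is $L_{k-1}$, so a forward edge would require target index $>k-1$, but the target index lies in $\{1,\dots,k-1\}$; again never forward.

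This exhausts the edge types, so every forward edge is of the form $L_i \to L_{i+1}$, which is exactly what Fact~\ref{fact:g-1} asserts. I do not expect any obstacle — the whole argument is simply reading off the edge definitions and comparing source and target indices — so the proof should be a short paragraph of case checking with no further ingredients.
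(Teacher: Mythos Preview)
Your proposal is correct and matches the paper's approach: the paper simply states that this fact ``follows from the edge definitions'' without further argument, and your case analysis over the five edge types is precisely the verification of that observation.
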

\begin{fact}
  \label{fact:g-2}
  In graph $G$, any edge from $L_i$ to $L_j$ for $k-1\ge i \ge j\ge 0$ satisfy either $i=k-1$ or $j=1$.
\end{fact}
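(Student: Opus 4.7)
The plan is to prove Fact~\ref{fact:g-2} by direct case analysis over the five edge types used to define $G$. Recall the types are: (i) $L_0 \to L_1$ edges, (ii) $L_{k-1} \to L_0$ edges, (iii) $L_i \to L_{i+1}$ edges for $1 \le i \le k-2$, (iv) $L_i \to L_1$ edges for $1 \le i \le k-1$, and (v) $L_{k-1} \to L_i$ edges for $1 \le i \le k-1$. Since the statement restricts attention to edges from $L_i$ to $L_j$ with $i \ge j$, the first step is to discard the types whose source-index is strictly less than the target-index.

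Types (i) and (iii) both move from $L_i$ to $L_{i+1}$, so $i < j$ and these types contribute nothing to the cases we must check. That leaves types (ii), (iv), and (v), and for each of these I would simply read off the claim. For type (ii), the source lies in $L_{k-1}$, so $i = k-1$. For type (iv), the target is $L_1$, so $j = 1$. For type (v), the source is again $L_{k-1}$, so $i = k-1$. In each of the three relevant cases we obtain $i = k-1$ or $j = 1$, which is exactly the conclusion.

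There is no real obstacle to overcome: the fact is a bookkeeping statement about the construction, and the only thing to be careful about is that the enumeration of edge types is exhaustive (so in writing the proof I would explicitly state ``these are all the edges of $G$'' before beginning the case analysis). The companion Fact~\ref{fact:g-1} admits the same treatment: types (ii), (iv), and (v) all have $i \ge j$, so only types (i) and (iii) produce edges with $i < j$, and together they are exactly the $L_i \to L_{i+1}$ edges.
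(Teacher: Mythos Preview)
Your proposal is correct and matches the paper's approach: the paper states Fact~\ref{fact:g-2} as an immediate observation ``that follow[s] from the edge definitions'' and does not spell out a separate proof, so your exhaustive case analysis over the five edge types is exactly the intended verification.
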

The only edges to $L_0$ are those from $L_{k-1}$, and any path from $L_0$ to $L_{k-1}$ has length at least $k-1$ by Fact~\ref{fact:g-1}.
Thus, two vertices in path $\mathcal{P}$ in $L_0$ must be separated by distance at least $k$.
As the path $\beta_0,\dots,\beta_\ell$ starts and ends in $L_0$ and has length at most $2k-2$, it follows that the path visits $L_0$ only at the first vertex $\beta_0$ and the last vertex $\beta_\ell$.
Thus, $\beta_1,\dots,\beta_{\ell-1}$ lie entirely in $L_1,\dots,L_{k-1}$.

By Fact~\ref{fact:g-1}, we must have $\beta_1\in L_1$.
Let $r$ be the largest integer such that vertex $\beta_r$ is in $L_1$.
By Fact~\ref{fact:g-1}, there are at least $k-1$ more vertices on the path $\mathcal{P}$, so $r\le \ell-(k-1)\le k-1$.
By Fact~\ref{fact:g-2}, as none of $\beta_{r+1},\dots,\beta_{r+k-2}$ are in $L_1$, we must have $\beta_{r+i}\in L_{i+1}$ for $i=1,\dots,k-2$.

Let $s$ denote the largest integer such that the subpath $\beta_0,\dots,\beta_r$ has a vertex in subset $L_s$, and let $t$ denote the smallest integer such that the subpath $\beta_{r+k-2},\dots,\beta_\ell$ has a vertex in subset $L_t$.
By Fact~\ref{fact:g-1}, the subpath $\beta_1,\dots,\beta_{r-1}$ must stay in $L_1\cup\cdots\cup L_{r-1}$. 
Hence, if $r\ge 2$, then $s\le r-1$, and $s=r=1$ otherwise.
Similarly, the path $\beta_{r+k-1},\dots,\beta_{\ell-1}$ has length $\ell-1-(r+k-1)\le k-r-2$ and thus must stay in $L_{r+1}\cup\cdots\cup L_{k-1}$.
Hence, if $r+k-1 \le \ell - 1$, then $t\ge r+1$, and $t=k-1$ otherwise.
Thus, if $r\ge 2$ and $r+k-1\le \ell-1$, we must have $t-s\ge 2$.

We know $\beta_1=(a_1,\dots,a_{k-2},\bar x)$ for some $\bar x$.
All vertices in $L_1,\dots,L_s$ are of the form $(a_1',\dots,a_{k-2}',\bar x')_{L_i}$, and any of the edges between them does not change the vectors $a_1',\dots,a_{k-1-s}'$: $L_i\to L_1$ edges do not change any of the vectors, and edges from $L_{i}\to L_{i+1}$  only changes $a_{k-1-i}'$.
Hence, we have that $\beta_r=(a_1,\dots,a_{k-1-s},a_{k-s}',\dots,a_{k-2}',\bar y)_{L_1}\in L_1$ for some vectors $a_{k-s}',\dots,a_{k-2}'$ and index tuple $\bar y$.
Because $\beta_r$ exist as a vertex in $L_1$, the pair $(a_i,\bar y)$ has property $(i,L_1)$ for $i\le k-1-s$.

As the index tuple of $\beta_r$ is $\bar y$, the path $\beta_r,\dots,\beta_{r+k-2}$ only uses $L_i\to L_{i+1}$ edges, and the index tuple does not change in $L_i\to L_{i+1}$ edges, the index tuple of $\beta_{r+k-2}$ is also $\bar y$.

We also have $\beta_{\ell-1}=(a_3,\dots,a_{k}, \bar x')$ for some $\bar x'$.
All vertices in $L_t,\dots,L_{k-1}$ are elements $(a_3',\dots,a_k',\bar x')$ of $A^{k-2}\times [d]^{k-1}$, and any of the edges between them does not change the vectors $a_{k+2-t}',\dots,a_k'$: $L_{k-1}\to L_i$ edges do not change any of the vectors, and $L_{i}\to L_{i+1}$ edges only change $a_{k+1-i}'$ (here we indexed vectors starting from $a_3'$). Hence, for some $a_{3}'',\dots,a_{k+1-t}''$, we have that $\beta_{r+k-2}$ is the vertex $(a_3'',\dots,a_{k+1-t}'',a_{k+2-t},\dots,a_k,\bar y)_{L_{k-1}}\in L_{k-1}$.
Because $\beta_{k-2+r}$ exists as a vertex in $L_{k-1}$, the pair $(a_{i},\bar y)$ has property $(i,L_{k-1})$ for $i\ge k+2-t$.

When $1\le i\le k-1-s$, we saw that $(a_i,\bar y)$ has property $(i,L_1)$, so $a_i[y_j]=1$ for all $j\le k-i$ and in particular for all $j\le k-(k-1-s) = s+1$.
Similarly, when $k+2-t\le i \le k$, we saw that $(a_i,\bar y)$ has property $(i,L_{k-1})$, so $a_i[y_j] = 1$ for $j\ge k+1-i$, and in particular for all $j\ge k+1-(k+2-t) = t-1$.
If $t-s\ge 2$, then we have that $a_1[j]=a_2[j]=\cdots=a_k[j] = 1$ for all $t-1\le j \le s+1$, a contradiction of orthogonality of $a_1,\dots,a_k$.

Thus, we must have $t-s\le 1$.
By an earlier argument, if $r>1$ and $r+k-2<\ell-1$, then $t-s\ge 2$.
Hence, we must either have $r=1$ or $r+k-2= \ell-1$.

If $r=1$, then $s=1$ and $t=2$, and by above, $(a_i,\bar y)$ has property $(i,L_1)$ for $i\le k-1-s$, and in particular for all $i\le k-2$.
Additionally, $(a_i,\bar y)$ has property $(i,L_{k-1})$ for $i=k$.
Thus, $a_1[y_1]=a_2[y_1]=\cdots=a_{k-2}[y_1] = 1$ and $a_k[y_1]=1$, respectively.
Furthermore, since $r=1$, we have $\beta_0 = (a_1,\dots,a_{k-1})_{L_0}$ and $\beta_1=(a_1,\dots,a_{k-2},\bar y)_{L_1}$ are the first two vertices of the path $\mathcal{P}$, so the first edge of the path implies additionally that $a_{k-1}[y_1] = 1$.
We conclude $a_1[y_1]=\cdots=a_k[y_1]=1$, contradicting orthogonality.

If $r+k-2=\ell-1$, then $s\le r-1 \le \ell-k \le k-2$ and $t=s+1\le k-1$.
By above, $(a_i,\bar y)$ has property $(i,L_1)$ for $i\le k-1-s$, and in particular $i=1$.
Additionally, $(a_i,\bar y)$ has property $(i,L_{k-1})$ for $i\ge k+2-t$, and in particular for all $3\le i\le k$.
We conclude that $a_1[y_{k-1}] = 1$ and $a_3[y_{k-1}] = \cdots = a_{k}[y_{k-1}] = 1$.
Additionally, since $r = \ell-1$, we have that $\beta_{\ell-1}=(a_3,\dots,a_k,\bar y)_{L_{k-1}}$ and $\beta_\ell=(a_2,\dots,a_k)_{L_0}$ are the last two vertices of the path $\mathcal{P}$, so the edge between them implies additionally that $a_2[y_{k-1}]=1$.
We thus have $a_1[y_{k-1}]=\cdots=a_k[y_{k-1}]=1$, contradicting orthogonality.

We have thus shown that in the three cases that the path $\beta_0,\dots,\beta_\ell$ could satisfy, (1) $r>1$ and $r+k-2<\ell-1$, (2) $r=1$, and (3) $r+k-2=\ell-1$, there is a contradiction.
This covers all cases, we have found a contradiction, so there cannot exist a length $\ell\le 2k-2$ path from $\alpha_0$ to $\alpha_k$, as desired.
\end{proof}

\section{Non-reducibility via hopsets}
\label{sec:nseth-weight}

\subsection{Directed weighted graphs}
By Lemma~\ref{lem:gap-to-approx}, to prove Theorem~\ref{thm:nseth-weight-intro} regarding the non-reducibility of approximate Diameter, it suffices to prove the following theorem on the non-reducibility of the corresponding promise problem.
\begin{theorem}
  Let $D$ be a positive real number, $\varepsilon>0$ and $D'=(5/3+\varepsilon)D$.
  For all $\delta>0$, under NSETH (NUNSETH), $D'/D$-Diameter on directed weighted graphs with time complexity $m^{19/13(1+\delta)}$ is \emph{not} SETH-hard for deterministic (randomized) reductions.
\label{thm:nseth-weight}
\end{theorem}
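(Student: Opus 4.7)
The plan is to establish that $D'/D$-Diameter on directed weighted graphs is a member of $\nconplus[m^{19/13+o(1)}]$ and then apply Lemma~\ref{lem:nseth} and Lemma~\ref{lem:nunseth} to conclude the non-reducibility results for deterministic and randomized reductions respectively. The N-verifier is immediate: nondeterministically guess a pair $(u,v)$ and run Dijkstra from $u$ to certify $d(u,v)\ge D'$ in $\tilde O(m)$ time.

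The coN-verifier, which must certify $d(u,w) < D'$ for all pairs when the true diameter is at most $D$, is the heart of the argument. First, deterministically build a $(\beta,\varepsilon')$-additive hopset $E'$ via Lemma~\ref{lem:hopset-dir} with $\alpha = 3/26$, giving $\beta = n^{5/13+o(1)}$ in construction time $\tilde O(m^{1+12/26}) = \tilde O(m^{19/13})$, where $\varepsilon' = \Theta(\varepsilon)$ is chosen sufficiently small. Set $\mu = m^{6/13}$ and use the dichotomy that either (A) every vertex $v$ satisfies both $|N^{in}_{\le D/3}(v)| \ge \mu$ and $|N^{out}_{\le D/3}(v)| \ge \mu$, or (B) some vertex $v$ has an in- or out-ball of radius $D/3$ of size at most $\mu$. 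In case (A), a uniformly random set of $\tilde O(n/\mu) = \tilde O(m^{7/13})$ vertices is, with high probability, a set $X$ that hits every such in- and out-ball; the nondeterministic certificate consists of $X$ together with, for each pair $(x,x')\in X\times X$, a $\beta$-hop path in $G+E'$ from $x$ to $x'$ of weight at most $D + \varepsilon' D$. Verification costs $\tilde O(m)$ for multi-source Dijkstra (from $X$ in both the original and reversed graph) to check coverage, plus $\tilde O(|X|^2\beta) = \tilde O(m^{14/13 + 5/13}) = \tilde O(m^{19/13})$ for the $|X|^2$ path-weight verifications. The triangle inequality then yields $d(u,w) \le D/3 + (D+\varepsilon' D) + D/3 = 5D/3 + \varepsilon' D$ for all pairs, which is strictly below $D'$ for $\varepsilon' < \varepsilon$. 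In case (B), the small ball $Y$ (of size at most $\mu = m^{6/13}$) is guessed along with Dijkstra trees from and to each $y \in Y$, computed in $\tilde O(|Y|\cdot m) = \tilde O(m^{19/13})$ time, and used to certify that $Y$ serves as a useful intermediate set for all pairs.

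The main obstacle is the weighted-graph subtlety in case (B): in unweighted graphs, the shortest $u \to v$ path contains a vertex in $Y = N^{in}_{\le D/3}(v)$ at precise distance $D/3$ from $v$, giving $d(u,y) \le 2D/3$ and a clean diameter bound via the computed eccentricities of $Y$; but in weighted graphs a single path edge can have weight exceeding $\varepsilon' D$, so no such exact-distance vertex need exist. We plan to handle this by (i) sweeping the threshold through a geometric progression of candidate radii with increment $\varepsilon' D$ and invoking a pigeonhole argument to identify a radius where the corresponding ball has no boundary crossings within that increment, and (ii) extracting approximate intermediates from the $\beta$-hop paths in $G+E'$, which introduce only additive error $\varepsilon' D$ that is absorbed by the $\varepsilon D$ slack in the target approximation $5D/3+\varepsilon D$. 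Together, these ingredients yield a coN-verifier running in $\tilde O(m^{19/13})$, and Lemma~\ref{lem:nseth} (resp. Lemma~\ref{lem:nunseth}) then delivers non-reducibility at time complexity $m^{19/13(1+\delta)}$ for any $\delta>0$.
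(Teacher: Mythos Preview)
Your overall strategy—hopset plus structural dichotomy plus nondeterministic verification—is the paper's approach, and your choice of hopset parameters ($\alpha=3/26$, yielding $\beta=n^{5/13+o(1)}$ and construction time $\tilde O(m^{19/13})$) is exactly right. Case~(A) is also correct.

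The genuine gap is in case~(B). If $Y=N^{in}_{\le D/3}(v)$ is small, you need $d(u,Y)\le 2D/3$ for every $u$ to conclude that routing through $Y$ yields the $5D/3$ bound. In weighted graphs this fails: a single heavy edge of weight close to $D$ can land directly at $v$, so the only vertex of $Y$ on the shortest $u\to v$ path may be $v$ itself, giving only $d(u,Y)\le D$. Your fix~(i) does not repair this: even if you find a radius $r$ with $N^{in}_{\le r}(v)=N^{in}_{\le r+\varepsilon' D}(v)$, you still get no lower bound on $d(y,v)$ for the first path vertex $y$ inside the ball, hence no useful upper bound on $d(u,y)=d(u,v)-d(y,v)$. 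Fix~(ii) does not help either, since hopset edges can themselves have weight up to $D$.

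The clean repair, which is what the paper does (Lemma~\ref{lem:struct}), is to base the dichotomy on the \emph{edge}-ball $E^{in}_{\le D/3}(v)=\{(a,b):d(b,v)\le D/3\}$ rather than the vertex ball. If $|E^{in}_{\le D/3}(v)|$ is small, take $Y$ to be the vertices \emph{incident} to this edge set. On any shortest $u\to v$ path, the last vertex $v_i$ with $d(v_i,v)>D/3$ has its outgoing edge in $E^{in}_{\le D/3}(v)$, so $v_i\in Y$; and $d(u,v_i)=d(u,v)-d(v_i,v)<D-D/3=2D/3$ directly, with no threshold sweeping needed. Conversely, if every edge-ball has size at least $\mu$, a random set of $\tilde O(m/\mu)$ edges hits them all, and their endpoints form your set $X$. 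Balancing $(m/\mu)^2\beta$ against $\mu\cdot m$ again gives $\mu\approx m^{6/13}$.

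One smaller issue: your N-verifier accepts only when some pair has $d(u,v)\ge D'$, so on graphs with diameter strictly between $D$ and $D'$ it never outputs YES. Since your coN-verifier is calibrated to the diameter-$\le D$ case, it may not be able to output NO on such inputs either, violating Condition~3 of $\nconplus$ (Definition~\ref{def:nconplus}). The paper's fix is to have $\mathcal{A}_N$ accept whenever the guessed vertex has eccentricity strictly greater than $D$; then every non-$\Pi_{NO}$ instance is handled by $\mathcal{A}_N$, and Condition~3 holds automatically.
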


Theorem~\ref{thm:nseth-weight} in fact follows from the following more general theorem, which states that non-reducibility results follow from hopset constructions.
Using Theorem~\ref{thm:nseth-weight-conj} with the state-of-the-art hopset constructions \cite{CaoFR20}, we obtain Theorem~\ref{thm:nseth-weight}.
Furthermore, faster and better hopset constructions would give even stronger non-reducibility results.
\begin{theorem}
  Let $\varepsilon>0$ be fixed and $\beta$ be a real number, possibly depending on $n$ and $m$.
  Suppose there exists an algorithm with running time $T=T(n,m)$ that, given a directed weighted graph $G$ on $n$ vertices and $m$ edges, computes a set of weighted edges $E'$ such that, (1) when $E'$ is added to $G$, all shortest path distances stay the same, and (2) $E'$ forms a $(\beta,\varepsilon/2)$-additive hopset of $G$.
  
  Let $D$ be a real number and $k\ge 2$ be a positive integer, and $D' = (2-\frac{1}{k}+\varepsilon)D$.
  Under NSETH (NUNSETH), for any $\delta>0$, $D'/D$-Diameter on directed weighted graphs with time complexity $(m^{1+1/k}\beta^{1-2/k}+T)^{1+\delta}$ is \emph{not} SETH-hard for deterministic (randomized) reductions.
\label{thm:nseth-weight-conj}
\end{theorem}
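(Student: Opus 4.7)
The plan is to apply Lemma~\ref{lem:nseth} (and Lemma~\ref{lem:nunseth} for randomized reductions), which reduces the non-reducibility claim to showing that $D'/D$-Diameter on directed weighted graphs lies in $\nconplus[T']$ for $T' = \tilde O(m^{1+1/k}\beta^{1-2/k} + T)$. I will construct two nondeterministic algorithms, each running in time $T'$. The N-side, which must accept on graphs of diameter $\geq D'$, is straightforward: nondeterministically guess a vertex $v$, run Dijkstra from $v$ in $\tilde O(m)$ time, and accept iff $\epsilon^{out}(v) \geq D'$.

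For the coN-side, which must accept on graphs of diameter $\leq D$ while rejecting those of diameter $\geq D'$, I would first deterministically compute the $(\beta, \varepsilon/2)$-additive hopset $E'$ via the hypothesized algorithm in time $T$ and set $G' = G + E'$. Next I would nondeterministically guess a ``covering set'' $S \subseteq V$ of size $s = \tilde\Theta(m^{(k+1)/(2k)}/\beta^{1/k})$, together with, for each ordered pair $(p,q) \in S \times S$, a sequence of at most $\beta$ edges in $G'$ from $p$ to $q$. Deterministic verification has two components: (i) confirm each guessed sequence is a valid $\beta$-edge path in $G'$ of length at most $D + \varepsilon D/2$, in total time $\tilde O(s^2 \beta) = \tilde O(m^{1+1/k}\beta^{1-2/k})$; and (ii) via two multi-source Dijkstra runs in $G$ (forward and backward from $S$, each $\tilde O(m)$), confirm that every $u \in V$ satisfies $d(u,S), d(S,u) \leq r$ for the threshold $r = (1-1/k)D/2 + O(\varepsilon D)$. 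Acceptance forces, for any pair $u,w$ with guessed $s_u, s_w \in S$, the triangle-inequality chain
\[
d(u,w) \leq d(u,s_u) + d(s_u,s_w) + d(s_w,w) \leq r + (D + \varepsilon D/2) + r \leq (2-1/k+\varepsilon)D = D',
\]
certifying diameter $< D'$.

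The hard part will be the structural lemma: whenever $G$ has diameter $\leq D$, a covering set $S$ of the claimed size actually exists. I plan to prove this via the standard ball-size dichotomy. Either some vertex $v^*$ has an appropriately small in-ball, in which case $S = N^{in}_{\leq D-r}(v^*)$ works via a midpoint argument on shortest paths into $v^*$ (picking the vertex at distance $\approx D-r$ from $v^*$, which lies at distance $\leq r$ from $u$, with the additive $\varepsilon$-slack absorbing discreteness from edge weights); otherwise all such balls are large and a random sample of the right size hits every out-ball simultaneously with positive probability, which is enough since nondeterminism requires only existence. A symmetric construction on out-balls (possibly via a second covering set) handles the $d(S,u) \leq r$ direction. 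The hopset's additive-error guarantee is what ensures the $\beta$-edge paths in step (i) always exist with length at most $D + \varepsilon D/2$, and balancing $s^2\beta$ against the hopset construction cost $T$ is what produces the precise runtime $m^{1+1/k}\beta^{1-2/k} + T$ in the theorem statement.
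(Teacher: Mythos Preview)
Your overall scaffolding (reduce to $\nconplus$, easy $\mathcal{A}_N$, hopset plus guessed short paths for $\mathcal{A}_{coN}$) is correct, but the structural lemma you intend to prove is false for $k\ge 2$, and this breaks the coN-side.

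\textbf{Main gap.} You want a set $S$ (or a pair $S_{in},S_{out}$) of a \emph{single} size $s\approx m^{(k+1)/(2k)}\beta^{-1/k}$ such that $d(u,S_{in})\le r$ and $d(S_{out},u)\le r$ for all $u$, with $r\approx(1-1/k)D/2$. The ball-size dichotomy at a single scale does not give this. Applied to in-balls of radius $D-r$, it says: \emph{either} some $v^*$ has a small in-ball, yielding $S_{in}$ with $d(u,S_{in})\le r$, \emph{or} all such in-balls are large and a random sample yields a set $Z$ with $d(Z,v)\le D-r$ --- a cover in the \emph{out} direction at radius $D-r$, not $r$. The symmetric out-ball dichotomy has the same two-branch structure. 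If both dichotomies land on the ``wrong'' branch, you end up with an in-cover of radius $D-r$ and an out-cover of radius $D-r$, and the triangle inequality yields only $d(u,w)\le (D-r)+D+(D-r)=(2+1/k)D$, which is useless. For the mixed branches you get two covers in the \emph{same} direction and no cover in the other. No single-threshold argument avoids this.

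The paper's fix is a multi-scale cascade. It keeps $2(k-1)$ sets $X_\ell^{in},X_\ell^{out}$, $\ell=1,\dots,k-1$, of sizes $\tilde\Theta(m^{1-\ell/k}\beta^{-1+2\ell/k})$ and applies the dichotomy (Lemma~\ref{lem:struct}) once for each $\ell$ with thresholds $(\ell/k)D$ and $((k-\ell)/k)D$; this guarantees that for every $\ell$ at least one of $X_\ell^{in}, X_{k-\ell}^{out}$ is good. A pigeonhole over the $k$ groups $\{X_{k-1}^{in}\}$, $\{X_{k-1}^{out}\}$, and $\{X_\ell^{in},X_{k-1-\ell}^{out}\}$ for $\ell=1,\dots,k-2$ then yields a group of all-good sets. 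The two singleton groups are handled by running full SSSP from each of their $\tilde O(m^{1/k}\beta^{1-2/k})$ vertices; the paired groups are handled exactly as you propose, by guessing $\beta$-edge hopset paths between $X_\ell^{in}$ and $X_{k-1-\ell}^{out}$, whose product of sizes times $\beta$ is $\tilde O(m^{1+1/k}\beta^{1-2/k})$ for every $\ell$. Your single-scale choice $s$ is essentially the $\ell=(k-1)/2$ case of this cascade, which is why the runtimes match --- but you cannot force the dichotomy to land there.

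\textbf{Minor issue.} Your $\mathcal{A}_N$ accepts when $\epsilon^{out}(v)\ge D'$. It should accept when $\epsilon^{out}(v)>D$. With your threshold, on a graph of diameter strictly between $D$ and $D'$, $\mathcal{A}_N$ always rejects; since your structural lemma (even if fixed) is only promised at diameter $\le D$, $\mathcal{A}_{coN}$ need not ever output ``NO'' either, violating condition (3) of Definition~\ref{def:nconplus}. The paper uses the threshold $>D$ precisely so that $\mathcal{A}_N$ handles all off-promise inputs.
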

By Lemma~\ref{lem:cgimps} and Lemma~\ref{lem:nunseth}, Theorem~\ref{thm:nseth-weight-conj} follows from the following theorem.
\begin{theorem}
  Suppose $\varepsilon>0$ and the algorithm of Theorem~\ref{thm:nseth-weight-conj} exists.
  Let $D$ be a real number and $k\ge 2$ be a positive integer, and $D' = (2-\frac{1}{k}+\varepsilon)D$.
  Then $D'/D$-Diameter on directed weighted graphs is in $\nconplus[m^{1+1/k}\beta^{1-2/k}+T]$.
\label{thm:nseth-weight-conj-ncon}
\end{theorem}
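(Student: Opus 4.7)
The plan is to exhibit nondeterministic machines $\mathcal{A}_N$ and $\mathcal{A}_{coN}$, each running in time $m^{1+1/k}\beta^{1-2/k}+T$, that together place $D'/D$-Diameter on directed weighted graphs in $\nconplus[m^{1+1/k}\beta^{1-2/k}+T]$. The YES-side machine $\mathcal{A}_N$ is standard: nondeterministically guess a vertex $v$, run Dijkstra in $\tilde O(m)$, and accept iff the out-eccentricity $\epsilon^{out}(v)\ge D'$. This works since diameter $\ge D'$ is witnessed by some such vertex, while diameter $\le D$ makes no vertex satisfy the condition. The main work is in $\mathcal{A}_{coN}$, which must, whenever the diameter is at most $D$, certify that the diameter is strictly less than $D'$.

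As a first step of $\mathcal{A}_{coN}$, I would invoke the hopset algorithm (at cost $T$) to produce $E'$, so that distances in $G+E'$ always agree with those in $G$ and, with positive probability, every pair of vertices has a $\beta$-edge path in $G+E'$ of length at most the true distance plus $\varepsilon D/2$. The structural backbone is a ball-size dichotomy. Fix a radius $r\approx (k-1)D/(2k)$ and a threshold $\tau$ to be balanced. In \emph{Case A}, some vertex $v$ has $|N^{in}_{\le r}(v)|\le \tau$ (or symmetrically for out-balls); guess $v$, set $Y=N^{in}_{\le r}(v)$, verify $Y$ by one backward Dijkstra to $v$, and run Dijkstra from each $y\in Y$. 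Walking a shortest path $u\to v$ until first reaching a vertex $y$ with $d(y,v)\le r$ produces a $y\in Y$ with $d(u,y)\le D-r+W_{max}$, so $Y$ out-dominates at radius $D-r+O(\varepsilon D)$, and hence $d(u,w)\le (D-r)+d(y,w)\le 2D-r\le (2-1/k+\varepsilon)D$. In \emph{Case B} all such balls exceed $\tau$, so with positive probability uniform random samples $S_1,S_2$ of size $\tilde O(n/\tau)$ in/out-dominate at radius $r$; guess $S_1,S_2$, verify by multi-source Dijkstra, and for every pair $(s_1,s_2)\in S_1\times S_2$ guess a $\beta$-edge path in $G+E'$ of length at most $D+\varepsilon D/2$ and verify edge-by-edge, yielding $d(u,w)\le 2r+D+\varepsilon D/2\le (2-1/k+\varepsilon)D$. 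The Dijkstra-from-$Y$ cost $\tilde O(\tau m)$ and the path-guessing cost $\tilde O((n/\tau)^2\beta)$ balance at $\tau=(m\beta)^{1/3}$, giving $m^{1+1/k}\beta^{1-2/k}$ for $k=3$ (and $k=2$ handled similarly without the hopset).

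The main obstacle is extending this to $k\ge 4$, where the single-dichotomy Case B runtime $(n/\tau)^2\beta$ outpaces $m^{1+1/k}\beta^{1-2/k}$ unless $\beta\gtrsim m^{1/2}$, which our hopset does not provide. I plan to resolve this by introducing a hierarchy of thresholds $\tau_1<\cdots<\tau_{k-1}$ with matching radii $r_1<\cdots<r_{k-1}$: the certificate either finds the smallest index $i$ for which some $|N^{in}_{\le r_i}(v)|\le \tau_i$ and applies a small-ball argument at that level, or, failing that, guesses a chain of random samples at all $k-1$ levels and verifies the pairwise distances between successive levels via hopset path-guessing. The parameters are engineered so each level contributes $\tilde O(m^{1+1/k}\beta^{1-2/k})$ to the runtime, telescoping to the claimed bound. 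The weighted-graph slack $W_{max}$ from the first-crossing argument on shortest paths is absorbed into the $\varepsilon D/2$ budget of the additive hopset, which is the crucial role of the hopset when $D$ is large. Finally, condition~3 of $\nconplus$ on inputs of diameter in $(D,D')$ holds because the structural dichotomy and the certificate's soundness (``accept implies diameter $\le (2-1/k+\varepsilon)D<D'$'') apply uniformly to graphs of diameter up to roughly $D'$.
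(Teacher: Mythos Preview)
Your basic shape is right and matches the paper for $k\le 3$: guess a high-eccentricity vertex for $\mathcal{A}_N$, and for $\mathcal{A}_{coN}$ combine a ball-size dichotomy with hopset-based $\beta$-edge path-guessing between two small dominating sets. The gap is in the extension to $k\ge 4$. Your proposal ``guess a chain of random samples at all $k-1$ levels and verify pairwise distances between successive levels'' does not give the bound: if you chain $u\to s_1\to s_2\to\cdots\to w$ with each link certified to length $\le D(1+\varepsilon/2)$, the total is $\Theta(kD)$, not $(2-\tfrac{1}{k})D$; and you have not shown how a multi-level scheme makes every level's cost equal $m^{1+1/k}\beta^{1-2/k}$. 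The paper does \emph{not} chain. It nondeterministically picks, for every $\ell=1,\dots,k-1$, sets $X_\ell^{in},X_\ell^{out}$ of size $\tilde O\big(m^{1-\ell/k}\beta^{-1+2\ell/k}\big)$. Lemma~\ref{lem:struct} (your dichotomy, applied at each scale) forces, for every $\ell$, at least one of $X_\ell^{in}$ and $X_{k-\ell}^{out}$ to be good, so at most $k-1$ of the $2(k-1)$ sets are bad. The $k$ disjoint ``options'' $\{X_{k-1}^{in}\}$, $\{X_{k-1}^{out}\}$, $\{X_\ell^{in},X_{k-1-\ell}^{out}\}_{\ell=1}^{k-2}$ partition these $2(k-1)$ sets, so by pigeonhole one option contains only good sets. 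For the two singleton options one runs SSSP from each vertex; for a pair option one guesses $\beta$-edge hopset paths between every $(x,x')\in X_\ell^{in}\times X_{k-1-\ell}^{out}$. The set sizes are chosen precisely so that each of these costs is $\tilde O(m^{1+1/k}\beta^{1-2/k})$, and the triangle inequality through a single $(x,x')$ pair gives $d(u,w)\le \tfrac{\ell}{k}D + (1+\tfrac{\varepsilon}{2})D + \tfrac{k-1-\ell}{k}D < D'$. That single-pair pigeonhole, not a chain, is the missing idea.

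Two smaller issues. First, your Case~A overshoot ``$d(u,y)\le D-r+W_{\max}$'' is not absorbable by the hopset (the hopset is not used in Case~A) and $W_{\max}$ need not be $O(\varepsilon D)$; the paper sidesteps this by taking the set of vertices incident to the \emph{edge} set $E^{in}_{\le D_2}(v)$ rather than the vertex ball $N^{in}_{\le D_2}(v)$, so the last vertex on a shortest $u\to v$ path with $d(\cdot,v)>D_2$ already lies in the chosen set and $d(u,\cdot)<D-D_2$ with no $W_{\max}$ slack. Second, for condition~3 of $\nconplus$ you should have $\mathcal{A}_N$ accept when the guessed eccentricity is $>D$ (not $\ge D'$): then every graph outside $\Pi_{NO}$ has an accepting $\mathcal{A}_N$ branch, and you do not need to argue that $\mathcal{A}_{coN}$ succeeds on graphs of diameter in $(D,D')$ (your claim that the structural dichotomy ``applies uniformly'' there is not justified, since the dichotomy uses diameter $\le D$).
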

To prove Theorem~\ref{thm:nseth-weight-conj-ncon}, we need a few technical lemmas.
The first is a standard lemma about hitting sets.
\begin{lemma}
  \label{lem:select}
  Let $n$, $M$ and $K$ be positive integers. Let $A_1,\dots,A_n$ denote $\ell$ sets of size at least $K$ over a universe of size $M$. Then there exists a set $X$ of $2M/K\log n$ elements of $M$ such that $X\cap A_i\neq\emptyset$ for all $i=1,\dots,n$.
\end{lemma}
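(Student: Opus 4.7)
\textbf{Proof plan for Lemma~\ref{lem:select}.} The plan is to use a standard probabilistic argument: sample a uniformly random (multi)set of the prescribed size from the universe and show that, with positive probability, it hits every $A_i$.

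Concretely, let $t = \lceil (2M/K)\log n\rceil$ and form $X$ by picking $t$ elements of the universe $U$ independently and uniformly at random (with replacement). For each fixed $i\in[n]$, since $|A_i|\ge K$, a single draw lies in $A_i$ with probability at least $K/M$, so
\begin{align}
\Pr[X\cap A_i=\emptyset]\le \left(1-\frac{K}{M}\right)^{t}\le \exp\!\left(-\tfrac{Kt}{M}\right)\le \exp(-2\log n)=\frac{1}{n^{2}},
\end{align}
using $1-x\le e^{-x}$ and the definition of $t$ (and recalling the paper's convention that $\log$ denotes $\ln$).

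A union bound over $i=1,\dots,n$ gives
\begin{align}
\Pr\!\left[\exists\, i\in[n]:X\cap A_i=\emptyset\right]\le \frac{n}{n^{2}}=\frac{1}{n}<1,
\end{align}
so there exists a realization of $X$ of size at most $t\le (2M/K)\log n$ (after dedup\-li\-ca\-tion, if one prefers a set rather than a multiset) that meets every $A_i$. This is the desired hitting set.

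The main (really only) issue is bookkeeping with the constants and the ceiling in $t$; nothing in the argument is delicate. One could equivalently run a Bernoulli-$p$ inclusion with $p=(2/K)\log n$, or a greedy covering analysis, and obtain the same bound, but sampling-with-replacement plus the inequality $(1-K/M)^{t}\le e^{-Kt/M}$ is the shortest path.
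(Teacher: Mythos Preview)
Your argument is correct and is essentially identical to the paper's own proof: sample $2(M/K)\log n$ elements uniformly with replacement, bound the miss probability for a single $A_i$ by $(1-K/M)^{|X|}\le e^{-|X|K/M}=n^{-2}$, and take a union bound over the $n$ sets. The only differences are cosmetic (your explicit ceiling and deduplication remark).
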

\begin{proof}
  Let $X$ be $2M/K\log n$ elements chosen uniformly at random with replacement from $U$. The probability $X\cap A_i=\emptyset$ is at most $(1-K/M)^{|X|} \le e^{-|X|\cdot K/M} = n^{-2}$.
  By the union bound, the probability there exists an $i$ such that $X\cap A_i=\emptyset$ is at most $n\cdot n^2 < 1$, so some choice of $X$ yields $X\cap A_i\neq\emptyset$ for all $i$.
\end{proof}
We apply Lemma~\ref{lem:select} in the following structural result about graphs with diameter $D$.
\begin{lemma}
  Let $G$ be a directed weighted graph of diameter $D$, and let $M_1,M_2$ be positive real numbers with $M_1M_2\ge 8m\log m$, and let $D_1,D_2$ be positive real numbers with $D_1+D_2\ge D$. Then $G$ either has a vertex subset $Y$ satisfying $|Y|\le M_1$ and $d(v,Y)\le D_1$ for all vertices $v$, or $G$ has a vertex subset $Z$ satisfying $|Z|\le M_2$ and $d(Z,v)\le D_2$.
\label{lem:struct}
\end{lemma}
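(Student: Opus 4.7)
The plan is a case split based on the sizes of the in-balls $N^{in}_{\le D_2}(v) = \{u : d(u,v)\le D_2\}$. In one case a single small in-ball serves directly as $Y$; in the other, a hitting-set argument produces $Z$.

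\emph{Case 1 (some small in-ball).} Suppose some vertex $v^\ast$ satisfies $|N^{in}_{\le D_2}(v^\ast)| \le M_1$. I would set $Y = N^{in}_{\le D_2}(v^\ast)$. For any vertex $u$, the diameter bound gives $d(u, v^\ast) \le D \le D_1 + D_2$. Fix a shortest $u \to v^\ast$ path $u=u_0, u_1, \ldots, u_k=v^\ast$ with partial distances $d_j := d(u, u_j)$, which are nondecreasing and satisfy $d_k \le D_1 + D_2$. The interval $[d_k - D_2, D_1]$ is nonempty, and selecting any path vertex $u_j$ with $d_j$ inside it gives $d(u, u_j) \le D_1$ and $d(u_j, v^\ast) = d_k - d_j \le D_2$, so $u_j \in Y$ and hence $d(u,Y) \le D_1$.

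\emph{Case 2 (every in-ball is large).} Otherwise every vertex $v$ has $|N^{in}_{\le D_2}(v)| > M_1$. I would invoke Lemma~\ref{lem:select} on the family $\{N^{in}_{\le D_2}(v)\}_{v \in V}$: this is a collection of $n$ sets, each of size greater than $M_1$, over the universe $V$ of size $n$. The lemma then produces a hitting set $Z \subseteq V$ with $|Z| \le 2n\log n / M_1$. Since $G$ is connected we have $2m \ge n$, so $M_1 M_2 \ge 8 m \log m \ge 2 n \log n$, which gives $|Z| \le M_2$. The hitting property states that for every $v$ some $z \in Z$ lies in $N^{in}_{\le D_2}(v)$, i.e.\ $d(z, v) \le D_2$, so $d(Z, v) \le D_2$ as required.

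\emph{Main obstacle.} Case~2 is a routine application of Lemma~\ref{lem:select} once the arithmetic connecting $8m\log m$ to $2n\log n$ is checked. The subtler point is Case~1 for weighted graphs: the nondecreasing partial distances $d_j$ along a shortest path can skip over the target interval $[d_k - D_2, D_1]$ whenever a single heavy edge crosses it, leaving no path vertex $u_j$ with $d_j$ in the required range. In that situation, the first path vertex $u_{j+1}$ past the skipped edge still lies in $Y$ (since $d(u_{j+1}, v^\ast) = d_k - d_{j+1} < D_2$), with overshoot $d(u, u_{j+1}) - D_1$ bounded by the weight of the skipped edge. I expect this residual slack to be absorbed either by a minor reformulation of $D_1$, or, in the intended downstream applications (Theorems~\ref{thm:nseth-weight} and \ref{thm:nseth-weight-conj-ncon}), by the hopset's $(\varepsilon/2)D$ additive error budget.
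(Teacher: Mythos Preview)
Your Case~2 is fine and essentially matches the paper (the paper applies Lemma~\ref{lem:select} over the universe of edges rather than vertices, but either works). The problem is that your Case~1 really is incomplete, and the gap you flag is not absorbed downstream: the lemma is invoked for arbitrary weighted graphs, where a single edge on the shortest $u\to v^\ast$ path can have weight as large as $D$, so the overshoot $d(u,u_{j+1})-D_1$ can be of order $D$, far exceeding the $(\varepsilon/2)D$ hopset slack in Theorem~\ref{thm:nseth-weight-conj-ncon}.

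The fix is simple and is exactly what the paper does: split cases on the \emph{edge} in-ball $E^{in}_{\le D_2}(v^\ast)=\{(a,b):b\in N^{in}_{\le D_2}(v^\ast)\}$ rather than the vertex in-ball, and in the small case take $Y$ to be the set of vertices incident to $E^{in}_{\le D_2}(v^\ast)$, i.e., the in-ball together with every vertex having an out-edge into it. Now on the shortest path $u=u_0,\dots,u_k=v^\ast$, let $j$ be the \emph{largest} index with $d(u_j,v^\ast)>D_2$. Then $u_{j+1}\in N^{in}_{\le D_2}(v^\ast)$, so the edge $(u_j,u_{j+1})$ lies in $E^{in}_{\le D_2}(v^\ast)$ and hence $u_j\in Y$; and $d(u,u_j)=d(u,v^\ast)-d(u_j,v^\ast)<D-D_2\le D_1$ exactly, with no overshoot. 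The point is that enlarging $Y$ by one ``boundary layer'' of in-neighbors costs you nothing in the size bound (it is controlled by $|E^{in}_{\le D_2}(v^\ast)|$, which is what you threshold on) and completely eliminates the heavy-edge issue. Case~2 then runs the hitting-set argument over the $m$ edges with sets $E^{in}_{\le D_2}(v)$ of size $\ge M_1/2$, which is where the constant $8$ in $M_1M_2\ge 8m\log m$ comes from.
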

\begin{proof}
  Suppose there exists a vertex $u$ with $|E_{\le D_2}^{in}(u)|\le M_1/2$. Then set $X_1$ to be the vertices incident to $E_{\le D_2}^{in}(u)$.
  For each vertex $v$, there is a length at most $D$ path $v=v_0,v_1,\dots,v_\ell=u$ from $v$ to $u$. If $i$ is the largest index such that $d(v_i,u) > D_2$, we have $v_i\in X_1$, so $d(v,v_i)<D-D_2 < D_1$.
  Thus, $d(v,X_1)< D_1$, as desired.

  Now suppose that for all vertices $u$, we have $|E_{\le D_2}^{in}(u)|\ge M_1/2$.
  By Lemma~\ref{lem:select} on the universe of $m$ edges with $n$ sets $E_{\le D_2}^{in}(u)$, there exists a set $F$ of $2m/(M_1/2)\cdot \log n\le M_2/2$ edges intersecting each $E_{\le D_2}^{in}(u)$. Setting $X_2$ to be the vertices incident to $F$ gives $|X_2|\le M_2$ and $d(X_2,u)\le D_2$ for all vertices $u$.
\end{proof}
We now prove Theorem~\ref{thm:nseth-weight-conj-ncon}.
\begin{proof}[Proof of Theorem~\ref{thm:nseth-weight-conj-ncon}]
  Let $G$ be the input graph.
  Recall that $D'/D$-Diameter is a promise problem $(\Pi_{YES},\Pi_{NO})$ where the instances in $\Pi_{YES}$ are graphs of diameter at least $D'$, and the instances in $\Pi_{NO}$ are graphs of diameter at most $D$.
  Let $\mathcal{A}_N$ be the nondeterministic algorithm that guesses a vertex, outputs ``YES'' if the eccentricity is greater than $D$, and ``NO'' otherwise.
  For any graph \emph{not} in $\Pi_{NO}$, i.e., any graph with diameter greater than $D$, there exist nondeterministic choices such that $\mathcal{A}_N$ outputs ``YES''. Furthermore, for graphs in $\Pi_{NO}$, $\mathcal{A}_N$ always outputs ``NO''.
  As $\mathcal{A}_N$ runs in time $m^{1+o(1)}$, we have that properties 1 and 3 of Definition~\ref{def:nconplus} are satisfied.

  It now remains to construct a nondeterministic algorithm $\mathcal{A}_{coN}$ satisfying property 2 of Definition~\ref{def:nconplus}.
  Let $\mathcal{A}_{hopset}$ be the algorithm that is assumed with parameters $\beta$ and $\varepsilon$.
  Define $\mathcal{A}_{coN}$ to be the following algorithm.
  \begin{enumerate}
  \item For $\ell=1,\dots,k-1$, nondeterministically choose vertex subsets $X_\ell^{in}$ and $X_\ell^{out}$ of size $8m^{1-(\ell/k)}\beta^{-1 + 2\ell/k}\log m$. 
  \item Run single source shortest path from every vertex in $X_{k-1}^{in}$. If each vertex in $X_{k-1}^{in}$ has out-eccentricity at most $D$, and if $d(v,X_{k-1}^{in})\le (\frac{k-1}{k})D$ for all vertices $v$, output ``NO''.
  \item Run single source shortest path from every vertex in $X_{k-1}^{out}$. If each vertex in $X_{k-1}^{out}$ has in-eccentricity at most $D$, and if $d(X_{k-1}^{out},v)\le (\frac{k-1}{k})D$ for all vertices $v$, output ``NO''.
  \item Nondeterministically choose a random seed for the hopset algorithm $\mathcal{A}_{hopset}$.
  \item Run $\mathcal{A}_{hopset}$ to obtain a set $E'$ and a new graph $G'=G+E'$.
  \item For $\ell=1,\dots,k-2$, run multi-source shortest path from each of $X_{\ell}^{in}$ and $X_{\ell}^{out}$.
  \item Let $\ell\in\{1,\dots,k-2\}$ be such that $d(v,X_\ell^{in})\le (\frac{\ell}{k})D$ and $d(X_{k-1-\ell}^{out},v) \le (\frac{k-\ell}{k})D$ for all vertices $v$. If no such vertices exist, output ``YES''.
  \item Nondeterministically choose a sequence $s_{x,x'}$ of at most $\beta$ vertices for every $x\in X_{\ell}^{in}$ and $x'\in X_{k-1-\ell}^{out}$. If each $s_{x,x'}$ forms a path from $x$ to $x'$ of length at most $D(1+\varepsilon/2)$ in $G'$, and $d(v,X_{\ell}^{in})\le D/3$ and $d(X_{k-1-\ell}^{out},v)\le D/3$ for all vertices $v$ output ``NO''. Otherwise output ``YES''.
  \end{enumerate}

  \textbf{Runtime.}
  Choose the sets $X_{\ell}^{in}, X_{\ell}^{out}$ takes time $\tilde O(m)$.
  Steps 2 and 3 each take $\tilde O(m^{1+1/k}\beta^{1-2/k})$, as each shortest path takes time $\tilde O(m)$.
  Steps 4 and 5 together take time $T$ by assumption on algorithm $\mathcal{A}_{hopset}$.
  Step 6 takes time $\tilde O(m)$.
  Step 7 takes time $\tilde O(|X_{\ell}^{in}|\cdot |X_{\ell}^{out}|\cdot \beta) = \tilde O(m^{1+1/k}\beta^{1-2/k})$.

  Thus, the total runtime is $\tilde O(m^{1+1/k}\beta^{1-2/k}+T)$.
  
  \textbf{Correctness.}
  We first show that if the algorithm outputs ``NO'', the diameter is less than $D'$.
  If we output ``NO'' at Step 2, then for any vertices $v$ and $v'$, there exists a vertex $x\in X_{k-1}^{in}$ such that $d(v,x)\le (\frac{k-1}{k})D$. As $x\in X_{k-1}^{in}$ has out-eccentricity at most $D$, we have $d(v,v')\le d(v,x)+d(x,v')\le (\frac{2k-1}{k})D < D'$ by the triangle inequality.
  If we output ``NO'' at Step 3, then for any vertices $v$ and $v'$, there exists a vertex $x\in X_{k-1}^{out}$ such that $d(x,v')\le (\frac{k-1}{k})D$. As $x\in X_{k-1}^{out}$ has in-eccentricity at most $D$, we have $d(v,v')\le d(v,x)+d(x,v')\le (\frac{2k-1}{k})D  < D'$ by the triangle inequality.
  If we output ``NO'' at Step 7, then for any $v,v'\in V$, there exists $x$ and $x'$ such that $d_{G'}(v,x)\le (\frac{\ell}{k})D$, and $d_{G'}(x',v')\le (\frac{k-1-\ell}{k})D$, and we also must have $d_{G'}(x,x')\le (1+\varepsilon/2)D$ as $x\in X_{\ell}^{in}$ and $x'\in X_{k-1-\ell}^{out}$.
  Then $d_{G'}(v,v')\le d_{G'}(v,x)+d_{G'}(x,x')+d_{G'}(x',v') \le (\frac{2k-1}{k}+\varepsilon/2)D < D'$ by the triangle inequality.
  As, shortest path distances are the same in $G'$ as in $G$, we have $d_G(v,v')< D'$ for all $v,v'\in V$, as desired.

  Now we show that the if the diameter is at most $D$, there exists a sequence of nondeterministic choices such that our algorithm outputs ``NO''.
  For $\ell=1,\dots,k-1$, if there exists $X_\ell^{in}$ exists such that $d(v,X_\ell^{in})\le (\frac{\ell}{k})D$, let $X_\ell^{in}$ be that set.
  Similarly if there exists $X_\ell^{out}$ exists such that $d(v,X_\ell^{out})\le (\frac{\ell}{k})D$, let $X_\ell^{out}$ be that set.
  Call such an $X_\ell^{in}$ or $X_\ell^{out}$ good.
  Let $X_\ell^{in}$ or $X_\ell^{out}$ be arbitrary if it is not good, and call such an $X_\ell^{in}$ or $X_\ell^{out}$ \emph{bad}.
  Additionally, let the randomness of $\mathcal{A}_{hopset}$ be such that $E'$ is indeed a $(\beta,\varepsilon/2)$-hopset.

  By Lemma~\ref{lem:struct} with $D_1 = (\ell/k)D$, $D_2=((k-\ell)/k)D$, $M_1=|X_\ell^{in}| = 8m^{1-\ell/k}\beta^{-1+2\ell/k}\log m$, and $M_2 = |X_{k-\ell}^{out}| = 8m^{\ell/k}\beta^{1-2\ell/k}\log m$, we have that either $X_\ell^{in}$ is good or $X_{k-\ell}^{out}$ is good for all $\ell=1,\dots,k-1$.
  Thus, among, $X_1^{in},\dots,X_{k-1}^{in},X_1^{out},\dots,X_{k-1}^{out}$, there are at most $k-1$ \emph{bad} vertex subsets.
  Hence, among the $k$ sets $\{X_{k-1}^{in}\}, \{X_{k-1}^{out}\}$, and $\{X_\ell^{in}, X_{k-1-\ell}^{out}\}$ for $\ell=1,\dots,k-2$, one of these sets of vertex subsets has only \emph{good} vertex subsets.
  First, if $X_{k-1}^{in}$ is good, then choosing this $X_{k-1}^{in}$ as $X_{k-1}^{in}$ in the algorithm causes step 2 to output ``NO'': we indeed will see that $d(v,X_{k-1}^{in})\le (\frac{k-1}{k})D$, and all eccentricities are at most $D$ because the diameter is at most $D$.
  Similarly, if $X_{k-1}^{out}$ is good, then choosing this $X_{k-1}^{out}$ as $X_{k-1}^{out}$ in the algorithm causes step 3 to output ``NO''.
  Finally, if there exists $X_\ell^{in}$ and $X_{k-1-\ell}^{out}$ such that both are good, our algorithm will output ``NO'' in step 7: since $E'$ is a $(\beta,\varepsilon/2)$-hopset, there there exists an at-most-$\beta$ edge path from $x$ to $x'$ in $G'$ for any $x,x'\in X$ of length at most $d_G(x,x')+(\varepsilon/2)D \le (1+\varepsilon/2)D$, so  nondeterministically choosing these paths causes us to output ``NO''.
  This covers all possible cases, so we these nondeterministic choices cause us to output ``NO'', as desired.
\end{proof}

\begin{proof}[Proof of Theorem~\ref{thm:nseth-weight-conj}]
  With Theorem~\ref{thm:nseth-weight-conj-ncon}, apply Lemma~\ref{lem:cgimps} (for deterministic reductions) and apply Lemma~\ref{lem:nunseth} (for randomized reductions).
\end{proof}

\begin{proof}[Proof of Theorem~\ref{thm:nseth-weight}]
  In Theorem~\ref{thm:nseth-weight-conj}, let $k=3$, and let the proposed hopset algorithm be given by Lemma~\ref{lem:hopset-dir} with $\alpha=3/26$, which indeed computes a hopset for all directed weighted graphs.
  It follows that $(5/3+\varepsilon)D/D$-Diameter on directed weighted graphs with time complexity $(m^{4/3+\frac{1/2 - 3/26}{3}} + m^{1 + 4\cdot 3/26})^{1+\delta} = m^{19/13(1+\delta)}$ is not SETH-hard.
  This yields Theorem~\ref{thm:nseth-weight}.
\end{proof}
\begin{proof}[Proof of Theorem~\ref{thm:nseth-weight-intro}]
  Apply Lemma~\ref{lem:gap-to-approx} with $\rho=\infty$, $\alpha=\frac{5}{3}+\frac{\varepsilon}{2}$, $\beta=\frac{\varepsilon}{2}$, and $T=m^{(1+\delta)19/13}$ to Theorem~\ref{thm:nseth-weight}.
\end{proof}

\subsection{Undirected weighted graphs}
Using the same technique, we can prove non-reducibility results for undirected graphs in a larger parameter setting.
\begin{theorem}
  \label{thm:nseth-undir}
  Let $k\ge 2$ be a positive integer and $\varepsilon>0$. 
  Under NSETH (NUNSETH), for any $\delta>0$, a $2-\frac{1}{k}+\varepsilon$-approximation of Diameter on undirected weighted graphs with time complexity $m^{1+1/k+\delta}$ is \emph{not} SETH-hard for deterministic (randomized) reductions.
\end{theorem}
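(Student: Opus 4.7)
The plan is to mirror the argument for Theorem~\ref{thm:nseth-weight-conj}, using the much better undirected hopset of Lemma~\ref{lem:hopset-undir} in place of the directed one of Lemma~\ref{lem:hopset-dir}. Concretely, I will first prove the undirected analog of Theorem~\ref{thm:nseth-weight-conj-ncon}: if one can compute a $(\beta,\varepsilon/2)$-additive hopset of an undirected weighted graph in time $T$, then $(2-\tfrac1k+\varepsilon)D/D$-Diameter on undirected weighted graphs lies in $\nconplus[m^{1+1/k}\beta^{1-2/k}+T]$. Then I invoke Lemma~\ref{lem:hopset-undir} (which gives $\beta=O_{\delta,\varepsilon}(1)$ and $T=\tilde O(m^{1+\delta_0})$ for any $\delta_0>0$), apply Lemma~\ref{lem:cgimps} (or Lemma~\ref{lem:nunseth} for randomized reductions), and finally use Lemma~\ref{lem:gap-to-approx} to pass from the gap problem to the approximation problem.

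For the $\nconplus$ membership, the nondeterministic algorithm $\mathcal{A}_N$ simply guesses a vertex and checks (by BFS/Dijkstra on the hopset-augmented graph) whether its eccentricity exceeds $D$, covering conditions 1 and 3 of Definition~\ref{def:nconplus}. The core work is the coNTIME side $\mathcal{A}_{coN}$, which is a direct simplification of the directed construction: because distances are symmetric, we only need a single family of subsets $X_1,\dots,X_{k-1}$ of sizes $|X_\ell|=\tilde O(m^{1-\ell/k}\beta^{-1+2\ell/k})$ (no separate ``in'' and ``out'' versions). The algorithm nondeterministically picks these $X_\ell$, runs single-source shortest path from each vertex of $X_{k-1}$, runs multi-source shortest path from each $X_\ell$ in $G+E'$ (after guessing the random seed of the hopset algorithm), and then either certifies that $X_{k-1}$ covers every vertex within distance $(\tfrac{k-1}{k})D$ and has eccentricity $\le D$, or for some $\ell\in\{1,\dots,k-2\}$ guesses length-$\beta$ paths in $G+E'$ between all pairs $(x,x')\in X_\ell\times X_{k-1-\ell}$ certifying $d_{G+E'}(x,x')\le (1+\varepsilon/2)D$, while verifying $d(v,X_\ell)\le(\tfrac{\ell}{k})D$ and $d(X_{k-1-\ell},v)\le(\tfrac{k-1-\ell}{k})D$ for all $v$. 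Correctness then follows from the triangle inequality: any witnessed configuration bounds the diameter by $(2-\tfrac1k+\varepsilon)D<D'$.

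Existence of good nondeterministic guesses when the diameter is at most $D$ comes from the undirected analog of Lemma~\ref{lem:struct}, whose proof carries over verbatim once the oriented neighborhoods are replaced by unoriented ones: for any $M_1M_2\ge 8m\log m$ and $D_1+D_2\ge D$, there exists a set of size $\le M_1$ covering every vertex within distance $D_1$, or a set of size $\le M_2$ covering every vertex within distance $D_2$. Applied with $D_1=(\ell/k)D$, $D_2=((k-\ell)/k)D$, and the size budgets above, this shows that for each $\ell$ at least one of $X_\ell,X_{k-\ell}$ admits a ``good'' choice; a pigeonhole over $\ell=1,\dots,k-1$ then guarantees either $X_{k-1}$ is good or some pair $(X_\ell,X_{k-1-\ell})$ is simultaneously good, which is exactly what $\mathcal{A}_{coN}$ needs.

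For the parameter tune-up, plug $\beta=O_{\delta,\varepsilon}(1)$ and $T=\tilde O(m^{1+\delta_0})$ into the $\nconplus$ bound to obtain running time $\tilde O(m^{1+1/k}+m^{1+\delta_0})$, and take $\delta_0<1/k$ so that this is $\tilde O(m^{1+1/k})$. Lemma~\ref{lem:cgimps} (resp.\ Lemma~\ref{lem:nunseth}) then rules out SETH-hardness at time $m^{(1+1/k)(1+\delta')}$ for every $\delta'>0$; choose $\delta'$ small enough that $(1+1/k)(1+\delta')<1+1/k+\delta$, and conclude the gap version. Lemma~\ref{lem:gap-to-approx} (with $\rho=\infty$, $\alpha=2-\tfrac1k+\tfrac\varepsilon2$, $\beta=\tfrac\varepsilon2$) finally delivers Theorem~\ref{thm:nseth-undir}. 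The only conceptual subtlety — essentially the only difference from the directed case — is verifying that the symmetric structural lemma and single-family covering argument really do save the factor of two that was needed with oriented neighborhoods; but since undirected distances are symmetric this reduction is cosmetic, so no new obstacle is expected.
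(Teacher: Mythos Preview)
Your proposal is correct and follows essentially the same route as the paper: reduce to the gap problem via Lemma~\ref{lem:gap-to-approx}, observe that the proof of Theorem~\ref{thm:nseth-weight-conj-ncon} goes through verbatim for undirected graphs, plug in the undirected hopset of Lemma~\ref{lem:hopset-undir} with $\beta=O_{k,\varepsilon}(1)$ and $T=\tilde O(m^{1+1/k})$, and apply Lemma~\ref{lem:cgimps}/\ref{lem:nunseth}. Your collapse of the in/out families into a single family $X_1,\dots,X_{k-1}$ is a valid cosmetic simplification (the pigeonhole still works, as you note); the paper simply reuses the directed algorithm unchanged rather than streamlining it.
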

Again, by Lemma~\ref{lem:gap-to-approx}, it suffices to prove the following Theorem.
\begin{theorem}
  \label{thm:nseth-undir-gap}
  Let $D$ be a positive real number, $k\ge 2$ be a positive integer, $\varepsilon>0$ and $D' = (2-\frac{1}{k}+\varepsilon)D$.
  Under NSETH (NUNSETH), for any $\delta>0$, $D'/D$-Diameter on undirected weighted graphs with time complexity $m^{1+1/k+\delta}$ is \emph{not} SETH-hard for deterministic (randomized) reductions.
\end{theorem}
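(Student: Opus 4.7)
The plan is to mirror the argument for Theorem~\ref{thm:nseth-weight-conj-ncon}, substituting the undirected hopset of Lemma~\ref{lem:hopset-undir} for the directed hopset of Lemma~\ref{lem:hopset-dir}. By Lemma~\ref{lem:cgimps} and Lemma~\ref{lem:nunseth}, it suffices to show that $D'/D$-Diameter on undirected weighted graphs is in $\nconplus[m^{1+1/k+\delta}]$. The ``N''-side algorithm $\mathcal{A}_N$ is straightforward: nondeterministically guess a vertex $v$, run Dijkstra from $v$ in $\tilde O(m)$ time, and accept iff $v$ has eccentricity strictly greater than $D$. This satisfies conditions~1 and~3 of Definition~\ref{def:nconplus}.

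For the ``coN''-side algorithm $\mathcal{A}_{coN}$, I would adapt the directed construction but exploit the fact that Lemma~\ref{lem:hopset-undir} delivers $\beta = O_{\delta,\varepsilon}(1)$ (rather than polynomial in $n$) at the cost of only $m^{1+\delta'}$ construction time for any $\delta' > 0$. Nondeterministically pick vertex subsets $X_1,\dots,X_{k-1}$ with $|X_\ell| = \tilde O(m^{1-\ell/k})$ so that $|X_\ell|\cdot|X_{k-\ell}| \ge 8m\log m$, nondeterministically choose random bits for the hopset algorithm and compute $G' = G+E'$ (whose shortest-path distances equal those of $G$ regardless of the randomness, by condition (1) of Lemma~\ref{lem:hopset-undir}), run multi-source Dijkstra from each $X_\ell$ in $G'$, and then nondeterministically guess a $\beta$-edge path in $G'$ of total length at most $(1+\varepsilon/2)D$ between every pair in $X_\ell\times X_{k-1-\ell}$ for each $\ell\in\{1,\dots,k-2\}$. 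Accept if either (a) every vertex is within $\frac{k-1}{k}D$ of some $x \in X_{k-1}$ and every such $x$ has eccentricity $\le D$, or (b) for some $\ell$, every vertex is within $\frac{\ell}{k}D$ of $X_\ell$ and within $\frac{k-1-\ell}{k}D$ of $X_{k-1-\ell}$ and every guessed $X_\ell$-to-$X_{k-1-\ell}$ path is valid in $G'$.

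Soundness (accepting implies diameter $< D'$) is by the triangle inequality exactly as in Theorem~\ref{thm:nseth-weight-conj-ncon}: since shortest-path distances in $G'$ equal those in $G$, chaining gives $d(v,v') \le \frac{\ell}{k}D + (1+\varepsilon/2)D + \frac{k-1-\ell}{k}D = (2-\tfrac{1}{k}+\tfrac{\varepsilon}{2})D < D'$. Completeness (diameter $\le D$ admits accepting guesses) uses the undirected analog of Lemma~\ref{lem:struct}, which is immediate since in an undirected graph $N^{in}(v)=N^{out}(v)$ and $E_{\le r}^{in}(v)=E_{\le r}^{out}(v)$: for each $\ell$, either there is a ``good'' $X_\ell$ with $d(v,X_\ell) \le \frac{\ell}{k}D$ for all $v$, or a ``good'' $X_{k-\ell}$. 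Among the $k$ candidate configurations $\{X_{k-1}\}$ and $\{X_\ell, X_{k-1-\ell}\}_{\ell=1}^{k-2}$, a pigeonhole argument forces at least one to consist entirely of good sets, yielding an accepting branch when the hopset randomness succeeds.

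The runtime analysis is the payoff from using the constant-$\beta$ undirected hopset. SSSP from every $x \in X_{k-1}$ costs $\tilde O(|X_{k-1}|\cdot m) = \tilde O(m^{1+1/k})$; the hopset construction costs $m^{1+\delta'}$; multi-source Dijkstra from each $X_\ell$ in $G'$ is $\tilde O(m)$; and pairwise path verification costs $\tilde O(|X_\ell|\cdot|X_{k-1-\ell}|\cdot\beta) = \tilde O(m^{1+1/k})$ since $\beta = O_{\delta,\varepsilon}(1)$. Choosing $\delta'$ smaller than $\delta$ swallows all polylogarithmic factors into $m^\delta$, giving the overall bound $m^{1+1/k+\delta}$. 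I expect the main obstacle to be only bookkeeping — matching set sizes to the hitting-set budget from Lemma~\ref{lem:struct} while ensuring the pairwise path-verification term does not dominate — since the conceptual structure is identical to the directed case and is in fact simpler because the hopset bound $\beta$ is now independent of $n$.
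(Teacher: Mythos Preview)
Your proposal is correct and mirrors the paper's approach exactly: the paper simply observes that Theorem~\ref{thm:nseth-weight-conj} (and its proof via Theorem~\ref{thm:nseth-weight-conj-ncon}) holds verbatim for undirected graphs, then plugs in the undirected hopset of Lemma~\ref{lem:hopset-undir} with $\beta=O_{k,\varepsilon}(1)$ and construction time $T=\tilde O(m^{1+1/k})$ to get $(m^{1+1/k}\beta^{1-2/k}+T)^{1+\delta}=m^{1+1/k+\delta}$. One small point: after collapsing $X_\ell^{\mathrm{in}}$ and $X_\ell^{\mathrm{out}}$ into a single $X_\ell$ your count of ``$k$ candidate configurations'' is off (there are $k-1$, with repeats), and the pigeonhole needs a slightly more careful argument---the cleanest fix is to keep the in/out families formally distinct as in the directed proof, which is precisely what the paper does by saying ``use the same proof.''
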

\begin{proof}
  Let $D$ be a real number, $k\ge 2$ be a positive integer, and $D'=(2-\frac{1}{k}+\varepsilon)D$ be as in Theorem~\ref{thm:nseth-undir}.
  We note that Theorem~\ref{thm:nseth-weight-conj} also holds for undirected graphs, when one assumes a hopset construction for unweighted graphs.
  Since one can in fact use the same proof, we omit the proof for brevity.
  By Lemma~\ref{lem:hopset-undir}, there exists an algorithm running in time $T=\tilde O(m^{1+1/k})$, given an \emph{undirected} weighted graph $G$ on $n$ vertices and $m$ edges, computes a set $E'$ of weighted edges such that (1) when $E'$ is added to $G$, all shortest path distances stay the same, and (2) $E'$ forms a $(\beta,\varepsilon/2)$-additive hopset of $G$, where $\beta=O_{k,\varepsilon}(1)$.
  Then, following the proof of Theorem~\ref{thm:nseth-weight-conj}, we have that for all $\delta>0$, $D'/D$-Diameter on undirected weighted graphs with time complexity $(m^{1+1/k}\beta^{1-2/k} + T)^{1+\delta}$ is \emph{not} SETH-hard.
  As $\beta = O_{k,\varepsilon}(1)$ and $T=\tilde O(m^{1+1/k})$, for all $\delta>0$, $D'/D$-Diameter on undirected weighted graphs with time complexity $m^{1+1/k+\delta}$ is \emph{not} SETH-hard, as desired.
\end{proof}
\begin{proof}[Proof of Theorem~\ref{thm:nseth-undir}]
Apply Lemma~\ref{lem:gap-to-approx} to Theorem~\ref{thm:nseth-undir-gap} with $\rho=\infty$, $\alpha=2-\frac{1}{k}+\frac{\varepsilon}{2}$, $\beta=\frac{\varepsilon}{2}$, and $T=m^{1+1/k}$.
\end{proof}

\section{Optimal non-reducibility for directed unweighted graphs}
\label{sec:nseth}

We prove the following theorem, which is a generalization of Theorem~\ref{thm:nseth-intro}.
\begin{theorem}
\label{thm:nseth-intro-1}
  Let $k\ge 2$ be a positive integer.
  Assuming NSETH (NUNSETH), for any $\delta>0$ and $\varepsilon>0$, a $2-\frac{1}{k}+\varepsilon$ approximation of Diameter on directed weighted graphs satisfying $W_{max}/W_{min}\le m^{o(1)}$ with time complexity $m^{1+1/k+\delta}$ is \emph{not} SETH-hard for deterministic (randomized) reductions.
\end{theorem}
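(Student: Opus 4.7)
The plan is to follow the same reduction chain as in Section~\ref{sec:nseth-weight}: apply Lemma~\ref{lem:gap-to-approx} to reduce the $(2-\tfrac1k+\varepsilon)$-approximate problem (on graphs with $W_{max}/W_{min}\le m^{o(1)}$) to the promise problem $D'/D$-Diameter with $D'=(2-\tfrac1k+\varepsilon)D$, then apply Lemma~\ref{lem:cgimps} (resp.\ Lemma~\ref{lem:nunseth} for randomized reductions) to reduce non-SETH-hardness to showing this promise problem lies in $\nconplus[m^{1+1/k+o(1)}]$. The $\mathcal{A}_N$ side is identical to the one in Theorem~\ref{thm:nseth-weight-conj-ncon}: nondeterministically guess a vertex, compute its in- and out-eccentricity via Dijkstra in time $\tilde O(m)$, and accept iff either exceeds $D$.

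The main work is to construct $\mathcal{A}_{coN}$ by adapting the template of $\mathcal{A}_{coN}$ from Theorem~\ref{thm:nseth-weight-conj-ncon}, but with the hopset step replaced by a subsampling step---since no sufficiently fast small-hop hopset is known for directed weighted graphs. For each $\ell=1,\dots,k-1$, the algorithm nondeterministically guesses an anchor $u_\ell$ and computes $X_\ell^{in}\defeq N_{\le(\ell/k)D}^{in}(u_\ell)$ via Dijkstra; by Lemma~\ref{lem:struct} with $M_1=m^{1-\ell/k}$ and $M_2=m^{\ell/k}$, either such an anchor exists giving $|X_\ell^{in}|\le \tilde O(m^{1-\ell/k})$, or a symmetric anchor exists giving $|X_{k-\ell}^{out}|\le \tilde O(m^{\ell/k})$. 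The algorithm then nondeterministically subsamples each set to $X_\ell^{in,\mathrm{sub}}$ of size $|X_\ell^{in}|\cdot m^{o(1)}/(\varepsilon D)$ (taking the whole set when $D$ is small enough that this target exceeds $|X_\ell^{in}|$), runs Dijkstra from each element of $X_{k-1}^{in,\mathrm{sub}}$ and $X_{k-1}^{out,\mathrm{sub}}$ and multi-source Dijkstra from the other subsampled sets, nondeterministically guesses length-at-most-$D$ paths in $G$ itself between every pair $(x,x')\in X_\ell^{in,\mathrm{sub}}\times X_{k-1-\ell}^{out,\mathrm{sub}}$, and accepts iff for some $\ell$ all the expected distance inequalities and path-witnessing conditions hold; by the triangle inequality, acceptance implies diameter at most $(2-\tfrac1k+\varepsilon)D=D'$.

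The crucial new ingredient (and the only place the assumption $W_{max}/W_{min}\le m^{o(1)}$ enters) is the subsampling correctness lemma: for any $v$ whose shortest path to $u_\ell$ has length at most $D$, its final length-$\varepsilon D$ subpath contains at least $\varepsilon D/W_{max}=\varepsilon D\cdot m^{-o(1)}$ vertices (by the bounded weight ratio), each lying in $X_\ell^{in}$ and at distance at most $D-(\ell/k)D+\varepsilon D$ from $v$, so a uniform subsample of $X_\ell^{in}$ of the stated size hits one such vertex for every $v$ with high probability. The main obstacle will be balancing the subsampling factor across two regimes: when $D\le m^{o(1)}$, no subsampling is needed since path-guessing with paths of $\le m^{o(1)}$ edges already fits in $m^{1+1/k+o(1)}$; when $D=m^{\Omega(1)}$, subsampling by a factor $\approx\sqrt{D}$ shrinks the pair count by $D$ so that guessing length-$D$ paths still costs $m^{1+1/k+o(1)}$, while the subsampling distance loss $\sqrt{D}\cdot m^{o(1)}\ll\varepsilon D$ stays acceptable. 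Unifying the two regimes by nondeterministically guessing the appropriate subsampling factor for each $\ell$, and packaging the resulting algorithm into the $\nconplus$ format using Lemma~\ref{lem:struct} to guarantee at least one $\ell$-configuration is valid, completes the proof.
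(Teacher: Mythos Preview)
Your overall reduction chain (Lemma~\ref{lem:gap-to-approx}, then Lemma~\ref{lem:cgimps}/\ref{lem:nunseth}, then building $\mathcal{A}_N$ and $\mathcal{A}_{coN}$) is correct, and the subsampling idea---that a shortest path into an anchor's ball contributes at least $\varepsilon D/W_{max}$ nearby vertices, which is exactly where the $W_{max}/W_{min}\le m^{o(1)}$ assumption enters---is indeed the key new ingredient over Section~\ref{sec:nseth-weight} and is precisely what drives the paper's Lemma~\ref{lem:struct-1}.

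However, there is a genuine gap in how you combine this with Lemma~\ref{lem:struct}. You assert that for each $\ell$, Lemma~\ref{lem:struct} yields either an anchor $u_\ell$ with small $N^{in}_{\le(\ell/k)D}(u_\ell)$ or ``a symmetric anchor'' for the out side. But the second case of Lemma~\ref{lem:struct} produces a \emph{hitting set}, not an anchor neighborhood; your subsampling lemma, which walks along a shortest path toward a fixed anchor, simply does not apply to such a set. It can happen that for some $\ell$ neither an in-anchor nor an out-anchor exists (every vertex has both large in- and out-edge-balls), in which case neither $X_\ell^{in}$ nor $X_{k-\ell}^{out}$ is subsampleable, and your path-guessing step costs $|X_\ell^{in}|\cdot|X_{k-1-\ell}^{out}|\cdot D\approx m^{1+1/k}D$, which is too large. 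The paper avoids this by making the dichotomy asymmetric from the start (Lemma~\ref{lem:struct-1}): the threshold separating the two cases is shifted to $m^{1-\ell/k}r$ with $r=\varepsilon D/2$, so that the anchor case yields an $\ell$-out set of size $\tilde O(m^{1-\ell/k}W_{max})$ via the path argument, while the hitting-set case \emph{automatically} yields a $(k-\ell)$-in set of size $\tilde O(m^{\ell/k}/r)$---already carrying the $1/D$ factor with no subsampling needed. Only one side ever uses the path argument, but the product of set sizes is then $\tilde O(m^{1+1/k}W_{max}/D)$, which exactly cancels the path length $D$. This also makes your two-regime split and the $\sqrt D$ subsampling unnecessary: the argument works uniformly for all $D$.
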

As in Section~\ref{sec:nseth-weight}, by Lemma~\ref{lem:gap-to-approx}, to prove non-reducibility of the approximation problem, Theorem~\ref{thm:nseth-intro-1}, it suffices to prove the non-reducibility of the promise problem.
\begin{theorem}
  \label{thm:nseth}
  Let $D$ be a real number, $k\ge 2$ be a positive integer, $\varepsilon>0$ and $D' = (2-\frac{1}{k}+\varepsilon)D$.
  Under NSETH (NUNSETH), for any $\delta>0$, $D'/D$-Diameter on directed weighted graphs with time complexity $m^{1+1/k+\delta}\cdot (W_{max}/W_{min})$ is \emph{not} SETH-hard for deterministic (randomized) reductions.
\end{theorem}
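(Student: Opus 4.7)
I would follow the template of Section~\ref{sec:nseth-weight}. By Lemma~\ref{lem:cgimps} (for deterministic reductions) and Lemma~\ref{lem:nunseth} (for randomized reductions), it suffices to prove that $D'/D$-Diameter on directed weighted graphs lies in $\nconplus[T]$ for $T = m^{1+1/k+o(1)} \cdot W_{max}/W_{min}$. The N-side algorithm $\mathcal{A}_N$ is exactly as in Theorem~\ref{thm:nseth-weight-conj-ncon}: guess a vertex, run SSSP in $O(m)$ time, and accept iff its eccentricity exceeds $D$; properties 1 and 3 of Definition~\ref{def:nconplus} are immediate. The coN-side algorithm $\mathcal{A}_{coN}$ is where the new ideas are needed.

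\textbf{The coN-skeleton.} $\mathcal{A}_{coN}$ follows the same skeleton as in the proof of Theorem~\ref{thm:nseth-weight-conj-ncon}: for $\ell = 1, \ldots, k-1$, nondeterministically choose pivot sets $X_\ell^{in}$ and $X_\ell^{out}$; apply Lemma~\ref{lem:struct} at each level $\ell$ with $D_1 = \ell D/k$ and $D_2 = (k-\ell) D/k$ to argue that in any graph of diameter at most $D$, at least one of $k$ configurations is witnessed (the two extreme cases using $X_{k-1}^{in}$ or $X_{k-1}^{out}$ alone, plus $k-2$ interior pairings $(X_\ell^{in}, X_{k-1-\ell}^{out})$); and accept in the witnessed case, giving $d(v,v') \le \ell D/k + D(1+\varepsilon) + (k-1-\ell)D/k < D'$ by triangle inequality. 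The obstacle, inherited from Theorem~\ref{thm:nseth-weight-conj-ncon}, is certifying $d(x, x') \le D(1+\varepsilon)$ for $x \in X_\ell^{in}$, $x' \in X_{k-1-\ell}^{out}$ within the budget $m^{1+1/k+o(1)} \cdot W_{max}/W_{min}$. Theorem~\ref{thm:nseth-weight-conj-ncon} handled this with a hopset, but no sufficiently fast directed hopset with small enough $\beta$ is known, so I would instead replace the hopset step by a subsampling argument that exploits the bounded weight ratio.

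\textbf{The subsampling primitive.} The key lemma I need is as follows: for any vertex $v$ and threshold $\tilde D \le D$ in a directed graph of diameter at most $D$ and maximum edge weight $W_{max}$, every vertex $u$'s shortest $u$-to-$v$ path contains at least $\lceil \varepsilon D / W_{max} \rceil$ vertices of $Y := N_{\le \tilde D}^{in}(v)$ that are within distance $D - \tilde D + \varepsilon D$ of $u$. This holds because the prefix length along the path grows by at most $W_{max}$ per hop; the prefix window $[d(u,v) - \tilde D,\ D - \tilde D + \varepsilon D]$ spans distance at least $\varepsilon D$, and hence contains at least $\varepsilon D / W_{max}$ distinct hops, all of whose endpoints lie in $Y$ and are reachable from $u$ within $D - \tilde D + \varepsilon D$. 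A standard Chernoff bound then shows that a uniformly random subset $Y' \subseteq Y$ of size $\tilde O(|Y| \cdot W_{max}/(\varepsilon D))$ satisfies $d(u, Y') \le D - \tilde D + \varepsilon D$ for all $u$ with high probability; the symmetric statement holds for $N_{\le \tilde D}^{out}(v)$. We guess $Y'$ nondeterministically rather than sampling it randomly.

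\textbf{Main obstacle.} With this primitive in hand, at each interior level $\ell$ the coN-algorithm chooses its pivot to be either a hitting-set pivot (of size $\tilde O(m/B_\ell)$, obtained from the case analysis of Lemma~\ref{lem:struct}) or a subsampled-neighborhood pivot (of size $\tilde O(B_\ell \cdot W_{max}/(\varepsilon D))$, obtained from the subsampling primitive applied to a close ball whose size is $B_\ell$); pairwise-distance verification between $X_\ell^{in}$ and $X_{k-1-\ell}^{out}$ then reduces to multi-source SSSP from the smaller side. The main obstacle I anticipate is parameterizing $B_1, \ldots, B_{k-1}$ together with the pivot sizes so that (i) the Lemma~\ref{lem:struct} dichotomy still partitions every diameter-$\le D$ input into one of the $k$ witnessable configurations; (ii) every pivot used in an SSSP call has size at most $m^{1/k+o(1)} \cdot W_{max}/W_{min}$; and (iii) the $W_{max}/W_{min}$ factor in the final runtime cleanly absorbs the $W_{max}/(\varepsilon D)$ subsampling overhead at each of the $k-1$ levels, matching the stated bound $m^{1+1/k+\delta} \cdot W_{max}/W_{min}$. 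Once parameters are set, correctness follows from the same triangle-inequality argument as in Theorem~\ref{thm:nseth-weight-conj-ncon}, with the additional additive $\varepsilon D$ slack from subsampling absorbed into the $\varepsilon$ of $D' = (2 - 1/k + \varepsilon) D$.
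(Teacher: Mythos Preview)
Your plan is largely on target: reducing to an $\nconplus$ statement, the same $\mathcal{A}_N$, the pivot-set skeleton with a level-$\ell$ dichotomy, and above all the subsampling primitive (each $u\to v$ shortest path contributes $\ge \varepsilon D/W_{\max}$ vertices in a window) are exactly what the paper uses (this is precisely Lemma~\ref{lem:struct-1}). Where your proposal departs from the paper---and where it actually breaks---is the line ``pairwise-distance verification between $X_\ell^{in}$ and $X_{k-1-\ell}^{out}$ then reduces to multi-source SSSP from the smaller side.''

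Here is the obstruction. After subsampling, at each dichotomy level $j$ the two alternative pivots have sizes whose product is forced: $|X_j^{out}|\cdot |X_{k-j}^{in}|=\tilde\Theta(m\,W_{\max}/r)$ with $r=\Theta(\varepsilon D)$. For SSSP-from-the-smaller-side to fit in $\tilde O(m^{1+1/k}W_{\max})$ at every interior pairing, you need for each $\ell$ either $B_{\ell-1}\le m^{1/k}r$ or $B_\ell\ge m^{1-1/k}/W_{\max}$, together with the boundary constraints $B_1\ge m^{1-1/k}/W_{\max}$ and $B_{k-1}\le m^{1/k}r$. A short induction shows these constraints are simultaneously satisfiable only when $m^{1-2/k}\le r\,W_{\max}=\Theta(\varepsilon D\,W_{\max})$; for unweighted graphs with bounded $D$ and any $k\ge 3$ this fails, so no choice of your $B_\ell$'s works. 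In other words, the very regime where hopsets are unavailable (small $D$ relative to $m$) is the regime where SSSP verification is too expensive.

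The paper's fix is not SSSP but \emph{path guessing}: normalize $W_{\min}=1$, so any shortest path of length $\le D$ has at most $D$ edges, and for each pair $(x,x')$ nondeterministically guess $\le D$ vertices forming an $x\to x'$ path of length $\le D$. This costs $|X_{\ell-1}^{out}|\cdot|X_{k-\ell}^{in}|\cdot D$. With the (asymmetric) parameterization $|X_\ell^{out}|=\tilde O(m^{1-\ell/k}W_{\max})$ and $|X_\ell^{in}|=\tilde O(m^{1-\ell/k}/r)$, that product is $\tilde O(m^{1+1/k}W_{\max}\cdot D/r)=\tilde O_\varepsilon(m^{1+1/k}W_{\max})$: the factor $D$ from path length exactly cancels the $1/r\sim 1/D$ that subsampling put into the in-pivot. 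This cancellation is the missing idea in your plan; once you replace the SSSP step by guessed $\le D$-hop paths (and let the in/out sizes be asymmetric as above), your outline becomes the paper's proof.
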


By Lemma~\ref{lem:cgimps} and Lemma~\ref{lem:nunseth}, to prove Theorem~\ref{thm:nseth}, it suffices to prove the following result.
\begin{theorem}
  Let $D$ be a real number, $k$ be a positive integer, $\varepsilon>0$ and $D' = (2-\frac{1}{k}+\varepsilon)D$.
  $D'/D$-Diameter on directed weighted graphs with $W_{min}>0$ is in $\nconplus[\tilde O(m^{1+1/k}\cdot (W_{max}/W_{min}))]$. 
\label{thm:ncon}
\end{theorem}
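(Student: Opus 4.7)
The plan is to follow the template of the proof of Theorem~\ref{thm:nseth-weight-conj-ncon}, constructing a pair of nondeterministic algorithms $\mathcal{A}_N$ and $\mathcal{A}_{coN}$ witnessing membership in $\nconplus$. The algorithm $\mathcal{A}_N$ is identical to the one used there: nondeterministically guess a vertex, compute its out-eccentricity by single-source shortest paths in $\tilde O(m)$ time, and accept as ``YES'' if it exceeds $D$. This handles the YES-side and, jointly with $\mathcal{A}_{coN}$, the promise-violation clause of Definition~\ref{def:nconplus}. The heart of the proof is $\mathcal{A}_{coN}$, which must certify that the diameter is at most $D$ in time $\tilde O(m^{1+1/k}\cdot W_{max}/W_{min})$ without access to a sufficiently strong directed hopset.

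The outer structure of $\mathcal{A}_{coN}$ mirrors that of Theorem~\ref{thm:nseth-weight-conj-ncon}: for each $\ell=1,\dots,k-1$, nondeterministically guess carefully-sized landmark sets $X_\ell^{in}, X_\ell^{out}$, run multi-source shortest paths to verify $d(v,X_\ell^{in})\le(\ell/k)D$ and $d(X_{k-1-\ell}^{out},v)\le((k-1-\ell)/k)D$ for all $v$, and apply the analog of Lemma~\ref{lem:struct} to conclude that either $X_{k-1}^{in}$ or $X_{k-1}^{out}$ alone already certifies the diameter bound (output ``NO''), or some middle $\ell$ yields a good pair $(X_\ell^{in},X_{k-1-\ell}^{out})$ for which we must verify $d(x,x')\le(1+\varepsilon/2)D$ for every $(x,x')\in X_\ell^{in}\times X_{k-1-\ell}^{out}$. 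The crucial new ingredient is that this last certification is performed via a guessed \emph{subsampled} intermediate landmark set rather than via hopset-based path guessing. Concretely, nondeterministically choose a center vertex $v$ and a radius $\tilde D$, form $Y=N_{\le\tilde D}^{in}(v)$, and guess a uniformly random subset $Z\subseteq Y$ of size $\tilde O(|Y|\cdot W_{max}/(\varepsilon D))$; run single-source shortest paths to and from each vertex of $Z$, and accept ``NO'' iff $d(x,Z)+d(Z,x')\le(1+\varepsilon/2)D$ for every pair. The subsampling is valid because, along any shortest $x\to v$ path of length at most $D$, the segment at distance in $[\tilde D-\varepsilon D,\tilde D]$ from $v$ contains at least $\varepsilon D/W_{max}$ vertices (since each edge contributes length at most $W_{max}$), so by a coupon-collector argument the random $Z$ meets this segment with positive probability; the factor $W_{max}/W_{min}$ enters the runtime when writing this subsampling rate in terms of the non-normalized ball $Y$.

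The main obstacle will be balancing $|X_\ell^{in}|$, $|X_\ell^{out}|$, and $|Z|$ consistently across all $\ell$ so that the total runtime is $\tilde O(m^{1+1/k}\cdot W_{max}/W_{min})$ across the $k-1$ levels, while simultaneously ensuring (i) the hitting-set analog of Lemma~\ref{lem:struct} applies at every level and for each choice of $\tilde D$, and (ii) the cumulative additive error from the two layers of subsampling (at the $X_\ell$'s and at $Z$) stays below $\varepsilon D$, so that the ``NO'' output correctly certifies diameter below $D'=(2-1/k+\varepsilon)D$. A minor technical subtlety, exactly as in Theorem~\ref{thm:nseth-weight-conj-ncon}, is verifying the promise-violation case of Definition~\ref{def:nconplus}, which follows from the exhaustiveness of the two algorithms: either some eccentricity exceeds $D$, in which case $\mathcal{A}_N$ can output YES, or the diameter is at most $D$, in which case $\mathcal{A}_{coN}$ has a valid NO certificate.
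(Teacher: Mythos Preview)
Your intermediate-hub idea has a genuine gap. You choose $Z\subseteq Y=N_{\le\tilde D}^{in}(v)$ so that, for each $x\in X_\ell^{in}$, the segment of the shortest $x\to v$ path at distance $[\tilde D-\varepsilon D,\tilde D]$ from $v$ is hit by some $z\in Z$; this indeed bounds $d(x,z)$. But nothing bounds $d(z,x')$ for $x'\in X_{k-1-\ell}^{out}$: all you know about $z$ is $d(z,v)\le\tilde D$, which yields only $d(z,x')\le\tilde D+D$, and hence $d(x,Z)+d(Z,x')$ can be close to $2D$, not $(1+\varepsilon/2)D$. You also never control $|Y|$, so $|Z|=\tilde O(|Y|W_{max}/(\varepsilon D))$ could be $\Theta(nW_{max}/D)$ and the $|Z|\cdot m$ SSSP cost blows past the target when $D$ is small.

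The paper does not introduce a separate hub; it puts the path-segment subsampling \emph{inside the structural lemma} and then guesses paths directly. Concretely, after normalizing $W_{\min}=1$ and setting $r=\varepsilon D/2$, the sizes are made \emph{asymmetric}: an $\ell$-out set has size $\tilde O(m^{1-\ell/k}W_{\max})$ with guarantee $d(X,v)\le(\ell/k)D+r$, while an $\ell$-in set has size $\tilde O(m^{1-\ell/k}/r)$ with guarantee $d(v,X)\le(\ell/k)D$. The segment-counting argument (at least $r/W_{\max}$ vertices on the relevant stretch of any shortest path) is exactly what proves that either a good $\ell$-out or a good $(k-\ell)$-in set exists at these sizes (Lemma~\ref{lem:struct-1}). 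Certification then simply guesses, for every pair $(x,x')\in X_{\ell-1}^{out}\times X_{k-\ell}^{in}$, an explicit $x\to x'$ path with at most $D/W_{\min}=D$ vertices. The cost is
\[
|X_{\ell-1}^{out}|\cdot|X_{k-\ell}^{in}|\cdot D \;=\;\tilde O\!\left(m^{1-\frac{\ell-1}{k}}W_{\max}\cdot \frac{m^{\ell/k}}{r}\cdot D\right)\;=\;\tilde O_\varepsilon\!\left(m^{1+1/k}W_{\max}\right),
\]
the factor $D$ from the path length cancelling against the $1/r=\Theta(1/D)$ in the in-set size. So the fix is not to add a hub $Z$, but to reshape the landmark sizes and guess paths outright.
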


We prove Theorem~\ref{thm:ncon} even when $D$ can depend on $m$.
Since $D$ can depend on $m$, we may without loss of generality assume $W_{min}=1$, by scaling the edge weights by $1/W_{min}$ so that the new parameters $D$ and $D'$ are $D/W_{min}$ and $D'/W_{min}$, respectively, and the quantity $W_{max}/W_{min}$ remains unchanged.
We now show that $D'/D$-Diameter on directed weighted graphs with $W_{min}=1$ is in $\nconplus[m^{1+1/k+o(1)}W_{max}]$.

Throughout this section, fix a positive integer $k$ and a positive real number $D$, and a real number $\varepsilon>0$.
Let $r=\varepsilon D/2$. 
Throughout, $G=(V,E)$ is a graph with minimum edge-weight 1 and maximum edge-weight $W_{max}$.
Suppose we are looking to distinguish between graphs of diameter at most $D$ and graphs of diameter at least $D'=(2-\frac{1}{k}+\varepsilon)D$.

For $\ell=0,\dots,k$, call a subset $X$ of vertices \emph{$(\ell,r,k,D)$-out} (or simply $\ell$-out when $r,k,D$ are understood) if $|X|\le 8m^{1-\ell/k}W_{max}\log m$ and call it \emph{good $(\ell,r,k,D)$-out} (good $\ell$-out) if additionally $d(X,v) \le \frac{\ell}{k}D+r$ for all vertices $v$.
For $\ell=0,\dots,k$, call a subset $X$ of vertices \emph{$(\ell,r,k,D)$-in} (or simply $\ell$-in when $r,k,D$ are understood) if $|X|\le 8m^{1-\ell/k}r^{-1}\log m$ and call it \emph{good $(\ell,r,k,D)$-in} (good $\ell$-in) if additionally $d(v,X) \le \frac{\ell}{k}D$ for all vertices $v$.

\begin{lemma}
  For all $\ell=1,\dots,k-1$, there either exists a set $X$ that is good $\ell$-out or a set $X'$ that is good $k-\ell$-in.
\label{lem:struct-1}
\end{lemma}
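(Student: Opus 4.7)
I would adapt the two-case argument of Lemma~\ref{lem:struct}, replacing its uniform edge-sampling step by a vertex-sampling argument that exploits the bounded maximum edge weight $W_{max}$, as previewed in Section~\ref{sec:techniques}. Set $D_1\defeq\frac{k-\ell}{k}D$ and $D_2\defeq\frac{\ell}{k}D+r$, so that $D_1+D_2=D+r$, and let $M_1\defeq 8m^{\ell/k}r^{-1}\log m$ and $M_2\defeq 8m^{1-\ell/k}W_{max}\log m$ denote the target sizes. I would then fix a threshold $K$ of order $m^{\ell/k}/W_{max}$ (up to polylog factors) that balances the two cases below.

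In \emph{Case A} where some $v^*$ satisfies $|N_{\le D_2}^{in}(v^*)|\le K$, I would let $Y=N_{\le D_2}^{in}(v^*)$ and take $X'$ to be $v^*$ together with a uniformly random subset of $Y$ of size $O(|Y|W_{max}\log n/r)\le M_1$. To verify that $d(u,X')\le D_1$ for every vertex $u$, I would split on $d(u,v^*)$: for $d(u,v^*)\le D_1$ the inclusion of $v^*$ suffices, and for $d(u,v^*)>D_1$ a shortest path $u\to v^*$ of length $\le D$ passes through at least $r/W_{max}$ vertices $y\in Y$ with $d(y,v^*)\in[D_2-r,D_2]$ (because consecutive path vertices change in distance to $v^*$ by at most $W_{max}$, so the $r$-wide interval is crossed in at least $r/W_{max}$ steps), each satisfying $d(u,y)\le D-(D_2-r)=D_1$. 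A union bound over $u$ then yields some random subsample of $Y$ that covers all vertices.

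In \emph{Case B} where every $u$ has $|N_{\le D_2}^{in}(u)|>K$, I would take $X$ to be $M_2$ uniformly random vertices. By the choice of $K$, the probability that $X\cap N_{\le D_2}^{in}(u)=\emptyset$ is at most $e^{-M_2K/n}\le 1/n^2$ for each $u$; a union bound shows that some such $X$ is a good $\ell$-out with $d(X,u)\le D_2$.

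The main obstacle I anticipate is the intermediate regime in Case~A where $d(u,v^*)\in(D_1,D_2-r)$, which is nonempty only when $\ell>k/2$: here the shortest path $u\to v^*$ never reaches the annulus $[D_2-r,D_2]$ around $v^*$, so the $r/W_{max}$ candidate count above fails. Instead, I would collect candidates from the length-$D_1$ prefix of the shortest path, obtaining at least $D_1/W_{max}\ge r/W_{max}$ path-vertices $y\in Y$ with $d(u,y)\le D_1$; the last inequality uses $r\le D_1$, which holds whenever $\varepsilon$ is sufficiently small relative to $1/k$ (truly degenerate $\varepsilon$ or small $D$ can be handled by running APSP directly). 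Once the candidate count is secured in every regime, calibrating $K$ so that $K\cdot M_2\gtrsim n\log n$ and $|Y|W_{max}\log n/r\le M_1$ hold simultaneously is routine arithmetic, driven by the identity $M_1M_2=\tilde\Theta(m\cdot W_{max}/r)$.
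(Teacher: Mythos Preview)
Your plan is a coherent dual of the paper's argument, but it swaps the roles of the two cases in a way that creates the very constraint $r\le D_1$ you flag. The paper thresholds on the \emph{out}-side, namely on $|E^{out}_{\le (k-\ell)D/k}(v)|$: in the ``small'' case it subsamples the vertices incident to $E^{out}_{\le (k-\ell)D/k}(v)$ to build the good $\ell$-\emph{out} set, and in the ``large'' case it edge-samples globally to build the good $(k-\ell)$-\emph{in} set. You instead threshold on $|N^{in}_{\le D_2}(v^*)|$ and, in your small case, subsample to build the good $(k-\ell)$-\emph{in} set.

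That choice matters because the two goodness conditions are asymmetric: good $\ell$-out requires $d(X,u)\le \tfrac{\ell}{k}D+r$, while good $(k-\ell)$-in requires $d(u,X')\le \tfrac{k-\ell}{k}D$ with \emph{no} $+r$ slack. In the paper's subsample case, the candidate window along a shortest $v\to u$ path is the overlap of an out-ball of radius $\tfrac{k-\ell}{k}D$ (membership in $U$) and an in-ball of radius $\tfrac{\ell}{k}D+r$ (membership in $Z_u$); since these radii sum to $D+r$ and $d(v,u)\le D$, the overlap always has width at least $r$, giving $\ge r/W_{max}$ candidates uniformly, with no relation between $r$ and $D_1$ needed. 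In your subsample case the two radii are $D_2=\tfrac{\ell}{k}D+r$ and $D_1=\tfrac{k-\ell}{k}D$; when $d(u,v^*)<D_2$ the usable window along the $u\to v^*$ path is only $D_1$ wide, which is why you are forced into $r\le D_1$.

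So your proof is correct when $\varepsilon\le 2(k-\ell)/k$ (in particular $\varepsilon\le 2/k$ across all $\ell$), and your ``large'' case via vertex sampling is fine. The remark that ``truly degenerate $\varepsilon$'' can be handled by APSP is off target here---this is a purely structural statement about diameter-$D$ graphs, not an algorithm---though for the downstream theorem large $\varepsilon$ is indeed harmless for other reasons. If you want the lemma in full generality, the cleanest fix is to flip your orientation to match the paper: threshold on an \emph{out}-neighborhood and, in the small case, subsample to produce the $\ell$-out set, so that the $+r$ slack absorbs the shortfall.
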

\begin{proof}
  Suppose there exists a vertex $v$ with $|E_{\le \frac{k-\ell}{k}D}^{out}(v)| \le m^{1-\ell/k}r$.
  Let $U$ denote the vertices incident to $E^*\defeq E_{\le \frac{k-\ell}{k}D}^{out}(v)$.
  For every vertex $u$, let $Z_u$ denote the set of vertices $w$ such that $d(w,u)\le \frac{\ell}{k}D + r$ and such that $w$ is incident to an edge in $E^*$.
  
  We claim that each $Z_u$ has size at least $r$. 
  consider a shortest path $v=z_0,z_1,\dots,z_s=u$ from $v$ to $u$.
  Let $i$ denote the largest index such that $d(z_i,u)> \frac{\ell}{k}D + r$ and $j$ denote the smallest index such that $d(v,z_j) > \frac{k-\ell}{k}D$.
  We must have $j> i$ or else $d(v,u) \ge d(v,z_j) + d(z_i,u) > D+r$, contradicting diameter $D$.
  By minimality of $j$, we have $d(v,z_{j'}) < \frac{k-\ell}{k}D$ for $j'\le j-1$.
  In particular $z_{i+1},z_{i+2},\dots,z_{j-1}$ is in $N_{\le \frac{k-\ell}{k}D}^{out}(v)$, so $z_{j'-1}z_{j'}$ is in $E^*$ for $j'\le j$ and thus $z_{j'}\in U$ for $j'\le j$.
  By maximality of $i$, we have $d(z_{j'},u)\le \frac{\ell}{k}D+r$ for $j'>i$, so $z_{i+1},\dots,z_j$ are in $Z_u$ and $|Z_u|\ge j-i$. 
  We note that
  \begin{align}
    D \ge d(v,u) &= d(v,z_j) + d(z_i,u) - d(z_i,z_j)  \nonumber\\
    &> \frac{k-\ell}{k}D + \frac{\ell}{k}D+r - d(z_i,z_j) = D + r - d(z_i,d_j).
  \end{align}
  We conclude that $d(z_i,z_j) > r$, so in particular $|Z_u|\ge j-i > r/W_{max}$. 

  By Lemma~\ref{lem:select} with universe as vertex set $U$, which has size at most $M\defeq m^{1-\ell/k}r$, with $n$ sets $Z_u\subset U$ of size at least $K\defeq r/W_{max}$, we have that there exists a vertex subset $X\subset U$ of size at most $2M/K \log n = 2m^{1-\ell/k}W_{max}\log n$ that intersects each $Z_u$.
  This implies that $d(X,u)\le \frac{\ell}{k}D+r$ for all vertices $u\in V$.
  We conclude that $X$ is good $\ell$-out.

  Now suppose to the contrary that $|E_{\le \frac{k-\ell}{k}D}^{out}(v)| \ge m^{1-\ell/k}r$ for all vertices $v$.
  By Lemma~\ref{lem:select} with universe as the set $E$ of all edges, which has size $m$, with $n$ sets $E_{\le\frac{k-\ell}{k}D}^{out}(v)$ of size at least $m^{1-\ell/k}r$, we have that there exists a edge subset $F$ of size $2M/K\log n = 2m^{\ell/k}r^{-1}\log n$.
  Let $X'$ be the vertices incident to $F$, so that $|X'|\le 4m^{\ell/k}r^{-1}\log m$.
  Then, $X'$ contains a vertex inside each $N_{\le\frac{k-\ell}{k}D}^{out}(v)$, so $d(v,X')\le \frac{k-\ell}{k}D$ for all vertices $v$.
  We conclude $X'$ is good $(k-\ell)$-in.
\end{proof}
\begin{lemma}
  \label{lem:struct-2}
  Let $G$ be a graph with a good $(k-\ell)$-in set $X$ and a good $(\ell-1)$-out set $X'$. 
  Suppose further that $d(x,x')\le D$ for all $x\in X$ and $x'\in X'$.
  Then $G$ has diameter less than $D'$.
\end{lemma}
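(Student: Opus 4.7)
The plan is a three-hop triangle-inequality argument routing any two vertices $u,v$ through the sets $X$ and $X'$. The key observation is that the bridging hypothesis $d(x,x') \le D$ for every $x \in X, x' \in X'$ supplies a middle segment of length at most $D$, while the two hopping sets cover the endpoints with a combined budget of $\tfrac{k-\ell}{k}D + \tfrac{\ell-1}{k}D = \tfrac{k-1}{k}D$, plus the additive slack $r$ coming from the out-set. The indices $k-\ell$ (on the in-side) and $\ell-1$ (on the out-side) are chosen precisely so that these sum to $k-1$ rather than $k$, leaving exactly a $\tfrac{1}{k}D$ deficit that the $\varepsilon D$ gap in $D'$ will absorb.

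Concretely, given arbitrary $u,v \in V$, I would first invoke the good $(k-\ell)$-in property of $X$ to pick some $x \in X$ with $d(u,x) \le \tfrac{k-\ell}{k}D$, and then invoke the good $(\ell-1)$-out property of $X'$ to pick some $x' \in X'$ with $d(x',v) \le \tfrac{\ell-1}{k}D + r$. The hypothesis gives $d(x,x') \le D$, and the triangle inequality yields
\begin{align}
d(u,v) \;\le\; d(u,x) + d(x,x') + d(x',v) \;\le\; \tfrac{k-\ell}{k}D + D + \tfrac{\ell-1}{k}D + r \;=\; \bigl(2 - \tfrac{1}{k}\bigr)D + r.
\end{align}
Since $r = \varepsilon D/2$, this is $(2 - \tfrac{1}{k} + \tfrac{\varepsilon}{2})D$, which is strictly less than $D' = (2 - \tfrac{1}{k} + \varepsilon)D$, as required.

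There is no real obstacle here; the lemma is essentially a consistency check that the parameter choices made earlier (in Lemma~\ref{lem:struct-1} and in the definitions of good $\ell$-in/$\ell$-out sets) leave exactly enough slack for the full-$D$ bridging hop plus the $\varepsilon D/2$ additive error from the out-set to fit strictly inside the $(2 - \tfrac{1}{k} + \varepsilon)D$ target. The only subtlety worth mentioning is that the choice $r = \varepsilon D/2$ (rather than the naive $r = \varepsilon D$) is precisely what guarantees the strict inequality $d(u,v) < D'$.
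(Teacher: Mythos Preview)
Your proof is correct and essentially identical to the paper's: both route an arbitrary pair of vertices through $X$ and $X'$ via the triangle inequality, combining the $\tfrac{k-\ell}{k}D$ in-bound, the $D$ bridging bound, and the $\tfrac{\ell-1}{k}D+r$ out-bound to get $(2-\tfrac{1}{k})D+r<D'$.
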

\begin{proof}
  Let $v$ and $v'$ be vertices in $G$.
  As $X$ is good $(k-\ell)$-in, we have $d(v,X)\le \frac{k-\ell}{k}D$.
  As $X'$ is good $(\ell-1)$-out, we have $d(X',v')\le \frac{\ell-1}{k}D + r$.
  Thus, there exists $x\in X$ and $x'\in X'$ such that $d(v,x)\le \frac{k-\ell}{k}D$ and $d(x',v')\le \frac{\ell-1}{k}D + r$.
  By the triangle inequality, we have
  \begin{align}
    d(v,v')
    \le d(v,x) + d(x,x')+d(x',v') 
    &\le \frac{k-\ell}{k}D + D + \frac{\ell-1}{k}D + r  \nonumber\\
    &= \left(2 - \frac{1}{k}\right)D + r 
    < D'
  \label{}
  \end{align}
  This holds for all $v,v'$, so the diameter is at most $(2-\frac{1}{k}+\varepsilon)D$.
\end{proof}
\begin{lemma}
  Let $G$ be a graph with a good $(k-1)$-out set $X$.
  Suppose further that $d(v,x)\le D$ for all vertices $x\in X$ and all vertices $v\in V$.
  Then $G$ has diameter less than $D'$.
\label{lem:struct-3}
\end{lemma}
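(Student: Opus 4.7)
The plan is to prove this directly by triangle inequality, mirroring the structure of Lemma~\ref{lem:struct-2} but using the hypothesis that every $x \in X$ has in-eccentricity at most $D$ in place of the ``bipartite'' bound $d(x,x')\le D$ for pairs across two sets.

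More concretely, let $v, v' \in V$ be arbitrary vertices. Since $X$ is a good $(k-1)$-out set, by definition $d(X, v') \le \frac{k-1}{k}D + r$, so there exists some $x \in X$ with $d(x, v') \le \frac{k-1}{k}D + r$. By hypothesis, $d(v, x) \le D$. Applying the triangle inequality:
\begin{align}
d(v, v') \le d(v, x) + d(x, v') \le D + \frac{k-1}{k}D + r = \left(2 - \frac{1}{k}\right)D + r < D',
\end{align}
where the last inequality uses $r = \varepsilon D/2$ and $D' = (2 - \frac{1}{k} + \varepsilon)D$. Since $v, v'$ were arbitrary, the diameter is less than $D'$.

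I do not expect any obstacles here: the statement is a near-immediate consequence of the definition of ``good $(k-1)$-out'' and the stated bound on $d(v,x)$. The role of this lemma (together with Lemma~\ref{lem:struct-2} and an analogous ``in'' version) will presumably be to let the nondeterministic $\mathcal{A}_{coN}$ algorithm certify small diameter by guessing either an in-set or out-set witnessing one of the structural cases of Lemma~\ref{lem:struct-1}, without needing a hopset, so the proof should be stated in a form that is convenient to invoke later in Section~\ref{sec:nseth}.
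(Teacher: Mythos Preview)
Your proof is correct and essentially identical to the paper's own argument: pick $x\in X$ with $d(x,v')\le \frac{k-1}{k}D+r$ using that $X$ is good $(k-1)$-out, combine with the hypothesis $d(v,x)\le D$ via the triangle inequality, and conclude $d(v,v')<D'$. The only difference is that you spell out why $(2-\tfrac{1}{k})D+r<D'$, which the paper leaves implicit.
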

\begin{proof}
  For any vertices $v,v'\in V$, there exists a vertex $x\in X$ such that $d(x,v') \le \frac{k-1}{k}D+r$.
  Then by the triangle inequality, $d(v,v') \le d(v,x)+d(x,v') \le D + \frac{k-1}{k}D + r < D'$.
  This holds for all $v,v'$, so the diameter is less than $D'$.
\end{proof}
\begin{lemma}
  For any $\ell=1,\dots,k-1$, given a $\ell$-out ($\ell$-in) set $X$ it is possible to check in time $\tilde O(m)$ whether $X$ is good $\ell$-out ($\ell$-in).
\label{lem:good-1}
\end{lemma}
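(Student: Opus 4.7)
The plan is to reduce each check to a single multi-source shortest-path computation. For the $\ell$-out case, I would form an auxiliary graph $G^+$ by adding a new ``super-source'' vertex $s$ together with a zero-weight directed edge from $s$ to every vertex $x\in X$, and leave all other edges of $G$ untouched. Then for every $v\in V$, the distance $d_{G^+}(s,v)$ equals $\min_{x\in X} d_G(x,v)=d_G(X,v)$, so running a single-source shortest-path algorithm from $s$ in $G^+$ yields $d_G(X,v)$ for every vertex $v$ simultaneously. I would accept iff all these values are at most $\tfrac{\ell}{k}D+r$.

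For the $\ell$-in case, I would do the symmetric construction: add a ``super-sink'' vertex $t$ with zero-weight edges from every $x\in X$ to $t$, and compute single-source shortest paths from $t$ in the edge-reversal of $G^+$. This yields $d_G(v,X)$ for every $v$, and I would accept iff all these values are at most $\tfrac{\ell}{k}D$.

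Correctness is immediate from the definition of the super-source (super-sink), since any shortest $s$–$v$ path in $G^+$ must use exactly one $s\to x$ edge (of weight $0$) followed by a $G$-path from $x$ to $v$, so $d_{G^+}(s,v)=\min_{x\in X} d_G(x,v)$. For the runtime, $G^+$ has $n+1$ vertices and $m+|X|$ edges, both of size $\tilde O(m)$, and a standard implementation of Dijkstra (the weights are nonnegative since $W_{\min}=1$ and the auxiliary edges have weight $0$) runs in $\tilde O(m)$ time. There is no genuine obstacle here; the lemma is essentially a bookkeeping statement that ``check if $X$ dominates every vertex at distance $\le \tfrac{\ell}{k}D+r$'' can be implemented via one multi-source shortest-path call, and I expect the write-up to be only a few lines.
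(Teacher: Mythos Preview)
Your proposal is correct and is essentially the same approach as the paper's: the paper's one-line proof simply says ``run multi-source shortest path from $X$ to compute all $d(X,v)$ or all $d(v,X)$,'' and your super-source/super-sink construction is exactly the standard way to implement that primitive. Your write-up just makes explicit what the paper leaves implicit.
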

\begin{proof}
  Run multi-source shortest path from $X$ to compute all $d(X,v)$ or all $d(v,X)$.
\end{proof}

We now present our algorithm.
\begin{proof}[Proof of Theorem~\ref{thm:ncon}]
  Let $G$ be the input graph.
  Again recall that $D'/D$-Diameter is a promise problem $(\Pi_{YES},\Pi_{NO})$ where the instances in $\Pi_{YES}$ are graphs of diameter at least $D'$, and the instances in $\Pi_{NO}$ are graphs of diameter at most $D$.
  Let $\mathcal{A}_N$ be the nondeterministic algorithm that guesses a vertex, outputs ``YES'' if the eccentricity is greater than $D$, and ``NO'' otherwise.
  For any graph \emph{not} in $\Pi_{NO}$, i.e., any graph with diameter greater than $D$, there exist nondeterministic choices such that $\mathcal{A}_N$ outputs ``YES''. Furthermore, for graphs in $\Pi_{NO}$, $\mathcal{A}_N$ always outputs ``NO''.
  As $\mathcal{A}_N$ runs in time $m^{1+o(1)}$, we have that properties 1 and 3 of Definition~\ref{def:nconplus} for $\nconplus[m^{1+1/k+o(1)}W_{max}]$ are satisfied.

  It now remains to construct a nondeterministic algorithm $\mathcal{A}_{coN}$ satisfying property 2 of Definition~\ref{def:nconplus} for $\nconplus[m^{1+1/k+o(1)}W_{max}]$.
  Define $\mathcal{A}_{coN}$ to be the following algorithm.
\begin{enumerate}
\item For each $\ell=0,\dots,k$, nondeterministically choose a set $X_\ell^{out}$ and $X_{\ell}^{in}$, such that $X_\ell^{out}$ is $\ell$-out and $X_{\ell}^{in}$ is $\ell$-in.
\item For each $\ell=0,\dots,k$, check if $X_{\ell}^{out}$ is good $\ell$-out, and check if $X_{\ell}^{in}$ is good $\ell$-in.
\item If $X_{k-1}^{out}$ is good $(k-1)$-out, then run shortest path from each element of $X_{k-1}^{out}$. If each vertex in $X_{k-1}^{out}$ has in-eccentricity at most $D$, output ``NO''. 
\item Let $\ell\in\{0,\dots,k-1\}$ be such that $X_{\ell-1}^{out}$ is good $(\ell-1)$-out and $X_{k-\ell}^{in}$ is good $(k-\ell)$-in. If no such $\ell$ exists, then output ``YES''.
\item For each $x\in X_{\ell-1}^{out}$ and each $x'\in X_{k-\ell}^{in}$, nondeterministically choose up to $D$ vertices that form a path from $x$ to $x'$.
\item If each path has length at most $D$, output ``NO''. Otherwise, output ``YES''.
\end{enumerate}

\textbf{Runtime.}
The first step takes time $\tilde O(m)$ to choose the $X_\ell^{out}$.
The second step takes time $\tilde O(m)$, as it takes $\tilde O(m)$ time to check if $X_\ell^{out}$ is good $\ell$-out or $X_\ell^{in}$ is good $\ell$-in by Lemma~\ref{lem:good-1}.
The third step takes time $\tilde O(m\cdot |X_{k-1}^{out}|)\le \tilde O(m^{1+1/k}W_{max})$.
The fourth step takes constant time (we assume $k$ is constant).
The  fifth step takes time $O(|X_{\ell-1}^{out}|\cdot |X_{k-\ell}^{in}|\cdot D) = \tilde O(m^{1-(\ell-1)/k}\cdot m^{\ell/k}/(\varepsilon D)\cdot D) = \tilde O_\varepsilon(m^{1+1/k})$.
The total running time is thus $\tilde O_\varepsilon(m^{1+1/k}W_{max})$

\textbf{Correctness.}
We first show that, if we output ``NO'', the diameter must be less than $D'$.
Suppose we output ``NO'' at step 3.
Then each $x\in X_{k-1}^{out}$ satisfies $d(v,x)\le D$ for all $v\in V$.
Then, by Lemma~\ref{lem:struct-3}, the diameter is less than $D'$.

Now suppose we output ``NO'' at step 4, and let $\ell$ be the parameter chosen in Step 4.
Then we know $X_{\ell-1}^{out}$ is good $(\ell-1)$-out and $X_{k-\ell}^{in}$ is good $(k-\ell)$-in.
Since we output ``NO'', we also have that $d(x,x')\le D$ for each $x\in X_{\ell-1}^{out}$ and each $x'\in X_{k-\ell}^{in}$.
Setting $X = X_{k-\ell}^{in}$ and $X' = X_{\ell-1}^{out}$ in Lemma~\ref{lem:struct-2}, we have that the diameter is less than $(2-\frac{1}{k}+\varepsilon)D = D'$, as desired.

We now show that, if the diameter is at most $D$, there exists some nondeterministic choices such that we output ``NO''.
For each $\ell=0,\dots,k$, if there exists a good $\ell$-in set, let $X_{\ell}^{in}$ be that set, and if there exists a good $\ell$-out set, let $X_{\ell}^{out}$ be that set.
By Lemma~\ref{lem:struct-1}, for each pair of sets $(X_1^{out}, X_{k-1}^{in}), \dots,(X_{k-1}^{out}, X_1^{in})$, at least one must satisfy the corresponding goodness property.
As the set of all vertices is good 0-out by definition, we have $X_0^{out}$ is good 0-out.
Hence, among, $X_0^{out},X_1^{out},\dots,X_{k-1}^{out},X_1^{in},\dots,X_{k-1}^{in}$, there are at least $k$ sets that satisfy their corresponding goodness properties and at most $k-1$ sets that do not.

If $X_{k-1}^{out}$ is good $(k-1)$-out, then we will output ``NO'' at step 3, as the diameter is at most $D$.
Furthermore, if, for any $\ell$, $X_{k-\ell}^{out}$ is good $(k-\ell)$-out and $X_{\ell-1}^{in}$ is good $(\ell-1)$-in, then we output ``NO'' because we choose the corresponding parameter $\ell$ in Step 4, and the path of length $D$ between every pair of vertices $x\in X_{k-\ell}^{out}$ and $x'\in X_{\ell-1}^{in}$ exists.
Thus, to not output ``NO'', one set from each of $\{X_{k-1}^{out}\}, \{X_{k-2}^{out},X_1^{in}\},\dots,\{X_0^{out}, X_1^{in}\}$ must not satisfying the corresponding goodness properties, so $k$ sets cannot satisfy their goodness properties, which is a contradiction of the previous paragraph.
Thus, there exist nondeterministic choices such that we output ``NO'', as desired.
\end{proof}
\begin{proof}[Proof of Theorem~\ref{thm:nseth}]
  With Theorem~\ref{thm:ncon}, apply Lemma~\ref{lem:cgimps} (for deterministic reductions) and apply Lemma~\ref{lem:nunseth} (for randomized reductions).
\end{proof}
\begin{proof}[Proof of Theorem~\ref{thm:nseth-intro} and Theorem~\ref{thm:nseth-intro-1}]
  Apply Lemma~\ref{lem:gap-to-approx} with $\rho=m^{o(1)}$, $\alpha=2-\frac{1}{k}+\frac{\varepsilon}{2}$, $\beta=\frac{\varepsilon}{2}$, and $T=m^{1+1/k+\delta}$ to Theorem~\ref{thm:nseth}.
\end{proof}
\begin{proof}[Proof of Corollary~\ref{cor:nseth-intro}]
  First suppose $\floor{1/\varepsilon}\le \floor{1/\delta}$.
  Let $k = \floor{1/\varepsilon}+1$.
  Then $2-\varepsilon < 2 - \frac{1}{k}$ so there exists a positive $\varepsilon'>0$ such that $2-\varepsilon = 2 - \frac{1}{k}-\varepsilon'$.
  Furthermore, $k-1 \le \floor{1/\delta}\le 1/\delta$, so $1+\delta\le 1+1/(k-1)$.
  In Theorem~\ref{thm:2eps}, it is shown that a $2-\frac{1}{k}-\varepsilon'=2-\varepsilon$ approximation of the Diameter with time complexity $n^{1+1/(k-1)}$ is SETH-hard.
  Hence, since decreasing time complexity preserves SETH-hardness, a $2-\varepsilon$-approximation of the Diameter with time complexity $n^{1+\delta}$ is SETH-hard, as desired.

  Now suppose $\floor{1/\varepsilon} > \floor{1/\delta}$ and $1/\varepsilon\neq \floor{1/\delta}+1$.
  Let $k=\floor{1/\delta}+1$.
  Since $1/\varepsilon\ge \floor{1/\varepsilon}\ge \floor{1/\delta}+1=k$, and $1/\varepsilon\neq \floor{1/\delta}+1$, we have $1/\varepsilon > k$.
  Thus, there exists an $\varepsilon'>0$ such that $2-\frac{1}{k} + \varepsilon' = 2-\varepsilon$.
  Additionally, $1/k < \delta$, so there exists $\delta'$ such that $1+1/k+\delta'=1+\delta$.
  Theorem~\ref{thm:nseth-intro} proves that, under NSETH (NUNSETH), a $2-\frac{1}{k}+\varepsilon'$ approximation of Diameter in directed unweighted graphs with time complexity $m^{1+1/k+\delta'}$ is not SETH-hard for deterministic (randomized) reductions, so a $2-\varepsilon$ approximation of Diameter in directed unweighted graphs with time complexity $m^{1+\delta}$ is not SETH-hard for deterministic (randomized) reductions, as desired.
\end{proof}

\section{Lower bound for undirected unweighted graphs}
\label{sec:53}
In this section, we prove the following result, implying there is no $5/3-\varepsilon$ approximation of the diameter of an undirected unweighted graph in near-linear time.
\begin{theorem}
   Assuming SETH, for all $\varepsilon>0$ a $(\frac{5}{3}-\varepsilon)$-approximation of Diameter in \emph{unweighted, undirected} graphs on $n$ vertices needs $n^{3/2-o(1)}$ time.
   \label{thm:53}
\end{theorem}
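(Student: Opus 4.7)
The plan is to reduce from Single-Set $3$-Orthogonal-Vectors ($3$-OV) on $\tilde n$ vectors in $\{0,1\}^d$ with $d = \Theta(\log \tilde n)$. By \cite{Williams05}, this problem needs $\tilde n^{3 - o(1)}$ time under SETH. I would construct, in time $\tilde O(\tilde n^{2})$, an undirected unweighted graph $G$ on $n = \tilde n^{2 + o(1)}$ vertices such that $\operatorname{diam}(G) \geq 5$ when the instance has an orthogonal triple and $\operatorname{diam}(G) \leq 3$ otherwise. Any $(5/3 - \varepsilon)$-approximation would then distinguish the two cases in time $n^{3/2 - o(1)} = \tilde n^{3 - o(1)}$, contradicting SETH.

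The construction would follow the three-layer template used in Theorem~\ref{thm:2eps} for $k = 3$, but adapted to the undirected setting. I would take layer $L_0$ to consist of one vertex $v_{a,b}$ for each pair $(a,b) \in A^{2}$, and layers $L_1, L_2$ to consist of vertices indexed by an element of $A$ together with a coordinate tuple in $[d]^{2}$, subject to the ``property-$(i, L_j)$'' constraints from Section~\ref{sec:2eps}. Vector-changing edges would connect consecutive layers in the spirit of the BRSVW ST-diameter gadget \cite{BackursRSWW18}, and index-changing edges within the middle layers would play the role of the directed shortcut edges used in Theorem~\ref{thm:2eps}. Because we are reducing from Single-Set $3$-OV, the ``source'' layer and ``sink'' layer of the underlying ST-diameter construction --- both being copies of $A^{2}$ --- can be identified directly, with no edge-directing required. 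This is what makes the reduction simpler than that of \cite{BackursRSWW18}.

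The two halves of the reduction would be proved as follows. For the \emph{completeness} direction, if $(a,b,c)$ is an orthogonal triple, then $d(v_{a,b}, v_{b,c}) \geq 5$: any path of length at most $4$ must traverse the three layers in order, and the coordinate labels along the path would be forced to simultaneously witness an index $x$ with $a[x] = b[x] = c[x] = 1$, contradicting orthogonality. This mirrors the ``$k$-OV solution'' analysis in Section~\ref{sec:2eps}. For the \emph{soundness} direction, if no orthogonal triple exists, then $\operatorname{diam}(G) \leq 3$. To bound $L_0$-to-$L_0$ distances I would use that, for any two pairs $(a,b), (b',c) \in A^{2}$, the no-solution hypothesis guarantees an index $x$ with the appropriate three coordinates all equal to $1$, and exhibit an explicit length-$3$ path through a middle vertex indexed by $x$. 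Distances involving middle vertices would be handled by a neighborhood-domination lemma analogous to Claim~\ref{cl:neigh}: every middle vertex has a (closed) neighborhood containing that of some $L_0$ vertex, reducing the general case to the $L_0$-to-$L_0$ case.

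The main obstacle is ensuring that the soundness argument goes through symmetrically in the undirected graph, since we cannot use the one-sided shortcut edges that Theorem~\ref{thm:2eps} uses essentially. The index-changing edges between $L_1$ and $L_2$ must be calibrated so that the neighborhood-domination property holds on both sides of each middle vertex while not introducing length-$\leq 4$ paths between $v_{a,b}$ and $v_{b,c}$ when $(a,b,c)$ is orthogonal. Once these edge rules are balanced, both claims follow by a case analysis parallel to that in Section~\ref{sec:2eps}, and the $n^{3/2-o(1)}$ lower bound follows by combining with Lemma~\ref{lem:gap-to-approx}.
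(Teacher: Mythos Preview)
Your plan is essentially the paper's approach: reduce from Single-Set $3$-OV to an undirected unweighted graph with a $3$-vs.-$5$ diameter gap, with a main layer indexed by $A^{2}$ and middle vertices indexed by $A\times[d]^{2}$. Two remarks on execution. First, the paper collapses your two middle layers into one: for $k=3$ the property-$(1,L_1)$ and property-$(3,L_{k-1})$ constraints both read $a[x_1]=a[x_2]=1$, so $L_1$ and $L_2$ have the same vertex set, and the paper takes a single set $X=\{(a,i,j):a[i]=a[j]=1\}$ with both vector-changing edges $(a,i,j)\text{--}(b,i,j)$ and index-changing edges $(a,i,j)\text{--}(a,i',j')$ inside it. This dissolves the ``calibration'' obstacle you flag. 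Second, the paper proves the diameter-$\le 3$ direction by a direct three-case analysis on which of $S,X$ the two endpoints lie in, exhibiting an explicit length-$3$ path in each case; your neighborhood-domination reduction to the $L_0$--$L_0$ case would also work but is not needed here. One small slip: Lemma~\ref{lem:gap-to-approx} runs the wrong direction for your purposes (it transfers \emph{non}-SETH-hardness from the gap problem to the approximation problem); the hardness direction is immediate once you have the $3$-vs.-$5$ gap, since any $(5/3-\varepsilon)$-approximation distinguishes the two cases.
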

\begin{proof}
Start with a Single-Set 3-OV instance $\Phi$ given by a set $A\subset \{0,1\}^d$ with $|A|=\tilde{n}$ and $d = c\log n_{NOV}$.
We may add the all-1s vector to $A$ without loss of generality, as this does not change whether there is an OV solution or not. 
We construct a graph with $\tilde O(\tilde{n}^2)$ vertices and edges from the 3-OV instance such that (1) if $\Phi$ has no solution, any two vertices are at distance 3, and (2) if $\Phi$ has a solution, then there exists two vertices at distance 5.
Any $(5/3-\varepsilon)$-approximation for Diameter distinguishes between graphs of diameter 3 and 5. 
Since solving $\Phi$ needs $\tilde{n}^{3-o(1)}$ time under SETH, a $5/3-\varepsilon$ approximation of diameter needs $n^{3/2-o(1)}$ time under SETH.
\begin{figure}
\begin{center}
\begin{tikzpicture}
  \node[inner sep=30] (a) at (-2.5,-1) {};
  \node[inner sep=30] (b) at (2.5,-1) {};
  \node[label={[anchor=south]$S$}] (f) at (a.south east) {};
  \draw[rounded corners=10] (a.north west) rectangle (b.south east);
  \node[inner sep=30] (d) at (-1.5,2) {};
  \node[inner sep=30] (e) at (1.5,2) {};
  \node[label={[anchor=south]$X$}] (f) at (d.north west) {};
  \draw[rounded corners=10] (d.north west) rectangle (e.south east);
  \node[fill, circle, inner sep=3,label={\footnotesize$(a,i',j')$}] (a2ij) at (0,2.2) {};
  \node[fill, circle, inner sep=3,label={[anchor=east]\footnotesize$(a,i,j)$}] (aij) at (-1,1.7) {};
  \node[fill, circle, inner sep=3,label={[anchor=west]\footnotesize$(c,i,j)$}] (cij) at (1,1.7)  {};
  \node[fill, circle, inner sep=3,label={[anchor=east]\footnotesize$(a,b)$}] (ab) at (-2,-1)  {};
  \node[fill, circle, inner sep=3,label={[anchor=west]\footnotesize$(a',b')$}] (cb) at (2,-1)   {};
  \draw (ab) --  node[pos=0.5, label={[anchor=east]$a[i],a[j],b[i]$}]{}(aij);
  \draw (aij) -- (cij);
  \draw (cb) --  node[pos=0.5, label={[anchor=west]$a'[i],a'[j],b'[j]$}]{}(cij);
  \draw[red] (aij) -- (a2ij);
\end{tikzpicture}
\end{center}
\caption{5 vs. 3 diameter instance. The coordinates $a[i]$ along the edges must be 1 for the edge to exist.}
\label{fig:1}
\end{figure}
\paragraph{Construction of the graph}
The graph $G$ is illustrated in Figure~\ref{fig:1} and constructed as follows. 
The vertex set $S\cup X$ is defined on
\begin{align}
  S &= A^2,  \nonumber\\
  X &= \{(a,i,j)\in A\times [d]^2:a[i]=a[j]=1\}
\end{align}
Throughout, we identify tuples $(a,b)\in A^2$ and $(a,i,j)\in A\times [d]^2$ with vertices of $G$, 
Throughout we denote vertices in $S$ and $X$ by $(a,b)_S$ and $(a,i,j)_X$, respectively.
The (undirected unweighted) edges are all of the following.
\begin{itemize}
\item Edge between $(a,b)_S$ and $(a,i,j)_X$ if $(a,i,j)_X$ exists and $b[i]=1$.
\item Edge between $(a,b)_S$ and $(a,i,j)_X$ if $(a,i,j)_X$ exists and $b[j]=1$.
\item Edge between $(a,i,j)_X$ and $(b,i,j)_X$ always if vertices exist.
\item Edge between $(a,i,j)_X$ and $(a,i',j')_X$ always if vertices exist.
\end{itemize}
Note that each vertex of $S$ has $O(d^2)$ neighbors, each vertex of $X$ has $O(\tilde{n})$ neighbors, and each vertex of $Y$ has $O(\tilde{n})$ neighbors.
The total number of edges and vertices is thus $O(\tilde{n}^2d^2)=\tilde O(\tilde{n}^2)$.
We now show that this construction has diameter 3 when $\Phi$ has no solution and diameter at least 5 when $\Phi$ has a solution.

\paragraph{3-OV no solution}
Assume that the 3-OV instance has no solution, so that no three (or two) vectors are orthogonal.
We show that any pair of vertices have distance at most 3, by casework on which of $S,X$ the two vertices are in.
\begin{itemize}
\item \textbf{Both vertices are in $S$:} Let the vertices be $(a,b)_S$ and $(c,d)_S$. As there is no 3-OV solution, there exists indices $i$ and $j$ in $[d]$ such that $a[i]=b[i]=c[i]$ and $a[j]=c[j]=d[j]$. Then $(a,b)_S-(a,i,j)_X-(c,i,j)_X-(c,d)_S$ is a valid path.

\item \textbf{One vertex is in $S$ and the other vertex is in $X$:} Let the vertices be $(a,b)_S$ and $(c,i,j)_X$: As there is no 3-OV solution, there exists an index $i'$ such that $a[i']=b[i']=c[i']=1$. Then $(a,b)_S-(a,i',i')_X-(c,i',i')_X-(c,i,j)_X$ is a valid path.

\item \textbf{Both vertices are in $X$:} Let the vertices be $(a,i,j)_X$ and $(c,i',j')_X$. As there is no 3-OV solution, there exists an index $i''$ such that $a[i''] = c[i'']$. Then $(a,i,j)_X-(a,i'',i'')_X-(c,i'',i'')_Y-(c,i',j')_X$ is a valid path.
\end{itemize}

\paragraph{3-OV has solution}
Now assume that the 3-OV instance has a solution.
That is, assume there exists $a,b,c\in A$ such that $a[i]\cdot b[i]\cdot c[i] = 0$ for all $i$.
We show there are no paths of length at most 4 from $(a,b)_S$ to $(c,b)_S$.
Any such path must use an $X-X$ edge or else the first entry $a$ of the vertex's tuple does not change.
Because of this, the path cannot revisit the set $S$, as it would otherwise need at least 5 edges.

First, suppose that the path does not use an \emph{index-changing} edge, namely an edge of the form $(a,i,j)_X-(a,i',j')_X$.
The path must be $(a,b)_S-(a,i,j)_X-(c,i,j)_X-(c,b)_S$ or $(a,b)_S-(a,i,j)_X-(v,i,j)_X-(c,i,j)_X-(c,b)_S$ for some indices $i,j\in[d]$ and some vector $v$.
In either case, the existence of vertex $(a,i,j)_X$ requires that $a[i]=a[j]=1$, and the existence of vertex $(c,i,j)_X$ requires that $c[i]=c[j]=1$.
The first edge requires that at least one of the coordinates $b[i]$ or $b[j]$ is 1. 
Thus, at least one of $a[i]=b[i]=c[i]=1$ or $a[j]=b[j]=c[j]=1$ holds, contradicting orthogonality of $a,b,c$.

Now suppose the path uses an index-changing edge.
The index-changing edge must either be the second edge or the third edge.
These cases are symmetric to each other so it suffices to consider only one.
If the index-changing edge is the second edge, the path must be $(a,b)_S-(a,i,j)_X-(a,i',j')_X-(c,i',j')_X-(c,b)$ for some indices $i,j,i',j'\in[d]$.
The existence of the vertex $(a,i',j')_X$ implies that $a[i']=a[j']=1$, and the existence of the vertex $(c,i',j')_X$ implies that $c[i']=c[j']=1$.
The last edge implies that $b[i']=1$ or $b[j']=1$.
Thus, at least one of $a[i']=b[i']=c[i']=1$ or $a[j']=b[j']=c[j']=1$ holds, contradicting orthogonality of $a,b,c$.

This shows that $(a,b)_S$ and $(c,b)_S$ are at distance at least 5, completing the proof.
\end{proof}

\section{Acknowledgements}

The author would like to thank Aviad Rubinstein for many helpful discussions, guidance, encouragement, and feedback on this writeup. 
The author would like to thank Mary Wootters for helpful discussions and feedback on this writeup.
The author would like to thank Joshua Brakensiek for helpful discussions.
The author would like to thank Thuy Duong Vuong and Arun Jambulapati for helpful discussions on hopsets, and Arun Jambulapati for the reference \cite{CaoFR20}.
The author would like to thank Nairen Cao, Jeremy T. Fineman, and Katina Russell for helpful discussions on their work \cite{CaoFR19,CaoFR20}.
The author would like to thank anonymous reviewers for helpful feedback.

\newcommand{\etalchar}[1]{$^{#1}$}

\appendix

\section{Hopset results}
\label{app:hopset}

\subsection{Undirected hopset}
Here, we prove Lemma~\ref{lem:hopset-undir}.
Let $M$ be a positive integer.
In a graph $G$, let $\tilde N_M(v)$ denote the set of vertices visited when Dijkstra is run from a vertex $v$ until $2M$ edges are visited, and let $\tilde d_M(v)$ denote the distance from $v$ to the furthest vertex in $\tilde N_M(v)$.
Let $\tilde E_M(v)$ denote the set of vertices incident to $\tilde N_M(v)$. 
By definition, we have $|\tilde E_M(v)|\ge M$ (each edge can be visited at most twice).
\begin{lemma}
  Let $G=(V,E)$ be an undirected weighted graph, and $M$ be an integer.
  For each $v$, in time $O(M\log M)$ we can compute $\tilde N_M(v)$ and $d(v,u)$ for all $u\in \tilde N_M(v)$.
\label{lem:app-hopset-1}
\end{lemma}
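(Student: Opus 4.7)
The plan is to execute a truncated version of Dijkstra's algorithm from $v$, using a binary heap as the priority queue and aborting as soon as $2M$ edge relaxations have been performed (possibly mid-way through processing the adjacency list of the currently-popped vertex). I will let $\tilde N_M(v)$ be defined precisely as the set of vertices extracted from the heap at the moment of halting, so by the standard correctness of Dijkstra each such vertex $u$ has its final shortest-path distance $d(v,u)$ correctly recorded at that time.

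The key observation will be that a vertex enters the heap only through the relaxation of an edge incident to a previously extracted vertex. Therefore after at most $2M$ edge relaxations the heap has seen at most $2M+1$ distinct entries (the seed $v$ plus one per relaxation), so it never holds more than $O(M)$ items. Each \textsc{Insert}, \textsc{DecreaseKey}, and \textsc{ExtractMin} thus costs $O(\log M)$. Since there are $O(M)$ heap operations in total (at most one \textsc{ExtractMin} per popped vertex, and at most one heap update per edge relaxation), the running time is $O(M\log M)$ as claimed.

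There is no substantive obstacle here; the only point requiring care is that the truncation must be enforced on a per-edge basis rather than per-vertex. If one only checks the edge counter after finishing a vertex's entire adjacency list, a single vertex of degree $\gg M$ could blow up the edge budget. Interleaving the counter check with each individual edge relaxation in the inner loop avoids this, and yields both $\tilde N_M(v)$ and the corresponding distances within the stated time bound.
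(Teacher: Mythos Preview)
Your proposal is correct and takes essentially the same approach as the paper: the paper's proof is a one-liner (``Run Dijkstra from $v$ until $2M$ edges are visited'') because $\tilde N_M(v)$ is \emph{defined} as the set of vertices visited in such a truncated Dijkstra, so correctness is immediate and the $O(M\log M)$ bound is implicit. Your write-up supplies the details the paper omits (bounding the heap size by $O(M)$, and the need to truncate per-edge rather than per-vertex to avoid a high-degree blowup), but the underlying argument is identical.
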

\begin{proof}
  Run Dijkstra from $v$ until $2M$ edges are visited.
  Then every vertex in $\tilde N_M(v)$ is visited by definition of $\tilde N_M(v)$, so we know $d(v,u)$ for all $u\in \tilde N_M(v)$.
\end{proof}
\begin{proof}[Proof of Lemma~\ref{lem:hopset-undir}]
  Any $(\beta, \varepsilon)$-additive-hopset is also a $(\beta,\varepsilon')$-additive-hopset for $\varepsilon'>\varepsilon$, so we may assume without loss of generality that $\varepsilon\le 1/100$.
  Let $k = \ceil{1/\delta}$ and assume without loss of generality that $n$ and $m$ are sufficiently large in terms of $k$ (all guarantees suppress dependencies on $\delta$).
  For $i=1,\dots,k$, let $M_i = m^{(k+1-i)/k}$.
  Let $D$ denote the diameter of the graph.
  We use the following algorithm.
  \begin{enumerate}
  \item For $i=1,\dots,k$, let $S_{i}$ be the vertices incident to $4m^{i/k}\log m$ uniformly random edges.
  \item For $i=1,\dots,k$, for each $v\in S_i$, run the algorithm in Lemma~\ref{lem:app-hopset-1} for $M=M_i$. For each $u\in \tilde N_{M_i}(v)$, add an edge of weight $d_G(v,u)$ from $v$ to $u$ to $E'$.
  \end{enumerate}
  Note that, for $i=1$, we simply end up running Dijkstra from each vertex.

  \textbf{Runtime.}
  For $i=1,\dots,k$, each Dijkstra takes $O(M_i\log M_i)$ time, so the running time for that step is $O(m^{i/k}\log m\cdot m^{(k+1-i)/k}\log m) = \tilde O(m^{1+1/k})$.
  Thus the total running time is $\tilde O(m^{1+1/k})$.

  \textbf{Correctness.}
  First, by Lemma~\ref{lem:app-hopset-1}, we know that for each $i=1,\dots,k$, each $v\in S_i$, and each $u\in \tilde N_{M_i}(v)$, the distance $d(v,u)$ is accurately computed, so the edge added from $v$ to $u$ does not decrease the shortest path from $v$ to $u$ and hence does not decrease any shortest path.
  Thus, we have the first guarantee that adding $E'$ preserves all shortest paths.

  Now we show the second guarantee.
  By same reasoning as in Lemma~\ref{lem:select}, for all $i=1,\dots,k$, with probability $1-1/n$, because $4m^{i/k}\log m\cdot M_{i+1}> 2m\log n$, the edges used to generate $S_i$ intersect each of the $n$ sets $\tilde E_{M_{i+1}}(v)$ for all vertices $v$.
  Hence, $S_i$ intersects each $\tilde N_{M_{i+1}}(v)$.
  By the union bound, with probability $1-k/n > 0$, for all $i=1,\dots,k$ and all vertices $v$, the set $S_i$ intersects each set $\tilde N_{M_{i+1}}(v)$.
  Fix the choice of randomness such that this holds.
  We now show that $E'$ is a $(O_{\delta,\varepsilon}(1), 6\varepsilon)$ hopset, which, after reparameterizing $\varepsilon$, shows that we can obtain a $(O_{\delta,\varepsilon}(1), \varepsilon)$-additive-hopset.

  Fix two vertices $v$ and $v'$ and a shortest path $v=v_0,\dots,v_L=v'$ between them.
  We show how to construct a path in $G+E'$ from $v$ to $v'$ of length at most $d(v,v') + \varepsilon D$ using at most $O_{\delta,\varepsilon}(1)$ vertices.
  We prove the following claim.
  \begin{claim}
    \label{cl:app-hopset}
    For every $\ell\in\{0,\dots,L\}$, there exists $\ell'>\ell$ and an at-most $k+1$ edge path from $v_\ell$ to $v_{\ell'}$ such that either (1) $d(v_{\ell},v_{\ell'}) > \varepsilon^k D$ and the path has length at most $(1+4\varepsilon)d(v_{\ell},v_{\ell'})$ or (2) $\ell'=L$ and the path has length at most $d(v_\ell,v_{\ell'}) + 2\varepsilon D$.
  \end{claim}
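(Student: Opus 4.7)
The plan is to construct the required path in $G+E'$ by composing hopset edges at progressively coarser scales, using hubs drawn from $S_1,\ldots,S_k$ to bridge from $v_\ell$ to a waypoint $v_{\ell'}$ along the shortest path. First I would record the following easy consequence of the hitting property fixed just before the claim: for every vertex $w$ on the shortest path and every scale $i\in\{1,\ldots,k\}$ there exists a hub $s \in S_i$ with $d(w,s) \le \tilde d_{M_{i+1}}(w)$. The algorithm inserted in $E'$ an edge from each such $s$ to every $u \in \tilde N_{M_i}(s)$ with weight equal to the true graph distance, and because $M_i/M_{i+1} = m^{1/k}$ this reachable set through $s$ is substantially larger than the $M_{i+1}$-Dijkstra ball around $w$.

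Next I would build the path iteratively. Starting at $w_0 = v_\ell$, at step $j$ I would adaptively select a scale $i_j$ and let $w_{j+1}$ be the furthest vertex of the shortest path lying in the $\tilde N_{M_{i_j}}$-ball around the chosen hub $s_{i_j}$; the resulting sub-path extends the construction by a small number of hopset edges and inflates the true distance $d(w_j, w_{j+1})$ by an additive error controlled by $\tilde d_{M_{i_j+1}}(w_j)$. By forcing the scales $i_j$ to be distinct and in the right order, the process terminates in at most $k$ rounds, producing a combined path of at most $k+1$ hopset edges. The dichotomy in the claim's conclusion corresponds to whether the process halts because we have reached $v_L$ (case (2), where the accumulated per-step overheads telescope into the additive $2\varepsilon D$ bound) or because we have exhausted the available scales strictly before $v_L$ (case (1), where the hypothesis $d(v_\ell, v_{\ell'}) > \varepsilon^k D$ converts the same overhead into a multiplicative $(1+4\varepsilon)$ stretch).

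The main obstacle I anticipate is verifying that each waypoint $w_j$ is in fact connected by an $E'$-edge to its chosen hub $s_{i_j}$, i.e., that $w_j \in \tilde N_{M_{i_j}}(s_{i_j})$. Although the symmetry of the underlying undirected graph makes the condition plausible, it does not follow directly, since Dijkstra from $s_{i_j}$ visits vertices in a different order than from $w_j$. Bridging this gap will require exploiting the $m^{1/k}$ scale-gap between $M_{i_j}$ and $M_{i_j+1}$ to argue that the Dijkstra-rank of $w_j$ as viewed from $s_{i_j}$ stays inside the $M_{i_j}$-budget. Combined with a careful scale-selection rule ensuring both that the total hopset-edge count fits within the $k+1$ budget and that the per-step stretch telescopes cleanly, this per-scale analysis is the delicate core of the argument.
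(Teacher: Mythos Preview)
Your anticipated obstacle is real, and the proposed resolution via the $m^{1/k}$ scale gap does not work. Knowing $s\in\tilde N_{M_{i+1}}(w)$ gives no control over whether $w\in\tilde N_{M_i}(s)$: if $s$ happens to be incident to $\Theta(M_i)$ edges all leading to vertices at distance much smaller than $d(w,s)$, then $\tilde N_{M_i}(s)$ consists entirely of those near neighbors and excludes $w$, no matter how much larger $M_i$ is than $M_{i+1}$. Since your waypoints $w_j$ are shortest-path vertices and hence only guaranteed to lie in $S_k$, the only hopset edges you can count on out of $w_j$ go into $\tilde N_{M_k}(w_j)$, the \emph{finest} ball; from there you can reach a hub in $S_{k-1}$, but after jumping back to the path the next $w_{j+1}$ is again merely in $S_k$, and you never escape the finest scale. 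Even granting the edges, two hops per round over $k$ rounds yields $2k$ edges rather than $k+1$.

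The paper sidesteps this by never returning to the shortest path during the descent. Starting from $u_k:=v_\ell\in S_k$, it sets $u_{i-1}$ to be any vertex of $S_{i-1}\cap\tilde N_{M_i}(u_i)$ (which exists by the hitting property), producing a hub chain $u_k,u_{k-1},\dots$ with $u_i\in S_i$. Each edge $u_i\to u_{i-1}$ exists directly because its \emph{tail} $u_i$ lies in $S_i$ and its head lies in $\tilde N_{M_i}(u_i)$; no symmetry or rank argument is needed. The descent halts at the largest $j$ with $\tilde d_{M_j}(u_j)>\varepsilon^{j-1}D$ (``$j$-expanding''); the nonexpanding steps contribute a geometric detour $\sum_{i>j}\varepsilon^{i-1}D<2\varepsilon^jD$, and from $u_j$ one hopset edge reaches the furthest shortest-path vertex $v_{\ell'-1}\in\tilde N_{M_j}(u_j)$, followed by one original edge to $v_{\ell'}$. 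The two cases of the claim are whether $v_L\in\tilde N_{M_j}(u_j)$ or not; the $(1+4\varepsilon)$ stretch in case~(1) comes from comparing the $2\varepsilon^jD$ detour to the lower bound $d(v_\ell,v_{\ell'})>(\varepsilon^{j-1}-2\varepsilon^j)D$ that $j$-expansion forces. The fix to your plan is therefore structural: let the intermediate vertices be the hubs themselves, descending one scale per step, and make only a single final jump onto the shortest path.
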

  \begin{proof}
    Fix $\ell$.
    For each vertex $u$ and $i=2,\dots,k$, let $f_i(u)$ be an arbitrary vertex in $S_{i-1}\cap \tilde N_{M_{i}}(v)$.
    Such a vertex exists by construction of $S_{i-1}$.
    Furthermore, if $u\in S_i$, by the definition of the algorithm, there always exists an edge from $u$ to $f_i(u)$.
    Say a vertex $u\in S_i$ is \emph{$i$-nonexpanding} if $\tilde d_{M_i}(u)\le \varepsilon^{i-1}D$ and \emph{$i$-expanding} otherwise.
    Note that if $u$ is $i$-nonexpanding, then every vertex $u'\in \tilde N_{M_i}(u)$ satisfies $d(u,u')\le \varepsilon^{i-1}D$.
    On the other hand, if $u$ is $i$-expanding, any vertex $u'\notin \tilde N_{M_i}(u)$ satisfies $d(u,u') > \varepsilon^{i-1}D$.

    Let $u_k=v_\ell$.
    For $i=k,\dots,2$, let $u_{i-1} = f_i(u_i)$.
    Then $u_i\in S_i$ for $i=1,\dots,k$ by definition of $f_i$, and there always exists an edge from $u_i$ to $u_{i-1}$ for $i=2,\dots,k$.
    Clearly $u_1$ is $1$-expanding, and the length of all shortest paths is bounded above by the diameter $D$, so $\tilde d_{M_1}(u)\le D$.
    Let $j$ be the largest index such that $u_j$ is $j$-expanding.
    Then $u_i$ is $i$-nonexpanding for $j<i\le k$, so by the triangle inequality
    \begin{align}
      \label{eq:app-hopset-1}
      d(v_\ell,u_j)
      \le \sum_{i=j+1}^{k} d(u_i,u_{i-1}) 
      = \sum_{i=j+1}^{k} d(u_i, f(u_i)) 
      \le \sum_{i=j+1}^{k} \varepsilon^{i-1}D
      < 2\varepsilon^{j}D.
    \end{align}
    Note this inequality even holds for $j=k$.

    First suppose $v_L\in \tilde N_{M_j}(u_j)$.
    Then the path $u_k,\dots,u_j,v_L$ is a path of at most $k$ edges from $v_\ell=u_k$ to $v_L$ and has length at most
    \begin{align}
      \label{eq:app-hopset-2}
      d(v_\ell,u_j) + d(u_j,v_{L})
      &\le 2d(v_\ell,u_j) + d(v_\ell,v_{L}) 
      < 2\varepsilon D + d(v_\ell,v_{L}).
    \end{align}
    In the first inequality, we used the triangle inequality, and in the last inequality we used \eqref{eq:app-hopset-1} and that $j\ge 1$.

    Now suppose $v_L\notin \tilde N_{M_j}(u_j)$.
    Let $\ell'$ denote the largest integer such that $v_{\ell'-1}\in \tilde N_{M_j}(u_j)$.
    Since we assume $v_L\notin \tilde N_{M_j}(u_j)$, we have $\ell'\le L$.
    By \eqref{eq:app-hopset-1}, we have $d(v_\ell,u_j) = d(u_k,u_j) < \varepsilon^{j-1}D$, so $v_\ell\in \tilde N_{M_j}(u_j)$ and thus $\ell'>\ell$.
    Since $u_j$ is $j$-expanding, and $v_{\ell'+1}\notin \tilde N_{M_j}(u_j)$, we have $d(u_j, v_{\ell'})>\varepsilon^{j-1}D$.
    Thus, by the triangle inequality and \eqref{eq:app-hopset-1}, we have
    \begin{align}
      \label{eq:app-hopset-3}
      d(v_\ell,v_{\ell'}) > d(u_j,v_{\ell'}) - d(u_j,v_\ell) > (\varepsilon^{j-1} - 2\varepsilon^j)D.
    \end{align}
    and in particular $d(v_\ell,v_{\ell'}) > 2\varepsilon^k D$.
    On the other hand, the path $u_k,u_{k-1},\dots,u_j, v_{\ell'-1},v_{\ell'}$ is an at most $k+1$ edge path from $v_\ell$ to $v_{\ell'}$, and the length is at most
    \begin{align}
      d(v_\ell,u_j) &+ d(u_j,v_{\ell'-1}) + d(v_{\ell'-1},v_{\ell'})  \nonumber\\
      &\le 2d(u_k,u_j) + d(v_\ell,v_{\ell'-1})  + d(v_{\ell'-1},v_{\ell'})   & \text{(triangle-ineq.)}\nonumber\\
      &= 2d(u_k,u_j) + d(v_\ell,v_{\ell'})  \nonumber\\
      &< 2\varepsilon^jD + d(v_\ell,v_{\ell'})  & \text{(by \eqref{eq:app-hopset-1})} \nonumber\\
      &< \left(1 + \frac{2\varepsilon^j}{\varepsilon^{j-1}-2\varepsilon^j}\right) d(v_\ell,v_{\ell'})  & \text{(by \eqref{eq:app-hopset-3})}\nonumber\\
      &< (1 + 4\varepsilon) d(v_\ell,v_{\ell'}) & \text{($\varepsilon$ sufficiently small)}
    \end{align}
    as desired.
  \end{proof}
  Now we finish the proof.
  By repeatedly applying Claim~\ref{cl:app-hopset}, we obtain indices $0=\ell_0,\ell_1,\dots,\ell_p=\ell$, such that for $0\le p'<p-1$,  there exists an at-most-$k+1$ edge path from $v_{\ell_{p'}}$ to $v_{\ell_{p'+1}}$ of length at most $(1+\varepsilon)d(v_{\ell_{p'}}, v_{\ell_{p'+1}})$ and an at-most-$k+1$ edge path from $v_{\ell_{p-1}}$ to $v_{\ell_p}$ of length at most $d(v_{\ell_{p-1}},v_{\ell_p}) + \varepsilon D$.
  Furthermore, we have $p < \varepsilon^{-k}$ because $d(v_{\ell_{p'}},v_{\ell_{p'+1}}) \ge \varepsilon^k D$ for all $p'< p-1$.
  In total, this gives a path containing $(k+1)\cdot 2\varepsilon^{-k}\le O_{\delta,\varepsilon}(1)$ vertices and of total length at most
  \begin{align}
    \sum_{p'=0}^{p-2} (1+4\varepsilon)d(v_{\ell_{p'}}, v_{\ell_{p'+1}})
     + (d(v_{\ell_{p-1}},v_{\ell_p}) + 2\varepsilon D)
     &= 
    d(v,v') + \left(\sum_{p'=0}^{p-2} 4\varepsilon d(v_{\ell_{p'}}, v_{\ell_{p'+1}})\right) + 2\varepsilon D \nonumber\\
    &< d(v,v') + 6\varepsilon D.
  \end{align}
  Thus, $E'$ is indeed a $(O_{\delta,\varepsilon}(1), 6\varepsilon)$-additive hopset, as desired.
\end{proof}

\subsection{Directed hopset}
Lemma~\ref{lem:hopset-dir} is implicit in \cite{CaoFR20} and is confirmed by the authors \cite{CaoPrivate}.
Here we explain how Lemma~\ref{lem:hopset-dir} is implicit in \cite{CaoFR20}.
In \cite{CaoFR19}, building off of \cite{JLS19}, a parallel algorithm is given for single-source reachability and single-source approximate shortest paths on directed weighted graphs with $\tilde O(m)$ work and $n^{1/2+o(1)}$ span.
To achieve single-source approximate reachability, they find a set of shortcuts in $\tilde O(m)$ time that reduces the diameter to $n^{1/2+o(1)}$.
To achieve single-source approximate shortest paths, they use a construction of a $(n^{1/2+o(1)},\varepsilon)$ hopset on directed weighted graphs in $\tilde O_\varepsilon(m)$ time.
The hopsets are in fact a generalization of the shortcuts.

In \cite[Theorem 1.1]{CaoFR20}, the authors generalize this algorithm to obtain a tradeoff between work and span, solving single-source reachability and approximate single-source shortest paths in a directed weighted graph in $\tilde O(m\rho^4)$ (in fact, $\tilde O(m\rho^2+n\rho^4)$) work and $n^{1/2+o(1)}/\rho$ span.
When $\rho=1$, these algorithms are exactly the same as the one in \cite{CaoFR19}.
For larger $\rho$, they follow the same approach.
For single-source reachability, they construct shortcuts to reduce the diameter to $n^{1/2+o(1)}/\rho$ in time $\tilde O(m\rho^2)$.  
For single-source approximate shortest path, they construct $(n^{1/2+o(1)}/\rho,\varepsilon)$ hopset on directed weighted graphs in time $\tilde O_\varepsilon(m\rho^4)$.
However, as the paper is only an extended abstract, they only state the shortcut construction \cite[Theorem 4.1]{CaoFR20}, and not the full hopset construction.
The use of such a hopset construction was confirmed by the authors~\cite{CaoPrivate}.

To obtain the addition guarantee that, with probability 1, the edge additions preserve all shortest paths, we note that the algorithms in \cite{CaoFR19,CaoFR20} only adds an edge from $u$ to $v$ of weight $d_G(u,v)$, so the edge does not shorten the shortest path from $u$ to $v$ and thus any other vertex. Additionally, since edges are only added, the length of shortest paths never increase.

\section{Fine grained reductions}
\label{app:fgc}

\subsection{Applying \cite{CarmosinoGIMPS16} for promise problems.}

Lemma~\ref{lem:nseth} follows from the following property, in the same way that \cite[Theorem 5.1]{CarmosinoGIMPS16} follows from \cite[Lemma 3.5]{CarmosinoGIMPS16}.
This property states that fine grained reductions translate $\nconplus$ savings for promise problems to $\ncon$.
\begin{lemma}[Analogue of Lemma 3.5 of \cite{CarmosinoGIMPS16} for promise problems]
Let $(\Pi',T')\le_{FGR}(\Pi,T)$, where $\Pi,\Pi'$ are promise problems, and suppose $\Pi\in \nconplus[T(n)^{1-\varepsilon}]$ for some $\varepsilon>0$. Then there exists a $\delta>0$ such that $\Pi'\in\ncon[T'(n)^{1-\delta}]$.
\label{lem:nseth-c5}
\end{lemma}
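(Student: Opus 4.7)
The plan is to adapt the proof of Lemma~\ref{lem:nseth-old} (\cite[Lemma~3.5]{CarmosinoGIMPS16}) to the promise setting, using property~3 of Definition~\ref{def:nconplus} in exactly the places where that argument relied on oracle answers being unambiguous in the decision-problem case. Applying the FGR definition with the constant $\varepsilon$ from the hypothesis $\Pi \in \nconplus[T^{1-\varepsilon}]$ yields some $\delta_0 > 0$ and a deterministic oracle Turing machine $\mathcal{M}$ that reduces $\Pi'$ to $\Pi$ in time $T'(n)^{1-\delta_0}$, with query lengths satisfying $\sum_{q} T(|q|)^{1-\varepsilon} \le T'(n)^{1-\delta_0}$.

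Let $\mathcal{A}_N, \mathcal{A}_{coN}$ be the $\nconplus$-machines for $\Pi$. I will build two nondeterministic algorithms $\mathcal{B}_N, \mathcal{B}_{coN}$ for $\Pi'$ that together witness $\Pi' \in \ncon[T'^{1-\delta}]$. Each of them simulates $\mathcal{M}^{\Pi}(x)$ step by step; whenever $\mathcal{M}$ issues a query $q$, the machine nondeterministically guesses $b \in \{\text{YES}, \text{NO}\}$, then nondeterministically simulates $\mathcal{A}_N(q)$ if $b=\text{YES}$ or $\mathcal{A}_{coN}(q)$ if $b=\text{NO}$, rejecting immediately if the simulated output fails to match $b$; otherwise $b$ is returned to $\mathcal{M}$ as the oracle's answer. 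At the end, $\mathcal{B}_N$ accepts iff $\mathcal{M}$ accepts and $\mathcal{B}_{coN}$ accepts iff $\mathcal{M}$ rejects; all ``killed'' branches count as rejecting in both machines.

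For correctness, properties~1 and~2 of Definition~\ref{def:nconplus} force the guess $b$ to be correct along any surviving branch whenever $q \in \Pi_{YES} \cup \Pi_{NO}$, since a wrong guess cannot be verified. Hence the sequence of oracle answers fed to $\mathcal{M}$ on any surviving branch is a valid response pattern for $\Pi$ (correct on the promise, arbitrary off it). Property~3 guarantees that for every query $q$ at least one choice of $b$ admits a surviving witness branch in the corresponding $\mathcal{A}_N$ or $\mathcal{A}_{coN}$ simulation, so a complete successful simulation always exists. Since $\mathcal{M}$ computes $\Pi'$ correctly under every valid response pattern, for $x \in \Pi'_{YES}$ every completed branch has $\mathcal{M}$ accepting, giving $\mathcal{B}_N$ an accepting branch and forcing all branches of $\mathcal{B}_{coN}$ to reject; the situation for $x \in \Pi'_{NO}$ is symmetric. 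The runtime along each branch is at most $T'(n)^{1-\delta_0} + \sum_{q} T(|q|)^{1-\varepsilon} \le 2T'(n)^{1-\delta_0}$, so choosing any $\delta < \delta_0$ works for $n$ large.

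The only point requiring care will be confirming simultaneously that our ``guess-and-verify'' procedure (a) admits at least one surviving branch for some response pattern compatible with the three defining properties of $\nconplus$ and (b) \emph{only} admits branches producing such patterns. Both claims follow immediately from properties~1--3 of Definition~\ref{def:nconplus}, so I do not expect any nontrivial obstacle beyond routine bookkeeping; the promise-problem case really is a clean upgrade of the CGIMPS argument, with property~3 stepping in exactly where decision-problem oracles would have returned an unambiguous answer.
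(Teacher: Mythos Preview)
Your proof is correct and follows essentially the same approach as the paper's: both simulate the oracle machine $\mathcal{M}^{\Pi}$, nondeterministically guessing each oracle answer and verifying the guess with $\mathcal{A}_N$ or $\mathcal{A}_{coN}$, invoking property~3 of Definition~\ref{def:nconplus} to ensure at least one verifiable guess exists for every query (including those outside the promise). The only cosmetic difference is that the paper guesses an entire table of query/answer pairs up front (and outputs the default value if a query is missing from the table), whereas you guess and verify each answer on the fly as queries arise; the two implementations are equivalent and yield the same runtime bound.
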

\begin{proof}[Proof]
We proceed as in the proof in \cite{CarmosinoGIMPS16}.
We take $\mathcal{M}^{\Pi}$ to be the deterministic oracle that achieves a fine-grained reduction from $\Pi=(\Pi_{YES},\Pi_{NO})$ to $\Pi'=(\Pi'_{YES},\Pi'_{NO})$, and construct a nondeterministic machine $\mathcal{M}'$ for deciding $\Pi'$ to show that $\Pi'\in \ntime[T'(n)^{1-\delta}]$.
Let $\mathcal{A}_N,\mathcal{A}_{coN}$ be the algorithms for $\Pi$ in time $T$ given by Definition~\ref{def:nconplus}.
As in \cite{CarmosinoGIMPS16}, $\mathcal{M}'$ guesses a table of queries to $\Pi$ along with their answers.
$\mathcal{M}'$ then uses the nondeterministic algorithm $\mathcal{A}_N$ for $\Pi$ to verify the queries with ``YES'' answer, and the nondeterministic algorithm $\mathcal{A}_{coN}$ for $\Pi$ to verify the queries with ``NO'' answer.
We then simulate $\mathcal{M}$, looking up the answers to queries in the table.
If a query does not appear in the table, we output ``NO''.

We now check the correctness of the nondeterministic Turing machine $\mathcal{M}'$.
If the input to $\Pi'$ is in $\Pi_{NO}'$, then all nondeterministic choices result in an output of ``NO'' for the same reason as in \cite{CarmosinoGIMPS16}: unless the query table is incorrectly labels a query in $\Pi_{YES}$ with ``NO'' or labels a $\Pi_{NO}$ query with ``YES'', in which case the nondeterministic verifiers for $\Pi$ cause us to output ``NO'', $\mathcal{M}'$ will always output ``NO'' on an input in $\Pi_{NO}'$ by the definition of $\mathcal{M}'$.

We now check that for all inputs to $\Pi'$ is in $\Pi_{YES}'$, there exist nondeterministic choices such that $\mathcal{M}'$ outputs ``YES''.
We put all queries made by $\mathcal{M}'$ in the table, with their corresponding answers.
For all queries $q$ not in the promise $\Pi_{YES}\cup \Pi_{NO}$, if there exist nondeterministic choices for $\mathcal{A}_{N}$ to output ``YES'', then give query $q$ answer ``YES'', and if there exist nondeterministic choices for $\mathcal{A}_{coN}$ to output ``NO'', then give query $q$ answer ``NO''.
By property 3 of the definition of $\nconplus$, one of these two possibilities must exist for all queries $q$ not in the promise.
Now, there exist nondeterministic choices such that $\mathcal{M}'$ never outputs ``NO'' before simulating $\mathcal{M}$, because all queries in the promise $\Pi_{YES}\cup\Pi_{NO}$ can be correctly verified by properties 1 and 2 of $\nconplus$, and, by construction of the table, all queries not in the promise can also be verified.
Thus the simulation of $\mathcal{M}$ succeeds and we correctly output ``YES''.

As in \cite{CarmosinoGIMPS16}, the simulation takes time at most $(T')^{1-\delta}$, and guessing and verifying the query table takes $O((T')^{1-\delta})$ time.
Therefore $\Pi'\in \ntime[T'(n)^{1-\delta}]$.

Similarly, we can design a nondeterministic machine for inputs in $\Pi_{NO}'$ to conclude that $\Pi'\in\ncon[T'(n)^{1-\delta}]$.
\end{proof}

\begin{proof}[Proof of Lemma~\ref{lem:nunseth}]
  The proof is the same as in \cite{CarmosinoGIMPS16}, except that when we nondeterministically simulate the oracle queries to the promise problem $\Pi$, we simulate as in Lemma~\ref{lem:nseth-c5} rather than as in \cite[Lemma C.5]{CarmosinoGIMPS16}.

  Suppose for contradiction that $(\Pi,T^{1+\delta})$ is SETH-hard so that there is a randomized reduction from \textsc{CNFSAF} with time $2^n$ to $\Pi$ with time $T^{1+\delta}$, and let $\mathcal{M}^\Pi$ be the randomized oracle machine.
  Then there exists some $\varepsilon>0$ such that $\TIME[\mathcal{M}]\le 2^{n(1-\varepsilon)}$ and such that the query lengths satisfy $\sum_{q\in\tilde Q(\mathcal{M},x)}^{} T(|q|)\le 2^{n(1-\varepsilon)}$ (here we switched $\varepsilon$ and $\delta$ from the definition in Section~\ref{sec:prelims}).

  Let $m<n^k$ be the length in bits of a description of a k-SAT formula on $n$ inputs.
  By repeating $\mathcal{M}^\Pi$ $O(m)$ times and taking the majority answer, we can make the error probability less than $2^{-m}$.
  Since there are at most $2^m$ possible inputs, there is one random tape such that $\mathcal{M}$ has no errors when all the oracle queries to $\Pi$ are correct.
  Since $\mathcal{M}$ runs in total time $2^{(1-\varepsilon)n}$, this tape will have length at most $O(m2^{(1-\varepsilon)n})$.
  Then simulate oracle queries nondeterministically, using this nondeterministically chosen random tape, using the same simulation as Lemma~\ref{lem:nseth-c5}.
  This simulation has time complexity $O(m2^{(1-\varepsilon)n})$, giving a nondeterministic circuit with total size $O(m2^{(1-\varepsilon)n})$.
\end{proof}

\subsection{Reducing approximate Diameter to the promise problem}
Let $\rho\in[1,\infty]$, possibly as some function of $m$ and $n$.
For $\alpha\ge 1$, let $f_{\alpha,\rho}$ be the function problem $f$ of giving an $\alpha$-approximation of Diameter on graphs satisfying $W_{max}/W_{min}\le \rho$ (all graphs if $\rho=\infty$).
Let $\Pi_{\alpha,\rho}$ be the gap-problem $\alpha/1$-Diameter on graphs satisfying $W_{max}/W_{min}\le \rho$.
\begin{lemma}
For all $\rho\in[1,\infty]$, all constant $\alpha\ge 1$, all constant $\beta>0$, and all time complexities $T$, there is a fine grained reduction from $(f_{\alpha+\beta,\rho},T)$ to $(\Pi_{\alpha,\rho},T)$.
\label{lem:gap-to-approx-0}
\end{lemma}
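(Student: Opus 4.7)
The plan is to reduce $f_{\alpha+\beta,\rho}$ to $\Pi_{\alpha,\rho}$ by probing the gap oracle at a geometric sequence of diameter thresholds and binary-searching for the transition from YES to NO. The key observation is that, regardless of how the oracle behaves inside the promise gap, a YES answer at threshold $D'$ forces the diameter $D$ of the input graph $G$ to satisfy $D > D'$ (otherwise the oracle would be forced to output NO), and a NO answer forces $D < \alpha D'$ (otherwise it would be forced to output YES). Applied at two nearby thresholds with opposite answers, this pins $D$ down to within a factor of $\alpha$ times the geometric step ratio.

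First, read $W_{max}$ and $W_{min}$ from $G$ in linear time; for a connected graph, $W_{min} \le D \le n W_{max}$. Fix a ratio $r = 1 + \frac{\beta}{2\alpha}$ so that $\alpha r \le \alpha + \beta$, and define thresholds $D_i = r^i D_0$ for $i = 0, 1, \ldots, N$, with $D_0 = W_{min}/(2\alpha)$ and $N = \ceil{\log_r(4 n \alpha \cdot W_{max}/W_{min})}$ chosen so that $D_N \ge 2 n W_{max}$. Since $W_{max}/W_{min} \le \rho$ and both quantities are finite (edge weights appear explicitly in the input, even if $\rho = \infty$), we have $N = O(\log(n \rho)) = O(L)$, where $L$ is the input length.

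Since $D \ge W_{min} > \alpha D_0$, the oracle is forced to answer YES at $D_0$, and since $D \le n W_{max} < D_N$, it is forced to answer NO at $D_N$. Binary-search on $\{0, \ldots, N\}$: initialize $i_{lo} = 0$, $i_{hi} = N$, and at each step query the oracle on $G$ at the middle threshold and update either $i_{lo}$ or $i_{hi}$ to maintain the invariant that the oracle answered YES at $D_{i_{lo}}$ and NO at $D_{i_{hi}}$. After $O(\log N) = O(\log L)$ queries, $i_{hi} = i_{lo} + 1$; letting $i^* = i_{lo}$, the forced-answer observation yields $D_{i^*} < D < \alpha D_{i^*+1} = \alpha r D_{i^*}$, so outputting $\tilde D := \alpha r D_{i^*}$ satisfies $D < \tilde D < \alpha r D \le (\alpha+\beta) D$, a valid $(\alpha+\beta)$-approximation.

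For the fine-grained conditions, $\mathcal{M}$ makes $O(\log L)$ oracle queries on the same graph $G$ together with $O(|G| \log L)$ auxiliary work, so its own running time is $|G|^{1+o(1)} \le T(|G|)^{1-\delta}$ for any polynomially-bounded $T$. The oracle query cost is $\sum_q T(|q|)^{1-\varepsilon} = O(\log L)\cdot T(|G|)^{1-\varepsilon} \le T(|G|)^{1-\delta}$ for any $\delta < \varepsilon$. The main potential obstacle is the oracle's arbitrary behavior inside its gap, but the binary-search invariant depends only on the forced YES at $D_{i_{lo}}$ and forced NO at $D_{i_{hi}}$ endpoints, so inconsistent intermediate answers cannot disrupt correctness.
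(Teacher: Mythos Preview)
Your proof is correct and follows essentially the same binary-search strategy as the paper, though you are more explicit about handling the oracle's freedom inside the promise gap and about the query count. One small point to make explicit: the oracle $\Pi_{\alpha,\rho}$ is the fixed promise problem $\alpha/1$-Diameter, so to ``query at threshold $D_i$'' you must rescale all edge weights by $1/D_i$ before each call (the paper notes this step); this preserves $W_{max}/W_{min}\le\rho$ and keeps each query of size $|G|$, so your cost analysis is unaffected.
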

\begin{proof}
Suppose we are given an oracle to $\alpha/1$-Diameter on graphs with $W_{max}/W_{min}\le \rho$.
If we had oracles to $\alpha D/D$-Diameter for all $D$, we can compute $f$ on graphs with $W_{max}/W_{min}\le \rho$ by simply running a binary search on the same graph to accuracy $\beta$, querying $\alpha D/D$-Diameter for different values of $D$, to obtain an interval $[x,(\alpha+\beta)x]$ containing the correct answer.
However, we can replace all $\alpha D/D$-Diameter queries with $\alpha/1$-Diameter queries by simply re-weighting the graph.
This preserves the weight condition $W_{max}/W_{min}\le \rho$.
This gives a Turing reduction $\mathcal{M}^{\Pi_{\alpha,\rho}}$ running in time $O(\log(1/\beta)) = O_\beta(1)$.
Let $\tilde Q(\mathcal{M},x)$ denote the set of queries made by $\mathcal{M}$ to the oracle on an input $x$ of length $n$. 
There are $O_\beta(1)$ queries, so, for all $\varepsilon>0$ the query lengths obey the following time bound.
\begin{align}
  \sum_{q\in \tilde Q(\mathcal{M},x)}^{} (T(|q|))^{1-\varepsilon} \le O_\beta(T(n)^{1-\varepsilon}).
\end{align}
Hence, this is a fine grained reduction from $(f_{\alpha+\beta,\rho},T)$ to $(\Pi_{\alpha,\rho},T)$.
\end{proof}
We now can prove Lemma~\ref{lem:gap-to-approx}.
\begin{proof}[Proof of Lemma~\ref{lem:gap-to-approx}]
First we consider deterministic reductions. Lemma~\ref{lem:gap-to-approx-0} proves that $(f_{\alpha+\beta,\rho},T)\le_{FGR}(\Pi_{\alpha,\rho}, T)$.
By composition of fine-grained reductions\footnote{Here, we need \cite[Lemma 3.7]{CarmosinoGIMPS16} to hold for promise problems, but the proof is the same so we omit it.}, if $(\textsc{CNFSAT},2^n)\le_{FGR} (f_{\alpha+\beta,\rho},T)$, then we also have $(\textsc{CNFSAT},2^n)\le_{FGR} (\Pi_{\alpha,\rho}, T)$.
Hence, taking the contrapositive, we have that if $(\textsc{CNFSAT},2^n)\not\le_{FGR} (\Pi_{\alpha,\rho}, T)$, then $(\textsc{CNFSAT},2^n) \not \le_{FGR} (f_{\alpha+\beta,\rho},T)$, which is exactly what Lemma~\ref{lem:gap-to-approx} asks to prove.

To show that non-SETH-hardness of the promise problem implies non-SETH-hardness of the approximation problem for \emph{randomized algorithms}, the proof is the same, but we need to show that randomized fine-grained reductions are closed under composition. We prove that these reductions are closed under composition in the scenario that we need.
\begin{lemma}
  Let $(A,T_A)\le_{FGR,r}(B,T_B)$ and $(B,T_B)\le_{FGR,r}(C,T_C)$ where $B$ is an approximate function problem. Then $(A,T_A)\le_{FGR,r}(C,T_C)$.
\label{lem:fgc-comp}
\end{lemma}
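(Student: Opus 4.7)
The plan is to construct the composed randomized fine-grained reduction $\mathcal{M}^C$ from $(A,T_A)$ to $(C,T_C)$ by simulating $\mathcal{M}_1^B$ in the natural way, but implementing each of its oracle calls via an error-amplified invocation of $\mathcal{M}_2^C$. Specifically, whenever the simulated $\mathcal{M}_1^B$ would query $B$ on some string $q$, we instead execute $\mathcal{M}_2^C(q)$ independently $t=\Theta(\log T_A(n))$ times and return the median of the outputs in place of $B(q)$. I will then verify success probability, running time, and the query-length bound.

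The correctness argument crucially exploits the hypothesis that $B$ is an approximate function problem. For every query $q$, the set of acceptable answers for $B(q)$ is an interval $I_q=[D_q,\alpha D_q]$, where $\alpha\ge 1$ is the approximation ratio of $B$. A single run of $\mathcal{M}_2^C(q)$ lies in $I_q$ with probability at least $2/3$, so by Chernoff the median of $t=c\log T_A(n)$ independent runs lies in $I_q$ with probability at least $1-T_A(n)^{-2}$ for a sufficiently large constant $c$. First amplify $\mathcal{M}_1^B$ itself to success probability at least $99/100$ by a constant-round repetition using majority (if $A$ is a decision problem) or median (if $A$ is an approximate function problem, as in all applications in this paper); this blows up the number of simulated oracle calls by only a constant factor, leaving at most $O(T_A(n)^{1-\delta_1})$ of them. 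Union bounding over these calls, every simulated oracle answer is simultaneously acceptable with probability at least $99/100$, and conditioned on this event the amplified $\mathcal{M}_1^B$ outputs an acceptable $A$-answer with probability at least $99/100$, giving overall success probability well above $2/3$.

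For the runtime and query-length bounds, I propagate the parameters through the two hypotheses. Fixing any target $\varepsilon>0$, invoke $(B,T_B)\le_{FGR,r}(C,T_C)$ with parameter $\varepsilon$ to obtain $\delta_2>0$ and a machine $\mathcal{M}_2^C$ whose queries on any input $q$ satisfy $\sum_{q'}(T_C(|q'|))^{1-\varepsilon}\le(T_B(|q|))^{1-\delta_2}$. Then invoke $(A,T_A)\le_{FGR,r}(B,T_B)$ with parameter $\delta_2$ to obtain $\delta_1>0$ and a machine $\mathcal{M}_1^B$ whose queries satisfy $\sum_{q}(T_B(|q|))^{1-\delta_2}\le(T_A(n))^{1-\delta_1}$. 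Composing and accounting for the $O(\log T_A(n))$ amplification factor yields
\[
\sum_{q'\in \tilde Q(\mathcal{M},x)} (T_C(|q'|))^{1-\varepsilon}
\;\le\; O(\log T_A(n)) \cdot (T_A(n))^{1-\delta_1}
\;\le\; (T_A(n))^{1-\delta}
\]
for any $0<\delta<\delta_1$ and all sufficiently large $n$, as required by the definition of $\le_{FGR,r}$. The running time bound on $\mathcal{M}^C$ itself is obtained analogously.

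The main obstacle is that $B$ is an approximate function problem with multiple acceptable outputs per query, so the naive majority-vote amplification does not apply to the intermediate oracle calls. The resolution is the median trick, which works precisely because acceptable $B$-outputs form an interval; this is exactly what the hypothesis ``$B$ is an approximate function problem'' provides. The logarithmic repetition overhead is then absorbed by slightly shrinking the exponent from $1-\delta_1$ to $1-\delta$, using the flexibility of the ``for all $\varepsilon$, there exists $\delta$'' quantifier in the definition of $\le_{FGR,r}$.
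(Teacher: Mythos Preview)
Your proposal is correct and follows essentially the same approach as the paper: amplify the outer reduction $\mathcal{M}_1^B$ by a constant number of repetitions, replace each $B$-oracle call by $\Theta(\log T_A(n))$ independent runs of $\mathcal{M}_2^C$ with the median output, and exploit that acceptable $B$-answers form an interval so the median trick works. Your parameter chaining for the query-length bound spells out explicitly what the paper defers to \cite{CarmosinoGIMPS16}, but the argument is the same.
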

\begin{proof}
The proof is the same as \cite[Lemma C.7]{CarmosinoGIMPS16}, except that we simulate the oracle machines multiple times to amplify the failure probabilities.
Specifically, let $\mathcal{M}_{AB}^B$ be the (probabilistic) machine that achieves a fine-grained reduction from $(A,T_A)$ to $(B,T_B)$, and let $\mathcal{M}_{BC}^C$ be the machine that achieves a fine-grained reduction from $(B,T_B)$ to $(C,T_C)$.
We construct a machine $\mathcal{M}_{AC}^C$ that achieves a fine-grained reduction from $(A,T_A)$ to $(C,T_C)$.

Let $\mathcal{M}_{AC}^C$ on an input to $A$ of length $n_A$ simulate $\mathcal{M}_{AB}^B$ \emph{3 times}, and for each oracle call to $B$, $\mathcal{M}_{AC}^C$ simulates $\mathcal{M}_{BC}^C$ \emph{$O(\log T_A(n_A))$ times} and takes the \emph{median} output.
Problem $B$ has a correct output if it is in some range $[OPT,\alpha\cdot OPT]$, and by definition of $\mathcal{M}_{BC}^C$ at least 2/3 of the outputs to $\mathcal{M}_{BC}^C$ lie in that range $[OPT,\alpha\cdot OPT]$ in expectation.
Thus, with probability $\frac{1}{100}T_A(n_A)^{-2}$ (if the number of repeats of $\mathcal{M}_{BC}^C$ is enough), the median output of $\mathcal{M}_{BC}^C$ is in the correct range $[OPT,\alpha\cdot OPT]$.
Since there are at most $T_A(n_A)$ oracle calls to $B$, with probability at least $1 - \frac{3}{100}T_A(n_A)^{-1} > 0.97$, all of the oracle calls to $B$ have a correct output.
Furthermore, given that the oracle calls to $B$ have a correct output, one call to $\mathcal{M}_{AB}^B$ gives the incorrect output to $A$ with probability at most $\frac{1}{3}$, so the probability that at least two of the three calls to $\mathcal{M}_{AB}^B$ give the incorrect output is at most $(1/3)^3+3\cdot(2/3)\cdot(1/3)^2 < 0.26$.
Thus, the probability that $\mathcal{M}_{AC}^C$ gives an incorrect output is at most $0.26 + 0.03 < \frac{1}{3}$, as desired.

To prove that $\mathcal{M}_{AC}^C$ satisfies the required time and query length bounds so that it indeed achieves a fine grained reduction from $(A,T_A)$ to $(C,T_C)$, we follow the same proof as in \cite{CarmosinoGIMPS16}.
The only difference is that the runtime of $\mathcal{M}_{AC}^C$ and the number of queries is multiplied by $O(\log T_A(n_A))$, which has a negligible effect on the overall analysis.
\end{proof}
Applying Lemma~\ref{lem:fgc-comp} completes the proof of Lemma~\ref{lem:gap-to-approx} for randomized reductions.
\end{proof}

\end{document}